\patchcmd{\thebibliography}{\section*{\refname}}{}{}{}
\newcommand\tab[1][1cm]{\hspace*{#1}}
\newcommand{\setvar}[1]{\mathcal{#1}}
\newcommand{\funvar}[1]{\mathbf{#1}}
\newcommand{\citeay}[1]{\citeauthor{#1}~(\citeyear{#1})}
\newcommand{\trail}[4]{{\mathsf T}_{#1,#2,#3,#4}} 
\newcommand{\ent}[1]{\text{Enter}(#1)} 
\newcommand{\ext}[1]{\text{Exit}(#1)} 
\newcommand{\constant}[1]{\Theta(#1)}
\newcommand{\const}{\Theta}
\newcommand{\Dist}{\text{Dist}}
\newcommand{\gdist}{\text{Dist}^*}
\newcommand{\bins}{l} 
\newcommand{\mechset}{\setvar{M}}
\newcommand{\ml}{\mathcal L}
\newcommand{\mc}{\mathcal C}
\newcommand{\mr}{\mathcal R}
\newcommand{\mt}{\mathsf T}
\newcommand{\sign}{\text{Sign}}
\newcommand{\piv}{\text{Piv}}
\newcommand{\ed}{\text{End}}
\newcommand{\pdf}{\text{PDF}}
\newcommand{\Sim}{\mathbf{Sim}}
\newcommand{\mech}{\mathbf{M}}
\newcommand{\Hist}{\funvar{Hist}}
\newcommand{\Supp}[1]{\mathbf{Supp}(#1)}
\newcommand{\setS}{\setvar{S}}
\newcommand{\universe}{\setvar{U}}
\newcommand{\rangeM}{\setvar{R}}
\newcommand{\altDDP}{DDP}
\newcommand{\exactDDP}{Exact Distributional Differential Privacy}
\newcommand{\eDDP}{eDDP}
\newcommand{\iidist}[2]{X_{#1, #2}}
\newcommand{\Natural}{\mathbb{N}}
\newcommand{\Real}{\mathbb{R}}
\newcommand{\LinOrd}[1]{\mathcal{L}(#1)}
\newcommand{\f}{\funvar{f}}
\newcommand{\util}{\funvar{u}}
\newtheorem{myclaim}{Claim}
\title{How Private Are Commonly-Used Voting Rules?}
\author{
 Ao Liu$^{1}$ \and
Yun Lu$^{2}$ \and
Lirong Xia$^{1}$ \and
Vassilis Zikas$^{3}$}
\institute{
	Rensselaer Polytechnic Institute, {\tt liua6@rpi.edu}, {\tt xial@cs.rpi.edu} \and
	University of Edinburgh, {\tt y.lu-59@sms.ed.ac.uk} \and
	Purdue University, {\tt vzikas@purdue.edu}
}
\begin{document}
\maketitle

\definecolor{red}{rgb}{0,0,0}
\definecolor{blue}{rgb}{0,0,0}

\begin{abstract}
{\em Differential privacy} has been widely applied to provide privacy guarantees by adding random noise to the function output. However, it inevitably fails in many high-stakes voting scenarios, where  voting rules are required to be deterministic. 
In this work, we present the first framework for answering the question:
{\em``How private are commonly-used voting rules?"} 
Our answers are two-fold. First, we show that deterministic voting rules provide sufficient privacy in the sense of 
{\em distributional differential privacy (DDP)}. We show that assuming the adversarial observer has uncertainty about individual votes, even publishing the histogram of votes 
achieves good DDP.
Second, we introduce the notion of {\em exact} privacy to compare the privacy preserved in various commonly-studied voting rules, and obtain dichotomy theorems of exact DDP within a large subset of voting rules called {\em generalized scoring rules}. 

\end{abstract}

\section{INTRODUCTION}\label{section:introduction}

Differential privacy (DP) has gained much public attention recently, partly due to  its use in the United States 2020 Census. Improving upon ad-hoc privacy techniques that were broken in the previous census~\cite{garfinkel2018understanding}, formal privacy definition like DP are much more suitable for controlling the leakage of sensitive data.

Yet,  sensitive data is still published today without necessarily understanding the privacy leakage it incurs. In particular, voting data has been surprisingly accessible. In the US, histograms of votes are revealed per county, and voting and registration tables are released~\cite{uscensusbureau_2019}, which include fields like sex, race, age, location, and marital status. 
This abundance of information has enabled politicians to buy voter profiles from data mining companies to manipulate public opinion~\cite{verini_2007,bradshaw2018challenging}. 

Unfortunately, it is not easy to achieve (differential) privacy for voting. 
It is insufficient to   protect voter registration tables with proven privacy techniques; releasing the election outcome can also be a cause of information leakage. To see how an individual's vote can be inferred by observing the winner of the election, we consider the following example. 
Suppose Alice cast a vote in an election, and then the winner is announced. Further suppose that an adversary can accurately estimate other votes from questionnaires or by machine learning from the other voters' social media, and it turns out these other votes ended up with a tie among the candidates.  In this case, the adversary can distinguish Alice's vote even if he knows nothing about Alice, 
since Alice must have voted for the winner as the tie-breaker.

The strict definition of differential privacy means the mere {\em possibility} of the above scenario is a privacy violation. Moreover, 
ties do occur quite often in real life elections. For example, 9.2\% of STV elections on Preflib election data~\cite{Mattei13:Preflib} are tied~\cite{Wang2019:Practical}. 
Even if we consider another formal privacy definition that accepts the uncertainly stemming from machine learning methods or low likelihood of ties as helpful in disguising votes, it is unclear how to quantitatively measure the effect of such uncertainty, and how (or whether) privacy differs for different voting rules.

Motivated by the privacy concern in voting,  we focus on the following key question in this paper. 

\vspace{2mm}
\hfill{\em How private are commonly-used voting rules?\hfill }
\vspace{2mm}

The importance of answering this question is both practical and theoretical. On the practical side, minimizing the amount of information leakage from voting rules helps protect against censorship, coercion, and vote buying.  
On the theoretical side, privacy provides a new angle to comparing voting rules and designing new ones. 

A first attempt would be to employ \emph{differential privacy (DP)}~\cite{Dwork2006}, measure of privacy widely-accepted and widely-applied in the cryptographic community. Mathematically, a voting rule $\mech$ for $n\in\mathbb N$ voters is a mapping $\mech:\universe^n\rightarrow \rangeM$, where $\universe$ is the set of all possible votes; $\rangeM$ is the set of all possible outcomes of voting, e.g.~winners or histograms of votes. 
$\mech$ is $(\epsilon,\delta)-${\emph{differentially private}} if for any pair of {\em preference profiles} $\vec X\in \universe^n$ and $\vec X'\in \universe^n$ that only differ on one vote, and any subset of outcomes $\setS\subseteq \rangeM$, the following inequality holds:
{\footnotesize
\begin{equation}\label{equ:dp1}
    \Pr\left[\mech(\vec X) \in \setS\right] \leq e^\epsilon \Pr\left[\mech(\vec X') \in \setS\right]+\delta.
\end{equation}}
Smaller $\epsilon$, $\delta$ are desirable as it means the outcome of $\mech$ is not affected much by one vote, and thus reveals little about an individual voter. Note in general $\mech$ must be randomized to satisfy Inequality~(\ref{equ:dp1}); indeed~\cite{Shang2014,Lee2015,Hay2017} achieved DP via randomized voting. 

Yet most, if not all, voting rules used in high-stakes political elections are deterministic, since randomized voting rules suffer from 
difficulties in verifying implementation correctness, e.g. the controversy in the 2016 Democratic primary election in Iowa~\cite{iowa2016}. 
Unfortunately, the randomness in Inequality (\ref{equ:dp1}) comes from the voting rule itself, so deterministic rules cannot achieve DP except with the trivial parameter of $\delta \geq 1$, which always holds (see Example~\ref{ex:dpfails} for more details).

\subsection{OUR CONTRIBUTIONS}\label{sec:ourcontribution}
To overcome the critical limitation of DP in high-stakes voting scenarios, we study the privacy of deterministic voting rules using 
{\em distributional differential privacy (DDP)}~\cite{Bassily2013}, a well-accepted notion of privacy that works for deterministic functions. DDP measures the amount of individual information leakage, assuming the adversary only has uncertain information about voter preferences, for example when using a machine learning algorithm. 
Our result on the DDP of commonly-used voting rules carries the following encouraging message:

\vspace{2mm}
\noindent {\bf Main Message 1: Many commonly-used voting rules achieve good DDP in natural settings. }
\vspace{2mm}

More precisely, we focus on a natural DDP setting where the adversary's information is represented by a set of i.i.d.~distribution's over preference profiles, denoted by $\Delta\subseteq \Pi(\universe)$, where $\Pi(\universe)$ is the set of all probability distributions over $\universe$ {\em with full support}. A voting rule $\mech$'s DDP is now measured by three parameters $(\epsilon,\delta,\Delta)$. 
A deterministic function is DDP (Definition~\ref{def:our-DDP}) if it satisfies an inequality similar to Inequality (\ref{equ:dp1}), but now the randomness is replenished by the adversary's uncertainty about the profile $\vec X$, represented by $\Delta$. Like DP, smaller $\epsilon$ and $\delta$ in DDP are more desirable.

With DDP, we can quantitatively measure the privacy of the histogram rule $\Hist$, which outputs the frequency of each type of vote in the preference profile, in the following Theorem~\ref{thm:epsdelta-histogram-c}. As an immediate consequence, many common voting rules also achieve good privacy.
\noindent\begin{paragraph}
{\bf Theorem~\ref{thm:epsdelta-histogram-c} (DDP for $\Hist$).} {\em Given any $\universe=\{x_1,\ldots,x_{\bins}\}$ and $\Delta\subseteq \Pi(\universe)$ with $|\Delta|<\infty$, let $p_{\min} = \min_{\pi\in\Delta, i\le \bins} (\pi(x_i))$. For any $n\in\Natural$ and any $\epsilon \geq 2 \ln\left(1+\frac{1}{p_{\min}n}\right)$, $\Hist$ for $n$ voters is $(\epsilon,\delta,\Delta)$-DDP where $\delta =\exp(-\Omega(np_{\min}[\min(2\ln(2), \epsilon)]^{2}))$.}

\end{paragraph}

{Theorem~\ref{thm:epsdelta-histogram-c}} states that  $\Hist$ is private with good parameters, as even a small $\epsilon$ results in $\delta$ that is considered {\em negligible} in cryptography literature. The winner of many commonly-used voting rules  depends only on the outcome of $\Hist$, and thus contain (often strictly) less information than $\Hist$. Thus, they achieve {\em at least as good} privacy w.r.t.~DDP as simply outputting the histogram.

Next, we highlight that DDP (as well as DP and its variants) parameters only describe loose bounds on privacy---by definition, if a voting rule satisfies $(\epsilon, \delta, \Delta)$-DDP, it also satisfies $(\epsilon + 0.1, \delta + 0.1, \Delta)$-DDP. To compare the privacy-preserving capability of voting rules, we introduce 
the notion of {\em exact distributional differential privacy (eDDP)}, whose parameters describe tight bounds on $\epsilon$ and $\delta$.  
We focus on the $\epsilon = 0$ case as a first step to compare various voting rules with their eDDP in the $\delta$ parameter.  
Our results on the eDDP of commonly-used voting rules carry the following message: 

\vspace{2mm}
\noindent {\bf Main Message 2: For many combinations of commonly-used voting rules and $\Delta$, the $(0,\delta,\Delta)$-eDDP exhibits a dichotomy between $\delta=\Theta(\sqrt{1/n})$ and $\delta=\exp(-\Omega(n))$. }
\vspace{2mm}

More precisely, we prove the following dichotomy theorem for two candidates $\{a,b\}$ and {\em $\alpha$-biased majority rules} with $\alpha\in (0,1)$, which chooses  $a$ as the winner iff  at least $\alpha n$ out of $n$ votes prefer $a$.

\noindent\begin{paragraph}
{\bf Theorem~\ref{thm:nonunif-winner-2} (Dichotomy in Exact DDP for $\alpha$-Majority Rules over Two Candidates, Informal)}
{\em Fix two candidates $\{a, b\}$ and $\Delta\subseteq \Pi(\{a, b\})$ with $|\Delta|<\infty$. For any $\alpha\in(0,1)$, the $\alpha$-biased majority rule is $(0,\delta,\Delta)$-eDDP for all $n$, where $\delta$ is either $\Theta(\sqrt{1/n})$, when $\Delta$ contains a distribution $\pi$ with $\pi(a) = \alpha$, or exponentially small otherwise. 
}

\end{paragraph}

For more than two candidates, we prove the following dichotomy theorem for a large family of voting rules and $\Delta\subseteq \Pi(\universe)$. 

\begin{paragraph}{\bf Theorem~\ref{thm:gsr} (Dichotomy in Exact DDP of A Large Class of Voting Rules and $\Delta$, Informal)} 
{\em For any fixed number of candidates, and any voting rule in a large family, the $\left(0, \delta,\Delta\right)$-eDDP is $\delta = \Theta(\sqrt{1/n})$, when $\Delta$ contains the uniform distribution, or $\delta =\exp(-\Omega(n))$, when $\Delta$ is finite and does not contain any unstable distributions.
}
\end{paragraph}

Intuitively, a distribution $\pi$ is {\em unstable} under a voting rule $\mech$ if adding small perturbations 
can cause a different candidate to win (Definition~\ref{dfn:unstable}). Instead of conducting case-by-case studies of eDDP for commonly-used voting rules, we prove Theorem~\ref{thm:gsr} for a large family of voting rules called {\em generalized scoring rules}~\cite{Xia2008} that further satisfy {\em monotonicity, local stability}, and {\em canceling-out}. We show that many commonly-used voting rules satisfy these conditions (Section~\ref{sec:GSRmore}). We also compute and compare the concrete $\delta$ values for small elections 
 (Table~\ref{GSR-concrete-delta}, Section~\ref{section:empirical-results-main} and Appendix~\ref{section:empirical-results}).

\begin{table*}[htp]
    \centering
\resizebox{\textwidth}{!}{
	\begin{tabular}{ |c|c|c|c|c|c|} 
		\hline
			$\mech$	& Borda & STV & Maximin & Plurality & 2-approval \\
		\hline		
			 $\delta(n)$	& $
			 \dfrac{1}{\sqrt{1.347n + 0.5263}}$  
				& 
				$
				\dfrac{1}{\sqrt{1.495n + 0.02669}}$ 
				&
				 $
				 \dfrac{1}{\sqrt{1.553n + 4.433}}$
				&
				$
				\dfrac{1}{\sqrt{1.717n - 0.09225}}$ 
				&
				$
				\dfrac{1}{\sqrt{1.786n + 0.3536}}$ \\ 
		\hline
	\end{tabular}
}
    \vspace{1mm}
	\caption{{\small $\delta$ values in $(0, \delta, \Delta)$-eDDP for some commonly-used voting rules under the i.i.d.~uniform distribution, $m=3$ and $n\leq50$. From left to right, we rank rules from least to most private.}}
	\label{GSR-concrete-delta}
\end{table*}

\subsection{RELATED WORK}

Differential privacy~\cite{Dwork2006} has been used to add privacy to rank aggregation: \citeay{Shang2014} applied Gaussian noise to the histogram of linear orders, while~\citeay{Hay2017} used Laplace and Exponential mechanisms applied to specific voting rules. \citeay{Lee2015} also developed a method of random selection of votes to achieve differential privacy. 
One interesting aspect of adding noise to the output that was observed in~\cite{Birrell2011,Lee2015} is that it enables an approximate strategy-proofness; the idea here is that the added noise dilutes the effect of any individual deviation, thereby making strategies which would slightly perturb the outcome irrelevant. We remark that if one wishes to achieve DP for a large number of voting rules, well-known DP mechanisms (like adding Laplace noise~\cite{Dwork2006a}) can be applied to rules in GSR in a straightforward way, by adding noise to each component of the score vector and outputting the winner based on the noised score vector. Our work is different because we focus on exact privacy of deterministic voting rules.

 In our work, we compare deterministic functions by their exact privacy. In differential privacy literature where functions must be randomized, their accuracy, or utility, is used to compare them. A number of works have defined utility as a metric which describes the comparative desirability of $\epsilon$-DP mechanisms. In~\cite{McSherry2007}, utility is an arbitrary user-defined function, used in the exponential mechanism. The works of~\cite{Blum2008,Hardt2010,Bassily2015} define utility in terms of error, where the closer (by some metric) the output of the function, which uses this mechanism to apply noise, is from the desired (deterministic) query's, the higher the utility; the definition of~\cite{Ghosh2009} in addition allows the user to define as a parameter, the prior distribution on the query output.
 In contrast, our results imply that in the context of distributional differential privacy, voting rules achieve a well-accepted notion of privacy while preserving perfect accuracy, or utility.
 
\subsection{DISCUSSIONS}
While DP 
has been widely applied to measure privacy and has been applied to voting, as we discussed in the Introduction, it fails for deterministic functions such as voting rules in high-stakes  elections. This critical limitation motivates our study. To the best of our knowledge, we are the first to illustrate how to measure privacy in high-stakes voting using (e)DDP in a natural setting. We will see that the problem, though challenging, can be solved by our novel {\em trails} technique. Below we explicitly discuss our conceptual and technical contributions and closely related works. More comprehensive discussions of related work can be found in Appendix~\ref{appendix:related_literature}.

\begin{paragraph}{\bf Conceptual contributions.} Our first conceptual contribution is the application of DDP to deterministic voting rules. As discussed earlier, while previous works add random noise to achieve DP, to the best of our knowledge, no previous studies were done for deterministic voting rules. We note that the {\em truncated} histogram result of~\cite{Bassily2013} does not suffice, since in general, votes are not removed in an election.
Moreover, we prove our results in a simpler definition than DDP; the equivalence of this definition and DDP is proven in Appendix~\ref{sec:equivalence}. Our second conceptual contribution is the introduction of {\em exact DDP}, addressing the issue that parameters of DDP (and other relaxations of DP~\cite{Bassily2013,Groce2014,Kasiviswanathan2008,Hall2012,Duan2009,Bhaskar2011}) describe only upper bounds on privacy. We are not aware of other works that explicitly propose to characterize tight bounds on the privacy parameters $\epsilon$ and $\delta$.
\end{paragraph}

\begin{paragraph}{\bf Technical contributions.} Our first theorem (Theorem~\ref{thm:epsdelta-histogram-c}) is quite positive, showing the privacy of outputting histograms. Theorem~\ref{thm:nonunif-winner-2} and~\ref{thm:gsr} characterize eDDP in terms of $\delta$ values by fixing $\epsilon=0$. We do so for the two reasons: (1) it is the common convention to compute $\delta$ based on a fixed $\epsilon$ for DP or DDP; (2) $\epsilon=0$ is the most informative choice, since Theorem~\ref{thm:epsdelta-histogram-c} shows that even for small non-zero $\epsilon$, any difference we can observe in the $\delta$ of two voting rules is exponentially small---considered negligible in cryptography literature. 
While our theorems appear similar and related to the dichotomy theorems on the probability of ties in voting~\cite{Xia2008,Xia2015}, the definition and mathematical analysis are quite different, and previous techniques do not work for all cases; see more discussions in the proof sketch for Theorem~\ref{thm:gsr}. 
 To address the challenge, we developed the {\em trails} technique, which significantly simplifies  calculations. 
\end{paragraph}

\begin{paragraph}{\bf Generality of our setting.}
As the first work towards answering our key question, we assume the adversary's beliefs are modeled by a set of  i.i.d.~distributions over the votes. A special case is the i.i.d.~uniform distribution, which is known as the {\em impartial culture} assumption in social choice~\cite{Georges-Theodule1952:Les-theories}. Extending to general $(\epsilon$, $\delta)$, and non-i.i.d.~distributions is an important and challenging future direction.
Lastly, though our definitions and results are presented in the context of voting for the sake of presentation, they can easily be extended to general applications.
\end{paragraph}

\section{PRELIMINARIES}\label{section:preliminaries}

Let $\mathcal C=\{c_1,\ldots, c_m\}$ be a set of $m\ge 2$ candidates, and $\mathcal{L}({\mathcal C})$ denote the set of all {\em linear orders} over $\mathcal C$: that is, the set of all antisymmetric, transitive, and total binary relations.  Let $\universe$ denote the set of all possible votes. Given $n\in \Natural$, we let $\vec X=(X_1,\ldots, X_n)\in \universe^n$ denote a collection of $n$ votes called a {\em preference profile}. Let $\rangeM$ denote the set of outcomes of voting. A (deterministic) voting rule for $n$ voters is a mapping $\mech: \universe^{n}\rightarrow \rangeM$.

For example, in the {\em plurality} rule, $\universe = \rangeM=\mathcal C$; each voter votes for one favorite candidate, and the winner is the candidate with the most votes. In the {\em Borda} rule, $\universe = {\mathcal L}( \mathcal C)$  and $\rangeM = \mathcal C$; each voter cast a linear order $X$ over $\mathcal C$, denoted by $c_{i_1}\succ c_{i_2}\succ \cdots \succ c_{i_m}$, where $a\succ b$ means that $a$ is preferred over $b$; each candidate $c$ gets $m-i$ points in each vote, where $i$ is the rank of $c$ in the vote; the winner is the candidate with the highest total points. A tie-breaking mechanism is used when there are ties in plurality and Borda.

\begin{definition}[The histogram rule] Let $\universe = \{x_{1}, \cdots, x_{\bins}\}$. For any $n\in \mathbb \Natural$, the {\em histogram function}, denoted by $\Hist:\universe^{n}\rightarrow \Natural^{\bins}$, takes as input a preference profile $\vec X = (X_1,\ldots,X_n)\in \universe^{n}$ and outputs a $\bins$-dimensional integer vector whose $i$th component is $|\{j\colon X_j = x_i, j\in \{1,\cdots, n\}\}|$. 
\end{definition}
For example, when applied to the setting of the plurality rule, $\bins=m$ and  $\Hist$ outputs the number of votes each candidate receives. When applied to the setting of the Borda rule, $\bins=m!$ and  $\Hist$ outputs the number of occurrences of each linear order.

\section{DISTRIBUTIONAL DIFFERENTIAL PRIVACY FOR VOTING}
As we discussed, DP is not a suitable notion to analyze nontrivial deterministic voting rules as shown in the following example, which motivates our use of distributional differential privacy (DDP)~\cite{Bassily2013}. 

\begin{example}[DP fails for deterministic voting rules]\label{ex:dpfails} Consider the plurality rule for two candidates $\{a,b\}$ and three voters ($n=3$). We have $\universe =\rangeM =\{a,b\}$. In Inequality~(\ref{equ:dp1}), let $\vec X=(a,a,b)$, $\vec X' = (b,a,b)$, and $\setS=\{a\}$. Then, (\ref{equ:dp1}) becomes $1\le e^\epsilon\times 0+\delta$, which means that $\delta \ge 1$.
\end{example}

 At a high level, the DDP of a (deterministic or randomized) function is characterized by three parameters $(\epsilon, \delta, \Delta)$, where $\epsilon$ and $\delta$ are privacy parameters similar to DP, and $\Delta$ is a set describing the adversary's knowledge about the preference profile.  We consider  adversaries that can be modeled as $\Delta \subseteq \Pi(\universe)$, which encodes each of the adversary's possible uncertainties as a distribution where each vote is i.i.d.. 

\begin{example}[Adversary's information $\Delta$]\label{ex:Delta} Suppose $\universe =\rangeM={\mathcal C} =\{a,b\}$, and 
the $n$ votes could be i.i.d.~generated from either $\pi_{0.2}$ or $\pi_{0.7}$. 
Here, for any $\gamma\in [0,1]$, $\pi_\gamma (a) = \gamma$.  Then, the adversary's information is  represented by $\Delta = \{\pi_{0.2}, \pi_{0.7}\}$. Say we  prove that some voting rule is $(\epsilon = 0.5, \delta = 0.1, \Delta)$-DDP for the above $\Delta$. Intuitively, this means that the voting rule has privacy  $\epsilon = 0.5, \delta = 0.1$, given the adversary's knowledge can be modeled by any distribution in $\Delta$. We remark that this privacy holds {\em without} the need to add noise to the outcome of the election, contrasting with DP. 
\end{example}

To simplify presentation, below we will introduce the definition of DDP studied in this paper. In our setting of this paper, our simpler definition is equivalent to the original DDP. More details can be found in  Appendix~\ref{sec:equivalence}.

\begin{definition}[DDP studied in this paper]\label{def:our-DDP} For any $\Delta\subseteq \Pi(\universe)$, $\epsilon>0$, and $\delta>0$, a voting rule $\mech:\universe^{n}\rightarrow \rangeM$ is $(\epsilon, \delta, \Delta)$-DDP if for every $\pi\in \Delta$,   $i\le n$,  $x, x'\in \universe$, and  $\setS \subseteq \rangeM$, the following inequality holds.
{\footnotesize
\begin{equation}\label{dfn:ddp}
\begin{split}
&\Pr\nolimits_{\vec X\sim \pi}(\mech(\vec X)\in \setS | X_i = x) \\ \leq\;& e^{\epsilon} \Pr\nolimits_{\vec X\sim \pi}(\mech(\vec X)\in \setS |  X_i = x') + \delta,
\end{split}
\end{equation}}
where $\vec X =(X_1,\ldots, X_n) $ is a preference profile where each vote is i.i.d.~generated from $\pi$.
\end{definition}
For deterministic $\mech$, the randomness in Inequality (\ref{dfn:ddp}) comes from the adversary's incomplete information, captured by $\Delta$. We show that $\Hist$ satisfies good DDP.

\begin{theorem}[DDP of $\Hist$, proof in Appendix~\ref{appendix:thm1}]\label{thm:epsdelta-histogram-c} Given any $\universe=\{x_1,\ldots,x_{\bins}\}$ and $\Delta\subseteq \Pi(\universe)$  with $|\Delta|<\infty$, let $p_{\min} = \min_{\pi\in\Delta, i\le \bins} (\pi(x_i))$. For any $n\in\Natural$ and any $\epsilon \geq 2 \ln(1+\frac{1}{p_{\min}n})$, $\Hist$ for $n$ voters is $(\epsilon,\delta,\Delta)$-DDP where $\delta =\exp(-\Omega(np_{\min}[\min(2\ln(2), \epsilon)]^{2}))$.
\end{theorem}

As corollary, these privacy parameters of $\Hist$ automatically apply to all functions that only depend on the output of $\Hist$, i.e.~most voting rules, or outputting the histogram in addition to the winner as in US presidential elections. This follows immediately from a property of DDP called {\em immunity to post processing} (see Lemma~\ref{lem:foM} in Appendix~\ref{appendix:thm1}). We note the result is similar to that of~\cite{Bassily2013}, but they assume  lower-frequency items in the histogram are truncated (which is not the case in general when  election results are posted) and describe a less precise $\delta$.

\section{EXACT PRIVACY OF VOTING RULES: TWO-CANDIDATE CASE}\label{section:exactDDP}

In this section, we first present the definition of {\em exact distributional differential privacy} (exact DDP or eDDP), then characterize $(0,\delta,\Delta)$-eDDP for two candidates under any $\alpha$-biased majority rule. The proof of this theorem will serve as a toy application of our {\em trails technique}, useful for proving our main result Theorem~\ref{thm:gsr}. 

Intuitively, a function has \emph{exact privacy} with parameters $\epsilon$ and $\delta$ if the function cannot satisfy the privacy definition with strictly better parameters. We remark that this definition can easily be altered to define \emph{$(\epsilon, \delta)$-exact differential privacy (eDP)} by omitting $\Delta$. 

\begin{definition}[\exactDDP~(\eDDP)]\label{def:exactDDP}
	A voting rule $\mech$ is \emph{$(\epsilon, \delta, \Delta)$-\exactDDP~(\eDDP)} if it is $(\epsilon, \delta, \Delta)$-DDP~ and there does not exist $(\epsilon'\leq \epsilon, \delta' < \delta)$ nor $(\epsilon' < \epsilon, \delta' \leq \delta)$ such that $\mech$ is $(\epsilon', \delta', \Delta)$-DDP.
\end{definition}

The {\em $\alpha$-biased majority rule}, denoted by $\mech_\alpha$, over two candidates $(a, b)$ outputs $a$ as the winner if at least $\alpha$ fraction of votes prefer $a$ over $b$. An example of this type of voting rule is {\em supermajority}, used in government decisions around the world.

\begin{theorem}[Exact DDP for Majority Rules, full proof in Appendix~\ref{sec:fullproof_thm2}]\label{thm:nonunif-winner-2}
Fix two candidates $\{a, b\}$ and $\Delta\subseteq \Pi(\{a, b\})$ with $|\Delta|<\infty$. 
For any $\alpha\in(0,1)$, the $\alpha$-biased majority rule is $(0,\delta,\Delta)$-eDDP for all $n$, where 
{\small $$\delta = \max_{p = \pi(a)\colon \pi\in\Delta}\Theta\left(\sqrt{\frac{1}{n}}\left[\left(\frac{p}{\alpha}\right)^{\alpha}\left(\frac{1-p}{1-\alpha}\right)^{1-\alpha} \right]^n\right).$$ }
In particular, $\delta = \Theta\left(\sqrt{1/n}\right)$ if there exists $\pi\in\Delta$with $\pi(a)=\alpha$; otherwise $\delta = \exp(-\Omega(n))$.
\end{theorem}

In the following subsections, we will present our {\em trails} technique for analyzing DDP in voting, followed by a proof sketch of Theorem~\ref{thm:nonunif-winner-2} using the trails technique.


\subsection{OUR TOOL TO ANALYZE PRIVACY: TRAILS TECHNIQUE} 
Let us describe the trails technique using a simple, toy example: suppose there are two candidates $\{a, b\}$, and $n =5$ votes. Let $\mech$ be the majority rule where ties are broken in favor of $a$, i.e.~$\alpha=0.5$. 
We want to compute $(0,\delta,\Delta)$-eDDP of $\mech$ for any $\Delta\subseteq \Pi(\{a, b\})$.
In light of Definitions~\ref{def:our-DDP} and~\ref{def:exactDDP}, we have:
{\small
\begin{equation}\label{equ:delta}
\begin{split}
\delta =&\max_{\setS,x,x',i,\pi \in \Delta}\left[\Pr\nolimits_{\vec X\sim \pi}(\mech(\vec X)\in \setS | X_{i} = x)\right. \\
&\left.-\Pr\nolimits_{\vec X\sim \pi}(\mech(\vec X)\in \setS | X_{i} = x')\right].
\end{split}
\end{equation}}
 
Now, the majority rule is {\em anonymous}, that is, the identity of the voter is irrelevant and it chooses the winner only based on the histogram of votes. We can thus write $\mech = \funvar{f} \circ \Hist$, where $t=(t_a,t_b)$ and $\funvar{f}(t)$ outputs $a$ if $t_a \geq t_b$ and outputs $b$ otherwise. Then, Equation (\ref{equ:delta}) can be rewritten with probabilities over histograms, which is easier to compute (below, $\vec X\sim \pi$ is implicit).
 
{\footnotesize
\begin{equation}\label{eqn:delta-hist}
    \begin{split}
    \delta &=\max_{\setS,x,x',i,\pi \in \Delta}\left[\Pr(\funvar{f}(\Hist(\vec X))\in \setS | X_{i} = x) \right.\\
    &\;\;\;\;\;\;\;\;\;\;\;\;\;\;\;\;\;\;\;\;\;\;\;\;\left.-\Pr(\funvar{f}(\Hist(\vec X))\in \setS | X_{i} = x')\right]\\
    &= \max_{\setS,x,x',i,\pi \in \Delta}\left[\sum_{t \colon \funvar{f}(t) \in \setS} \Pr(\Hist(\vec X) = t| X_{i} = x) \right.\\
    & \;\;\;\;\;\;\;\;\;\;\;\;\;\;\;\;\;\;\;-\left. \sum_{t \colon \funvar{f}(t) \in \setS}\Pr(\Hist(\vec X) = t | X_{i} = x')\right].
    \end{split}
\end{equation}
}

For example, if $\setS = \{a\}$, then $\mt \equiv \{t \colon \funvar{f}(t) \in \setS\} = \{(5, 0), (4, 1), (3, 2)\}$ is an example of what we call a {\em trail}. Intuitively, a trail $\mt$ is a set of histograms {\em consecutive} in the sense that, starting from some $t$, we can list exactly the elements of $\mt$ by iteratively subtracting 1 from and adding 1 to two components of $t$, respectively.
 We see that $\mt$ can be listed in such a way, starting from entry $\ent{\mt} = (5, 0)$ and ending at exit $\ext{\mt} = (3, 2)$, by interatively subtracting from the first component and adding to the second component of $(5, 0)$ (we say the {\em direction} of $\mt$ is $(1, 2)$). See Figure~\ref{fig:trail-example-simple}.
 
\begin{figure}[ht]
   \centering
   \includegraphics[width=0.47\textwidth]{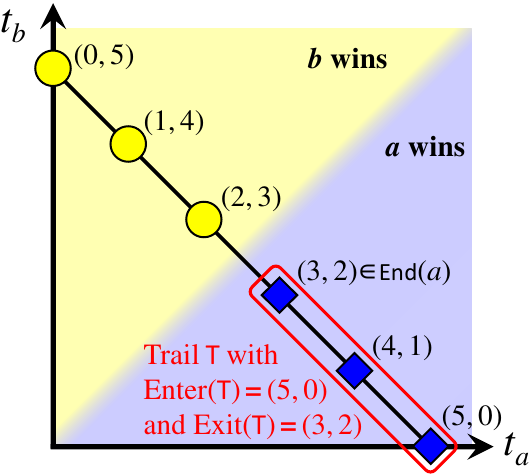}
   \caption{ \small{A  trail for two candidates. {A graph of number of votes for candidate $a$ ($=t_a$) versus votes for candidate $b$ ($=t_b$). Each point in the line is a histogram where the total number of votes is $n = 5$. The set $\{(5, 0), (4, 1), (3, 2)\}$ forms a trail. We denote by $\ed(a)$ (used in the proof of Theorem~\ref{thm:gsr}) the set of histograms which are exits of trails where $a$ is the winner. In this example $\ed(a) = \{(3,2)\}$.}} \label{fig:trail-example-simple}}
\end{figure}

We now give intuition for our key Lemma~\ref{lem:trail} presented below using this example. Suppose in Equation (\ref{eqn:delta-hist}) the maximizing $\setS$ is $\{a\}$ (so that $\{t \colon \funvar{f}(t) \in \setS\} = \mt$), $x = a$, and $x' = b$. Then, for any $i$, and any $\pi\in\Delta$:
{\small 
\begin{align*}
   \delta&=\sum_{t\in \{(5, 0), (4, 1), (3, 2)\} } \Pr(\Hist(\vec X) = t|  X_{i} = a) \\
   &\;\;\;\;\;\;- \sum_{t\in \{(5, 0), (4, 1), (3, 2)\}} \Pr(\Hist(\vec X) = t |  X_{i} = b).
\end{align*}}
The core of Lemma~\ref{lem:trail} is the observation that when votes are independent (e.g. when $\Delta\subseteq \Pi(\{a, b\})$), then for all $t = (t_a, t_b)$ such that $t_a > 0$, the following holds 
{\footnotesize
\begin{equation}\nonumber
\begin{split}
&\Pr(\Hist(\vec X) = (t_a, t_b)|X_i = a) \\
= &\Pr(\Hist(\vec X) = (t_a-1, t_b + 1) | X_i = b).
\end{split}
\end{equation}}
In light of this,
$\Pr(\Hist(\vec X) = (5, 0)| X_{i} = a)$ cancels out with $\Pr(\Hist(\vec X) = (4, 1)|  X_{i} = b)$, and $\Pr(\Hist(\vec X) = (4, 1)|  X_{i} = a)$ cancels out with $\Pr(\Hist(\vec X) = (3, 2)| X_{i} = b)$. This leaves
{\footnotesize
\begin{equation}\nonumber
\begin{split}
    \delta= &\Pr(\Hist(\vec X) = (3, 2) = \ext{\mt}|X_i = a)\\
    &- \Pr(\Hist(\vec X) = (5, 0) = \ent{\mt}|X_i = b).
\end{split}
\end{equation}}
We note that here $\Pr(\Hist(X) = \ent{\mt}|X_i = b) = 0$, but this does not hold generally for all trails for $m\ge 2$. This calculation can be extended to the more general Lemma~\ref{lem:trail} below.  
Before that, let us formally define trails. For any histogram $t = (t_{1}, \cdots, t_{\bins})\in \Natural^{\bins}$, any $z\in \mathbb{Z}$ and $j\le \bins$, we let $(t_{1}, \cdots, t_{\bins}) + zx_{j}$ denote the histogram $(t_{1}, \cdots, t_{j} + z,\cdots t_{\bins})$.

\begin{definition}[Trails]\label{def:trail}
Given a pair of indices $(j,k)$ where $j\ne k$, a histogram $t$, and a length $q$, we define the {\em trail} $\trail{t}{x_j}{x_k}{q}=\{t - zx_{j} + zx_{k}): 0\leq z \leq q\}$, where $(j,k)$ is called the direction of the trail, $t$ is then the {\em entry} of this trail, also denoted by $\ent{\trail{t}{x_j}{x_k}{q}}$, and $t - qx_{j} + qx_{k}$ is called the {\em exit} of the trail, denoted by $\ext{\trail{t}{x_j}{x_k}{q}}$.

Alternatively, a trail $\mt$ can be defined by just its entry and exit. 
\end{definition}

\begin{lemma} 
\label{lem:trail}
	Let $\mt$ be a trail with direction $(j,k)$, and let $\pi\in \Pi(\universe)$. For any $i$, $x_{j}, x_{k} \in \universe$, we have:
	{\footnotesize 
	\begin{equation}\nonumber
	\begin{split}
	&\Pr_{\vec X\sim \pi}(\Hist(\vec X) \in \mt\ |\ X_{i} = x_{j})\\
	&-	\Pr_{\vec X\sim \pi}(\Hist(\vec X) \in \mt\ |\ X_{i} = x_{k}) \\
	=\;&\Pr_{\vec X\sim \pi}(\Hist(\vec X) = \ext{\mt}\ |\ X_{i} = x_{j})\\
	&-  \Pr_{\vec X\sim \pi}(\Hist(\vec X) =\ent{\mt}\ |\ X_{i} = x_{k}).
	\end{split}
	\end{equation}}
\end{lemma}

\begin{proof} 
    Fix distribution $\pi$ over $n$ votes, where each vote is independently distributed. For $\vec X\sim\pi$, denote $X_{-i}$ as the random variable $\vec X$ but without the $i$th vote.
	The equality in the lemma comes from the simple observation that when votes are independently distributed, for any histogram $t\in \Natural^{\bins}$ and any $j\in [\bins]$
	{\footnotesize $$
	\Pr_{\vec X \sim \pi}(\Hist(\vec X) = t|X_{i} = x_{j}) = \Pr_{\vec X \sim \pi}(\Hist(X_{-i}) = t - x_{j})
	$$}
	(Below, $\vec X \sim \pi$ is implicit). Let $q$ be the length of the trail. For any $0\leq z<q$, let $t_{z} = \ent{\mt}-zx_{j} + zx_{k}$. Then,
{\footnotesize \begin{align*}
	&\Pr(\Hist(\vec X) = t_{z}|X_{i} = x_{j}) \\
	&= \Pr(\Hist(X_{-i}) =  t_{z} - x_{j})\\
	&=\Pr(\Hist(\vec X) =t_{z} - x_{j} + x_{k}|X_{i} = x_{k})\\
	&= \Pr(\Hist(\vec X) =t_{z+1}|X_{i} = x_{k}).
\end{align*}}
In other words,
{\footnotesize
\begin{align*}
&\Pr(\Hist(\vec X) \in \mt|X_{i} = x_{j}) \\
&-	\Pr(\Hist(\vec X) \in \mt|X_{i} = x_{k})\\
=& \Pr(\Hist(\vec X) = t_{q}|X_{i} = x_{j}) \\
&- \Pr(\Hist(\vec X) = t_{0} | X_{i} = x_{k}) \\
\tab& + \sum_{0\leq z<q} \Big(\Pr(\Hist(\vec X) = t_{z}|X_{i} = x_{j})\\
&\tab- \Pr(\Hist(\vec X) = t_{z+1}|X_{i} = x_{k})\Big)\\
=& \Pr(\Hist(\vec X) = t_{q}|X_{i} = x_{j}) \\
&\tab- \Pr(\Hist(\vec X) = t_{0}|X_{i} = x_{k})\\ \tag{Every term in the summation of differences cancels out.}\\
=& \Pr(\Hist(\vec X) = \ext{\mt}|X_{i} = x_{j}) \\
&\tab- \Pr(\Hist(\vec X) =\ent{\mt}|X_{i} = x_{k})
\end{align*}}
\end{proof}

\paragraph{Remark.} In this subsection's example, no matter the $\setS$, the set $\{t \colon \funvar{f}(t) \in \setS\}$ forms one single trail, but this does not hold 
in general. Instead, to prove our main theorem we will partition this set into multiple trails, and apply Lemma~\ref{lem:trail} to simplify probabilities over each trail.

\subsection{A SIMPLE APPLICATION OF TRAILS TECHNIQUE: PROOF OF THEOREM~\ref{thm:nonunif-winner-2}}

\begin{proof}{[Proof sketch for Theorem~\ref{thm:nonunif-winner-2}, see Appendix~\ref{sec:fullproof_thm2} for the full proof].} 
 
For any $\pi\in \Delta$, let $p = \pi(a)$. Let trails $\mt_a = \left\{t: t=(k, n-k) ,\, k \geq \alpha n\right\}$ and  $\mt_b= \left\{t: t=(k, n-k) ,\, k < \alpha n\right\}$. 
It follows that any histogram in $\mt_a$ results in $a$ being the winner, and any in $\mt_b$ results in $b$ as the winner. Also,  Equation (\ref{eqn:delta-hist}) implies we should {\em not} consider $\setS=\{a, b\}$ nor $\setS=\emptyset$ as otherwise $\delta = 0$ (the default lower bound on $\delta$).  Thus, we only consider $\setS = \{a\}$ (when  winner is $a$, corresponding to trail $\mt_a$) or  $\setS = \{b\}$ (trail $\mt_b$). Then Equation (\ref{eqn:delta-hist}) becomes (we disregard the value of $i$ since votes are i.i.d.):
{\footnotesize 
\begin{equation}\nonumber
\begin{split}
\delta =& \max_{j\in \{a, b\},x,x'} \left[\Pr\nolimits_{\vec X\sim \pi}(\Hist(\vec X)\in \mt_j | X_i = x) \right.\\
&\left.- \Pr\nolimits_{\vec X\sim \pi}(\Hist(\vec X)\in \mt_j | X_i = x')\right]\ \ \  \text{(Equation (\ref{eqn:delta-hist}))}\\
=& \max_{j\in\{a, b\},x,x'} \left[\Pr(\Hist(\vec X) = \ext{\mt_j} | X_i = x)\right.\\
&\left.- \Pr(\Hist(\vec X) = \ent{\mt_j} | X_i = x')\right].\ \ \  (\text{Lemma~\ref{lem:trail}})\\
\end{split}
\end{equation}}
We first discuss $\setS =\{a\}$ whose corresponding trail $\mt_a$ starts at $\ent{\mt_a} = (n,0)$ and exits at $\ext{\mt_a} = (\lceil \alpha n \rceil, \lfloor (1-\alpha)n \rfloor)$. Here, $x = a$ and $x' = b$ maximize $\delta$. Then,
{\footnotesize 
\begin{align*}&\Pr(\Hist(\vec X) = \ent{\mt_a} | X_i = b) \\=\; &\Pr(\Hist(\vec X) = (n,0) | X_i = b)
=0,
\end{align*}}
 
and
 
{\footnotesize \begin{align*}\nonumber
\Pr(\Hist(\vec X) = \ext{\mt_a} | X_i = a) \\=  \Theta\left(\sqrt{\frac{1}{n}}\left[\left(\frac{p}{\alpha}\right)^{\alpha}\left(\frac{1-p}{1-\alpha}\right)^{1-\alpha} \right]^n\right).
\end{align*}}

The case for $\setS =\{b\}$ is the same and Theorem~\ref{thm:nonunif-winner-2} follows by maximizing $\delta$ over $\pi\in\Delta$.
\end{proof}

\section{EXACT PRIVACY OF VOTING RULES: GENERAL CASE}\label{sec:GSRmore}

The main result of this section, Theorem~\ref{thm:gsr}, characterizes $(0,\delta,\Delta)$-exact DDP of {\em generalized scoring rules (GSR)} for arbitrary number of candidates, defined below. The main message is that the characterization holds for commonly-used voting rules (Corollary~\ref{cor:gsr-subset}). Therefore, to get the main message, a reader can skip the technical descriptions and definitions below to Corollary~\ref{cor:gsr-subset}.

\begin{definition}[Generalized Scoring Rules (GSR)~\cite{Xia2008}]~\label{def:gsr}
A Generalized Scoring Rule (GSR) is  defined by a number $K\in {\mathbb N}$ and two functions $\funvar{f}:\LinOrd{\mathcal C}\rightarrow\mathbb{R}^{K}$ and $\funvar{g}$, which maps weak orders over the set $\{1,\ldots, K\}$ to $\mathcal C$. 
Given a {vote} $V\in \LinOrd{\mathcal C}$, $\funvar{f}(V)$ is the {\em generalized score vector} of $V$. Given a profile $P$,  we call $\funvar{f}(P) = \sum_{V\in P} \funvar{f}(V)$ the \emph{score}. Then, the winner is given by $\funvar{g}(\funvar{Ord}(\funvar{f}(P)))$, where $\funvar{Ord}$ outputs the weak order of the $K$ components in $\funvar{f}(P)$.
\end{definition}
We say that {a rule is a GSR} if it can be described by some $\funvar{f}$, $\funvar{g}$ as above. Most popular voting rules (i.e., Borda, Plurality, $k$-approval and ranked pairs) are  GSRs. See Example~\ref{example:gsr1} and Example~\ref{example:GSR} for $\funvar{f}$, $\funvar{g}$ for plurality rule and majority rule. The domain of GSRs can be naturally extended to {\em weighted} profiles, where each type of vote is weighted by a real number, due to the linearity of $\funvar{f}$.  

\begin{example}\label{example:gsr1}
The simplest example of a GSR is {\em plurality}. This is the voting rule where each voter chooses exactly one candidate, and the candidate with the most votes is the winner. Here,  $K$ is equal to the number of candidates $m$. Suppose $V$ is a vote (linear order over candidates) where the top candidate is $x_i$. The function $\funvar{f}$ would map $V$ to a vector $\funvar{f}(V) = (0, \cdots, 0, 1, 0, \cdots, 0)$ where the 1 is at position $i$ in the vector. Then, $\funvar{f}(P)$ is exactly the histogram counting the number of times each candidate is ranked at the top of a vote. Finally, the function $\funvar{g}$ chooses the winner.
\end{example}

We now define a set of properties of GSRs to present our characterization of eDDP in Theorem~\ref{thm:gsr}.  

\begin{definition}[Canceling-out, Monotonicity, and Local stability]\label{def:gsr-subset-properties}
	 A voting rule $\mech$ satisfies {\em canceling-out} if for any profile $\vec X$, adding a copy of every ranking does not change the winner. That is, $\mech(\vec X)=\mech(\vec X\cup{\mathcal L(C)})$.
	 
	A voting rule satisfies {\em monotonicity} one cannot prevent a candidate from winning by raising its ranking in a vote while maintaining the order of other candidates. 
	
	A voting rule $\mech$ satisfies {\em local stability} if there exist locally stable profile. A profile $\vec X^*$ is {\em locally stable} (to $\mech$), if there exists a candidate $a$, a ranking $W$, and another ranking $V$ that is obtained from $W$ by raising the position of $a$ without changing the order of other candidates, such that for any $\vec X'$ in the $\gamma$ neighborhood of $\vec X^*$ in terms of $L_{\infty}$ norm, we have (1) $\mech(\vec X')\ne a$, and (2) the winner is $a$ when all $W$ votes in $\vec X'$ becomes $V$ votes. 
\end{definition}

\begin{definition}[Unstable distributions]\label{dfn:unstable} Given a GSR $\mech$, a distribution $\pi$ over $\universe$ is {\em unstable}, if for any $\epsilon>0$, there exists $\vec p$ and $\vec q$ with $\|\vec p\|_2=\|\vec q\|_2<\epsilon$, such that $\mech(\pi+\vec q)\ne \mech(\pi+\vec q)$\footnote{We slightly abuse notation---$\mech(\pi)$  denotes the output of $\mech$ when the voters cast fractional votes according to  $\pi$.}, where $\|\cdot\|_2$ is the $\ell_2$-norm.
\end{definition}

\begin{theorem}[Dichotomy of  Exact DDP for GSR, full proof in Appendix~\ref{fullproof_theorem4}]\label{thm:gsr}
Fix $m\ge 2$ and $\Delta \subseteq \Pi({\mathcal L( \mathcal C)})$ with $|\Delta|<\infty$. For any $n$, any GSR $\mech$ that satisfies monotonicity,  local stability, and canceling-out is $(0, \delta, \Delta)$-DDP, where $\delta$ is 
\begin{itemize}
    \item $\Theta(\sqrt{1/n})$, if $\Delta$ contains the uniform distribution over $\mathcal L(\mathcal C)$, or
    \item $\exp(-\Omega(n))$, if $\Delta$ does not contain any unstable distribution. 
\end{itemize}  
\end{theorem}

\begin{proof}
[Proof sketch for Theorem~\ref{thm:gsr}] (See Appendix~\ref{fullproof_theorem4} for the full proof)

We first prove the $\delta = \exp[-\Omega(n)]$ case. Recalling the proof of Theorem~\ref{thm:nonunif-winner-2}, we know that $\delta$ is closely related to the probability of $\ed(a)$ for some $a\in\mc$. 
It turns out that this is also the case for any GSR $\mech$ that also satisfies monotonicity. Applying our trails technique, we have
{\footnotesize $$\delta \leq \max_a \sum_{P\in\ed(a)}\Pr(P-V),$$}
where $V$ is a vote s.t.~there exists vote $W$ with $\mech(P-V+W)\neq a$. Thus, we know $\delta$ is upper bounded by the probability of all profiles ($P-V$) ``close'' to a tie of voting rule $r$. For any unstable distribution $\pi$, we can prove that the center of $\pi$ is reasonably ``far'' away from any profile in $\ed(a)$ (or ``far'' away from any ties). Then, 
the exponential upper bound follows after Chernoff bound and union bound. The proof for this part is similar to the analysis of probabilities of tied profiles as in~\cite{Xia2008}.

We now move on to the $\delta = \Theta(\sqrt {1/n})$ case. The upper bound $O(\sqrt{1/n})$ also derived from the trails technique's result: $\delta \leq \max_a \sum_{P\in\ed(a)}\Pr(P-V)$. General framework of our proof is similar with the $\delta = \exp[-\Omega(n)]$ case. Since adding any vote to a uniform profile results in a new winner, we know the uniform distribution of preferences is always an unstable distribution when  requirements in Theorem~\ref{thm:gsr} are met. Thus, we can prove that the center of the profiles' distribution (multinomial distribution in $m!$-dimensional space) is ``close'' to a tie. Then, we apply Stirling's formula to each trails and give an upper bounds to $\Pr(P-V)$ for profiles $P\in \ed(a)$.

For the lower bound $\Omega(\sqrt{1/n})$, canceling-out and locally stability are used to construct a ``good'' subset of profiles. At a high level, canceling-out ensures that the constructed subset is large enough, and locally stability ensures the trails constructed from the selected subset is long enough. Our subset is contracted by certain profiles with $O(\sqrt n)$ distance\footnote{we use $\ell_2$ distance in the $m!$-dimensional space of profile.} from the center of profile distribution in the direction of local stable profile. Giving a lower bound to the $\Pr(P-V)$ for any profile $P$ in our selected subset is the most non-trivial part of this proof and is quite different from the proof in~\cite{Xia2008}. Unlike the profiles $P$ in our selected subset of profiles, $P-V$ do not necessarily concentrated in a specific region in the space of profiles. Here, we use a non-i.i.d.~version of Lindeberg-Levy central limit theorem~\cite{greene2003econometric} to analyze the multinomial distribution of $m!$ kinds of votes.
\end{proof}

Next, we use a simple example of  majority rule to show the results in Theorem~\ref{thm:gsr} matches the 2-candidate results in Section~\ref{section:exactDDP}. In the following example, we also provide the intuitions on how to describe voting rules in the language of GSR.

\begin{example}[Example of Definition~\ref{def:gsr} and Theorem~\ref{thm:gsr}]\label{example:GSR}  
Let $\universe = \rangeM= \mc = \{c_1,c_2\}$, $V = [c_1\succ c_2]$, and $W = [c_2 \succ c_1]$. For the \emph{majority rule} with $\alpha=0.5$, we have $\funvar{f}(V) = (1,0)$ and $\funvar{f}(W) = (0,1)$. Then, the winner is chosen according to $\funvar{g}$ corresponding to the largest component in $\funvar{f}(P)$. Recalling our definition of unstable distribution, we know $(\frac 1 2,\,\frac 1 2)$ is the only unstable distribution for 2-candidate majority rule. This is the intuitive reason behind $\delta = \Theta(\sqrt{1/n})$ when $\pi = (\frac 1 2,\,\frac 1 2)$ for both Theorem~\ref{thm:gsr} and Theorem~\ref{thm:nonunif-winner-2} (when $\alpha = 0.5$). For any other $\pi \neq (\frac 1 2,\,\frac 1 2)$, these two theorems result in $\delta = \exp[-\Omega(n)]$. We note that while Theorem~\ref{thm:gsr} covers more voting rules,  Theorem~\ref{thm:nonunif-winner-2} is a more fine-grained result  for two candidates.

\end{example}

\begin{corollary}\label{cor:gsr-subset} Plurality, veto, $k$-approval, Borda, Maximin, Copeland, Bucklin, Ranked Pairs, Schulze (see e.g.~\cite{Xia2008}) 
are $\left(0,\Theta\left(1/\sqrt {n}\right),\Delta\right)$-\eDDP~ when $\Delta$ contains the uniform distribution.
\end{corollary}
\begin{proof} As shown in Definition~\ref{def:gsr-subset-properties}, {\em canceling-out} and {\em monotonicity} are very natural properties of most voting rules. These two properties can be easily checked according to the definitions of voting rules discussed in Corollary~\ref{cor:gsr-subset}. In the next proposition, we prove a more generalized version of Corollary~\ref{cor:gsr-subset} for {\em local stability}, which indicate a large subset of the voting rules can satisfies all properties required by Theorem~\ref{thm:gsr}.
\begin{proposition}\label{prop:gsr-subset-positional1} All positional scoring rules and all Condorcet consistent and monotonic rules satisfy the property of {\em local stability}.
\end{proposition}

\begin{proof} 
Let $s_i$ to denote the score of the $i$-th candidate ($f(P)$ in definition~\ref{def:gsr}). Suppose $s_1=\cdots=s_l>s_{l+1}$. We let $V=[a\succ c_1\succ c_{l-1}\succ b\succ \text{others}]$ and $W=[c_1\succ c_{l-1}\succ b\succ a\succ \text{others}]$. Let $M$ be the permutation $c_1\rightarrow c_2\rightarrow\ldots c_{m-2}\rightarrow c_1$. Let $V_1=[a\succ b\succ \text{others}]$  and $V_2=[b\succ a\succ \text{others}]$. Let $P'=\bigcup_{i=1}^{m-2}M^i(V_1)\cup M^i(V_2)$. Let  $P^*=2P'\cup\{V,W\}$. It follows that $a$ and $b$ are the only two candidates tied in the first place in $P^*$. Therefore, there exists $\epsilon$ to satisfy the condition in local stability.

The same profile can be used to prove the local stability of all Condorcet consistent and monotonic rules.
\end{proof}
Then, Corollary~\ref{cor:gsr-subset} follows by combining the results for all three properties.
\end{proof}
Another commonly-used GSR called STV does not satisfy monotonicity, which means that Theorem~\ref{thm:gsr} does not apply. However, empirical results (Section~\ref{section:empirical-results}) suggest that STV is likely also $\left(0,\Theta\left(1/\sqrt{n}\right),\Delta\right)$-\eDDP~for this distribution.

\section{CONCRETE ESTIMATION OF THE PRIVACY PARAMETERS}\label{section:empirical-results-main}
We present an example of computing concrete estimates of $(0, \delta, \Delta)$-exact DDP values for several GSRs. For this example, we let $\Delta = \{\pi\}$ such that $\pi\in \Pi(\{x_1, x_2, x_3\})$ and $\pi(x_i) = \pi(x_j) = 1/3$ (i.e., votes are i.i.d. and uniform).  
We generated these concrete estimates via exhaustive search over possible profiles for 3 candidates and $n \leq 50$ votes, and computing the $\delta$ values exactly for each $n$. Since we know that $\delta = \Theta(1/\sqrt{n})$, we fit these values to $\delta(n) = \frac{1}{\sqrt{an + b}}$ via linear regression. 
We rank voting rules from most to least private. The larger the $a$, the smaller the $\delta$ value and thus more private: 

\vspace{2mm}
\hfil {\bf 2-approval $\rhd$ Plurality $\rhd$  Maximin  $\rhd$  STV  $\rhd$  Borda }\hfill
\vspace{2mm}

We showed in Table~\ref{GSR-concrete-delta} (Section~\ref{section:introduction}, also see Table~\ref{GSR-concrete-delta-big} in Appendix~\ref{section:empirical-results} for more information) the fitted $\delta$ curves. Figure~\ref{fig:GSR-winner-score-small} shows the comparison between Plurality, Borda, and STV voting rules w.r.t.~their $\delta$ values in $(0,\delta,\Delta)$-eDDP, when fitted to $\delta(n) = \frac{1}{\sqrt{an + b}}$.

\begin{figure}[ht]
	\centering
	\includegraphics[width=0.47\textwidth]{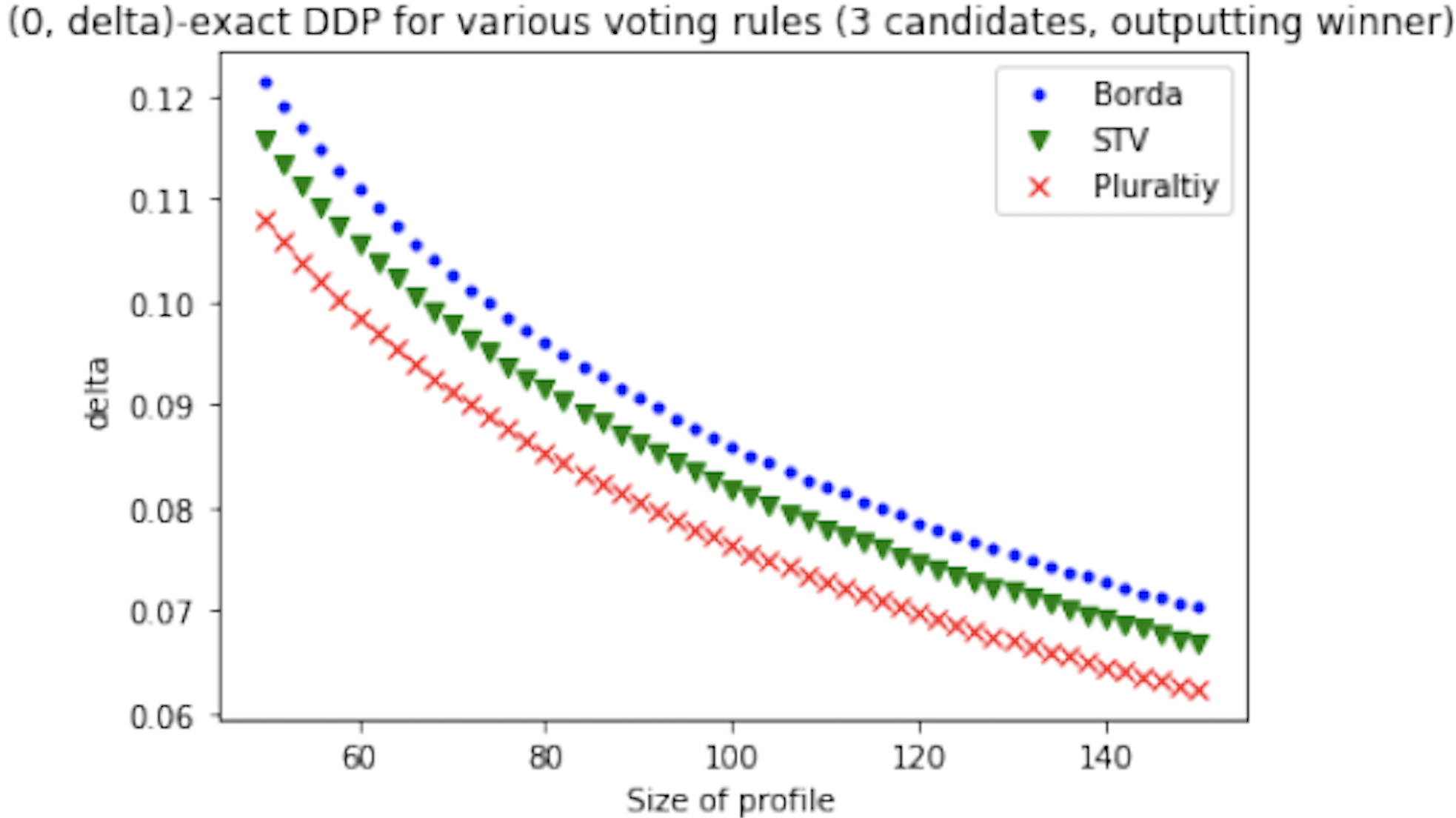}
	\caption{{\small The $\delta$ values in $(0, \delta, \Delta)$-eDDP for Borda, STV, and plurality in our concrete estimates.}}
	\label{fig:GSR-winner-score-small}
\end{figure}

\section{SUMMARY AND FUTURE WORK}
We address the limitation of DP in deterministic voting rules by introducing and characterizing (exact) DDP for voting rules, leading to an encouraging message about the good privacy of commonly-studied voting rules and a framework to compare them w.r.t.~eDDP. There are many directions for future work. An immediate open question for theoretical study is to extend our studies to general $(\epsilon$, $\delta)$, and non-i.i.d.~distributions, as well as to other high-stakes social choice procedures such as matching and resource allocation.  On the practical side, it could be informative to study the eDDP of  other data that is often published during an election, such as demographic information, and  interpret their consequences.

\subsubsection*{Acknowledgements}
	We thank all anonymous reviewers for helpful comments and suggestions. 
	Vassilis Zikas: Work done while the author was at RPI and at the University of Edinburgh. Research supported in part by Sunday Group, Inc.,  and part by the Office of the Director of National Intelligence (ODNI), Intelligence Advanced Research Projects Activity (IARPA), via 2019-1902070008. The views and conclusions contained herein are those of the authors and should not be interpreted as necessarily representing the official policies, either expressed or implied, of ODNI, IARPA, or the U.S. Government. The U.S. Government is authorized to reproduce and distribute reprints for governmental purposes notwithstanding any copyright annotation therein. Lirong Xia: acknowledges NSF \#1453542 and \#1716333, and ONR \#N00014-17-1-2621 for support. Ao Liu: acknowledges an IBM AIHN scholarship for support.

\section*{References}
{\small
\bibliographystyle{named}

}
\section*{Appendix References}
{\small
	[Mossel {\em et al.}, 2013] Elchanan Mossel, Ariel D. Procaccia, and Miklos Z. Racz. A Smooth Transition From Powerlessness to Absolute Power. {\em Journal of Artificial Intelligence Research}, 48(1):923–951, 2013
}
\onecolumn
\newpage
\appendix
\section{Additional Related Literature}\label{appendix:related_literature}

The first works on DP described how one can create mechanisms for answering standard statistical queries on a database (e.g., number of records with some property or histograms) in a way that satisfies the DP definition. This ignited a vast and rapidly evolving line of research on extending the set of mechanisms and achieving different DP guarantees---we refer the reader to~\cite{Dwork2014} for an (already outdated) survey---to a rich literature of relaxations to the definition, e.g., ~\cite{Bhaskar2011,Leung2012a,Duan2009,Bassily2013}, that capture among others, noiseless versions of privacy, as well as works studying the trade-offs between privacy and utility of various mechanisms~\cite{McSherry2007,Blum2008,Hardt2010,Bassily2015,Ghosh2009}.

\emph{Generalized Scoring Rules} (GSRs) is a  class of voting rules that include many commonly studied voting rules, such as Plurality, Borda, Copeland, Maximin, and STV~\cite{Xia2008}. It has been shown that for any GSR the probability for a group of manipulators to be able to change the winner has a phase transition~[Xia and Conitzer, 2009; Mossel {\em et al}., 2013]. An axiomatic characterization of GSRs is given in~\cite{Xia2009}. The most robust GSR with respect to a large class of statistical models has been characterized~\cite{Caragiannis14:Modal}. Recently GSRs have been extended to an arbitrary decision space, for example to choose a set of winners or rankings over candidates~\cite{Xia2015}.

\paragraph{Relaxations to Differential Privacy and Noiseless Functions} Relaxations to differential privacy have been proposed to allow functions with less to no noise to achieve a DP-style notion of privacy. Kasiviswanathan and Smith~\cite{Kasiviswanathan2008} formally proved that differential privacy holds in presence of arbitrary adversarial information, and formulated a Bayesian definition of differential privacy which makes adversarial information explicit. Hall et al.~\cite{Hall2012} suggested adding noise to only certain values (such as low-count components in histograms) to achieve a relaxed notion of Random Differential Privacy with higher accuracy. Taking advantage of (assumed) inherent randomness in the database, several works have also put forward DP-style definitions which allow for noiseless functions. Duan~\cite{Duan2009} showed that sum queries of databases with i.i.d. rows can be outputted without noise. Bhaskar et al.~\cite{Bhaskar2011} introduced Noiseless Privacy for database distributions with i.i.d. rows, whose parameters depend on how far the query is from a function which only depends on a subset of the database. Motivated by Bayesian mechanism design, Leung and Lui~\cite{Leung2012}, suggested noiseless sum queries and introduced Bayesian differential privacy for database distributions with independent rows, where the auxiliary information is some number of revealed rows. 

These ideas were generalized and extended by Bassily et al. who introduced {\em distributional differential privacy (DDP)}~\cite{Bassily2013,Groce2014}. Informally, given a distribution $(X, Z)$, where $X$ is the adversary's uncertainty in the database distribution and $Z$ is a parameter used for proving composition theorems (i.e. computing DDP when outputting results from two functions that are both DDP with some parameters), we say a function $\mech$ is $(\epsilon,\delta, \Delta = \{(X, Z)\})$-DDP if its output distribution $\mech(X) |Z$ can be simulated by a simulator that is given the database missing one row. In these works, noiseless functions which have been shown to satisfy DDP are exact sums, truncated histograms, and \emph{stable} functions where with large probability, the output is the same given neighboring databases.

\section{Distributional Differential Privacy for Voting}\label{appendix:Theorem3}

\subsection{Equivalence of our DDP definition and that of Bassily et. al 2013}
\label{sec:equivalence}
For completeness we present the DDP definition of~\cite{Bassily2013}. However, this definition is harder to work with, and harder to explain conceptually. Thus, we choose to present Definition~\ref{def:our-DDP}. Below, we will show that in our setting, these two definitions are equivalent.

\begin{definition}[Distributional Differential Privacy (DDP)~\cite{Bassily2013}]~\label{def:orig-DDP}
	A function $\mech:\universe^{*}\rightarrow\rangeM$ is $(\epsilon, \delta, \Delta)$-distributional differentially private (DDP) if there is a simulator $\Sim$ such that for all $D = (\pi, Z)\in \Delta$, $\vec X\sim \pi$, for all $i$, $(x, z)\in \Supp{X_{i}, Z}$ (where $X_{i}$ denotes the random variable that is the $i$th component of $\vec X$, $X_{-i}$ denotes the r.v. that is $\vec X$ without the $i$th component, and $\Supp{.}$ denotes the \emph{support} of a distribution), and all sets $\setS\subseteq\rangeM$,
	$$
	\Pr_{\vec X\sim \pi}(\mech(\vec X)\in \setS |X_{i} = x, Z = z) \leq e^{\epsilon} \Pr_{\vec X\sim \pi}(\Sim(X_{-i})\in \setvar{S}|X_{i} = x, Z = z) + \delta
	$$
	and
	$$
	\Pr_{\vec X\sim \pi}(\Sim(X_{-i})\in \setS|X_{i} = x, Z = z) \leq e^{\epsilon} \Pr_{\vec X\sim \pi}(\mech(\vec X)\in \setvar{S} |X_{i} = x, Z = z) + \delta
	$$
\end{definition}

\begin{lemma}[Equivalence of definitions]\label{cor:equivalence-of-defs}
	For any $\universe$, let $\Delta\subseteq \Pi(\universe)$ and $\Delta' = (\Delta, Z = \emptyset)$ (where $Z$ is a parameter in the~\cite{Bassily2013} definition). Suppose $\mech$ is $(\epsilon, \delta, \Delta')$-(simulation-based) DDP~\cite{Bassily2013}, then $\mech$ is $(2\epsilon, (1+e^{\epsilon})\delta, \Delta)$-DDP for our Definition~\ref{def:our-DDP}. Conversely, if $\mech$ is $(\epsilon, \delta, \Delta)$-DDP  for Definition~\ref{def:our-DDP} then $\mech$ satisfies $(\epsilon, \delta, \Delta')$-(simulation-based) DDP.
\end{lemma}
\begin{proof}[Lemma~\ref{cor:equivalence-of-defs}]

	We prove the first statement, that is, if $\mech$ is $(\epsilon, \delta, \Delta')$-(simulation-based) DDP~\cite{Bassily2013}, then $\mech$ is $(2\epsilon, (1+e^{\epsilon})\delta, \Delta)$-DDP of Definition~\ref{def:our-DDP}. \\
	
	By the definition of $\mech$ being $(\epsilon, \delta, \Delta')$-(simulation-based) DDP, the simulator $\Sim$ has to satisfy the below inequalities for any $(\pi, Z) \in \Delta'$, any $i$, and $x \in \Supp{X_{i}}$ (for $\vec X \sim \pi$). With  $Z = \emptyset$, we can write the inequalities in the DDP definition without $Z$ as
	$$\Pr_{\vec X\sim \pi}(\mech(\vec X)\in \setS\ |\ X_{i} = x) \leq e^{\epsilon}\Pr_{\vec X\sim \pi}(\Sim(X_{-i})\in \setS\ |\ X_{i} = x) + \delta$$
	\begin{align}\label{eqn:DDP-without-sim}
	\Pr_{\vec X\sim \pi}(\Sim(X_{-i})\in \setS | X_{i} = x) \leq e^{\epsilon}  \Pr_{\vec X\sim \pi}(\mech(\vec X)\in \setS | X_{i} = x) + \delta
	\end{align}
	(We make $\vec X\sim \pi$ implicit to ease presentation.) Now consider any $x'\in \Supp{X_{i}}$, possibly different from the $x$ above. By the definition of DDP, the inequalities should also hold for $x'$, i.e.
	$$\Pr(\mech(\vec X) \in \setS\ |\ X_{i} = x') \leq e^{\epsilon}\Pr(\Sim(X_{-i})\ |\ X_{i} = x') + \delta $$
	Since the simulator is not given $i$th entry of the database, its output does not depend on the value of the $i$th row. Moreover, if database rows are independent, the distributions $X_{-i} | X_{i} = x' = X_{-i}| X_{i} = x$. Thus $\Pr(\Sim(X_{-i})\ |\ X_{i} = x')  =  \Pr(\Sim(X_{-i})\in \setS\ |\ X_{i} = x)$. So,
	\begin{align*}
	\Pr(\mech(\vec X)\in \setS\ |\ X_{i} = x') &\leq e^{\epsilon}\Pr(\Sim(X_{-i})\in \setS\ |\ X_{i} = x) + \delta\\
	\Pr(\mech(\vec X)\in \setS\ |\ X_{i} = x') &\leq e^{\epsilon}(e^{\epsilon}\Pr(\mech(\vec X)\in \setS | X_{i} = x) + \delta)+ \delta \text{ (By Equation~\ref{eqn:DDP-without-sim} above.)}\\
	\Pr(\mech(\vec X)\in \setS\ |\ X_{i} = x') & \leq e^{2\epsilon}\Pr(\mech(X)\in \setS\ |\ X_{i} = x)  + e^{\epsilon}\delta + \delta
	\end{align*}
	Thus, we have shown that for all $x, x'\in \Supp{X_{i}}$ (and all $i$), $$\Pr(\mech(X)\in \setS\ |\ X_{i} = x') \leq e^{2\epsilon}\Pr(\mech(X)\in \setS\ |\ X_{i} = x)  + (e^{\epsilon} + 1)\delta$$ So,  $\mech$ is $(2\epsilon, (1+e^{\epsilon})\delta, \Delta)$-DDP, proving the first statement.\\
	
	We now prove the second statement. That is,  if $\mech$ is $(\epsilon, \delta, \Delta)$-DDP of Definition~\ref{def:our-DDP} then $\mech$ is $(\epsilon, \delta, \Delta')$-(simulation-based) DDP. To do so, we define the simulator $\Sim$ to be the algorithm which inserts any $x'\in \Supp{X_{i}}$ to the missing $i$th row of the database, and apply $\mech$ to the result. By independence of rows, $\Pr(\Sim(X_{-i})\ |\ X_{i} = x)	= \Pr(\Sim(X_{-i})\ |\ X_{i} = x')$ by our definition of $\Sim$, equal to $\Pr(\mech(X)\ |\ X_{i} = x')$. Then, for any $X \in \Delta$, $i$, and $x, x'\in \Supp{X_{i}}$,	
	\begin{align*}
	\Pr(\Sim(X_{-i})\in \setS\ |\ X_{i} = x)	&= \Pr(\mech(X)\in \setS\ |\ X_{i} = x') \leq e^{\epsilon}\Pr(\mech(X)\in \setS\ |\ X_{i} = x) + \delta
	\end{align*}
	by inequality of Definition~\ref{def:our-DDP}. This proves the second statement.
\end{proof}

\subsection{Proof of Theorem~\ref{thm:epsdelta-histogram-c}: DDP of Histogram}\label{appendix:thm1}

\paragraph{Theorem~\ref{thm:epsdelta-histogram-c} (DDP of $\Hist$)} {\em Given any $\universe=\{x_1,\ldots,x_{\bins}\}$ and $\Delta\subseteq \Pi(\universe)$  with $|\Delta|<\infty$, let $p_{\min} = \min_{\pi\in\Delta, i\le \bins} (\pi(x_i))$. For any $n\in\Natural$ and any $\epsilon \geq 2 \ln(1+\frac{1}{p_{\min}n})$, $\Hist$ for $n$ voters is $(\epsilon,\delta,\Delta)$-DDP where $\delta =\exp(-\Omega(np_{\min}[\min(2\ln(2), \epsilon)]^{2}))$.}

\begin{proof}At a high level, the proof is similar to Theorem 8 of~\cite{Leung2012}.
	
	Fix $\pi \in \Delta$.	Since votes are i.i.d. and all $i\in [n]$ are equivalent, we simplify $\Pr_{\vec X\sim \pi}(\Hist(\vec X) \in \setS | X_{i} = x) $  as $ \Pr(\Hist(x, X_{-1}) \in \setS)$, where $X_{-1}$ refers to $\vec X$ without the first vote.
	
We need to show that for all $x_{i}, x_{j} \in \{x_{1}, \cdots, x_{\bins}\}$, and all $\setS \subseteq  \Natural^{\bins}$:
	\begin{align*}
	\Pr(\Hist(x_{i}, X_{-1}) \in \setS) \leq e^{\epsilon}\Pr(\Hist(x_{j}, X_{-1}) \in \setS) + \delta
	\end{align*}
	
	We observe that for any set $\setvar{B}$ and $x$:
	\begin{align}
	\Pr(\Hist(x, X_{-1}) \in \setS) &= \Pr(\Hist(x, X_{-1}) \in \setS\cap \overline{\setvar{B}}) + \Pr(\Hist(x, X_{-1}) \in \setS\cap \setvar{B})\\
	&\leq \Pr(\Hist(x, X_{-1}) \in \setS\cap \overline{\setvar{B}}) + \Pr(\Hist(x, X_{-1}) \in \setvar{B})\label{eqn:nonunif-histogram-c-1}
	\end{align}
	
	Let $\setvar{B}$ be the set of all histogram $t\in  \Natural^{\bins}$ where $t_{i} > p_{i}(n-1)e^{\epsilon/2}$ and $t_{j} < p_{j}(n-1)e^{-\epsilon / 2}$. Fix a choice of $\epsilon > 2 \ln(1+\frac{1}{p_{\min} n})$. We claim that for $\delta = \exp(\Omega(np_{\min}(\min(2\ln(2), \epsilon))^{2})$, the following hold:\\
	
	\textit{Claim 1:} $\Pr(\Hist(x_{i}, X_{-1})  \in \setS\cap \overline{\setvar{B}}) \leq e^{\epsilon} \Pr(\Hist(x_{j}, X_{-1})  \in \setS\cap \overline{\setvar{B}})$\\
	
	\textit{Claim 2:} $\Pr(\Hist(x_{i}, X_{-1}) \in \setvar{B}) \leq \delta$\\
	
	If both claims are true, then by Inequality (\ref{eqn:nonunif-histogram-c-1}),
	\begin{align*}
	\Pr(\Hist(x_{i}, X_{-1}) \in \setS)	&\leq \Pr(\Hist(x_{i}, X_{-1}) \in \setS\cap \overline{\setvar{B}}) + \Pr(\Hist(x_{i}, X_{-1}) \in \setvar{B})\\
	&\leq e^{\epsilon} \Pr(\Hist(x_{j}, X_{-1}) \in \setS\cap \overline{\setvar{B}}) + \delta\\
	&\leq e^{\epsilon} \Pr(\Hist(x_{j}, X_{-1}) \in \setS)+ \delta
	\end{align*}
	which proves the theorem. Below we will prove both claims.\\
	
	\textit{Claim 1 proof:}\\
	Since all entries in random variable $X_{-1}$ are i.i.d., the random variable \\ $\Hist(X_{-1})$ which outputs the histogram of the database has distribution equal to the multinomial distribution on $n-1$ trials and $\bins$ events:
	\begin{equation*}
	\Pr(\Hist(X_{-1}) =  (t_{1}, \cdots, t_{\bins})) = \frac{(n-1)!}{t_{1}!\cdots t_{\bins}!}\ p_{1}^{t_{1}}\cdots p_{\bins}^{t_{\bins}}
	\end{equation*}
	where $t_{i}$ is the count of entries with the value $x_{i}$ and $p_{i}$ is the probability for an entry to have the value $x_{i}$. \\
	
	Thus, 
	\begin{equation*}
	\Pr(\Hist(x_{i}, X_{-1}) =  (t_{1}, \cdots, t_{\bins})) = \dfrac{(n-1)!}{t_{1}!\cdots (t_{i}-1)!\cdots t_{\bins}!} p_{1}^{t_{1}}\cdots p_{i}^{t_{i} - 1}\cdots p_{\bins}^{t_{\bins}}
	\end{equation*}
	and
	\begin{equation*}
	\Pr(\Hist(x_{j}, X_{-1}) = (t_{1}, \cdots, t_{\bins})) = \dfrac{(n-1)!}{t_{1}!\cdots (t_{j}-1)!\cdots t_{\bins}!} p_{1}^{t_{1}}\cdots p_{i}^{t_{j} - 1}\cdots p_{\bins}^{t_{\bins}}
	\end{equation*}
	So, for every $t = (t_{1}, \cdots, t_{\bins})\in \setS \cap \overline{\setvar{B}}$:
	\begin{align*}
	\dfrac{Pr(\Hist(x_{i}, X_{-1}) = t)}{Pr(\Hist(x_{j}, X - 1) = t)} & = \dfrac{ \dfrac{(n-1)!}{t_{1}!\cdots (t_{i}-1)!\cdots t_{\bins}!} p_{1}^{t_{1}}\cdots p_{i}^{t_{i} - 1}\cdots p_{\bins}^{t_{\bins}} }{ \dfrac{(n-1)!}{t_{1}!\cdots (t_{j}-1)!\cdots t_{\bins}!} p_{1}^{t_{1}}\cdots p_{i}^{t_{j} - 1}\cdots p_{\bins}^{t_{\bins}} }\\
	& = \dfrac{t_{i}}{p_{i}} \dfrac{p_{j}}{t_{j}}\\
	& = \dfrac{t_{i}}{p_{i}(n-1)} \dfrac{p_{j}(n-1)}{t_{j}}\\
	&\tag*{By definition of $\setvar{B}$, $t_{i} > p_{i}(n-1) e^{\epsilon/2}$ or $t_{j} < p_{j}(n-1) e^{-\epsilon/2}$,}\\
	&\tag*{so $t\in \overline{\setvar{B}}$ has $t_{i} \leq t_{i}(n-1)e^{\epsilon/2}$ and $t_{j} \geq p_{j}(n-1) e^{-\epsilon/2}$}\\
	& \leq \dfrac{p_{i}(n-1)e^{\epsilon/2}}{p_{i}(n-1)} \dfrac{p_{j}(n-1)}{p_{j}(n-1) e^{-\epsilon/2}} = e^{\epsilon/2}\times e^{\epsilon/2} = e^{\epsilon}
	\end{align*}
	This proves Claim 1.\\
	
	\textit{Claim 2 proof:}
	Recall $\setvar{B}$ is the set of all histogram $t\in  \Natural^{\bins}$ where $t_{i} > p_{i}(n-1)e^{\epsilon/2}$ and $t_{j} < p_{j}(n-1)e^{-\epsilon / 2}$. For any $i\in \{1, \cdots, \bins\}$ let $\Hist(x, X_{-1})_{i}$ denote $i$th component of the random variable $\Hist(x, X_{-1})$.
	\begin{align*}
	\Pr&(\Hist(x_{i}, X_{-1}) \in \setvar{B}) \\
	&= \Pr\left(\Hist(x_{i}, X_{-1})_{i} > p_{i}(n-1)e^{\epsilon/2} \text{ or } \Hist(x_{i}, X_{-1})_{j} < p_{j}(n-1)e^{-\epsilon/2}\right)\\
	& \leq \Pr\left(\Hist(x_{i}, X_{-1})_{i} > p_{i}(n-1)e^{\epsilon/2}\right) + \Pr\left(\Hist(x_{i}, X_{-1})_{j} < p_{j}(n-1)e^{-\epsilon/2}\right) \tag{By union bound}\\
	& = \Pr\left(1 + \funvar{Bin}(n-1, p_{i}) > p_{i}(n-1)e^{\epsilon/2}\right) + Pr\left(\funvar{Bin}(n-1, p_{j}) < p_{j}(n-1)e^{-\epsilon/2}\right) \\
	& \tag{Where $\funvar{Bin}(n, p)$ denotes  binomial r.v. with $n$ trials and success probability $p$}\\
	& = \Pr(\funvar{Bin}(n-1, p_{i}) >  p_{i}(n-1)(e^{\epsilon/2} - (p_{i}(n-1))^{-1})) \\
	& \tab + \Pr(\funvar{Bin}(n-1, p_{j}) < p_{j}(n-1)e^{-\epsilon/2} )
	\end{align*}
	
	The random variable $\funvar{Bin}(n-1, p_{i})$ has mean $\mu = p_{i}(n-1)$. When
	$$
	2\ln(1+\frac{1}{p_{i}(n-1)}) < 2\ln(1+\frac{1}{p_{\min}(n-1)}) < \epsilon \leq 2\ln(2) < 2\ln(2 + \frac{1}{p_{i}(n-1)}) 
	$$ 
	we have $0 < \beta = e^{\epsilon/2} - (p_{i}(n-1))^{-1} - 1 < 1$. By Chernoff bound,
	
	\begin{align*}
	\Pr(\funvar{Bin}(n-1, p_{i})  > (1+ \beta)\mu &\leq e^{-\mu \beta^{2} / 3}\\
	&= \exp(-\Omega(p_{i}(n-1)(e^{\epsilon/2} - (p_{i}(n-1))^{-1} - 1)^{2}))\\
	&= \exp(-\Omega(p_{i}n\epsilon^{2}))
	\end{align*}
	
	The random variable $\funvar{Bin}(n-1, p_{j})$ has mean $\mu = p_{j}(n-1)$. By Chernoff bound, for any $0 < \beta = 1 - e^{-\epsilon/2} < 1$ (ie.  $\epsilon > 0$),
	
	\begin{align*}
	Pr(\funvar{Bin}(n-1, p_{j})  < (1-\beta) \mu) &\leq e^{-\mu \beta^{2} / 2}\\
	&= \exp(-\Omega(p_{j}(n-1)(1 - e^{-\epsilon/2})^{2}))\\
	&= \exp(-\Omega(p_{j}n\epsilon^{2}))
	\end{align*}
	
	So that:
	\begin{align*}
	\Pr(\Hist(x_{i}, X_{-1}) \in \setvar{B}) &\leq \Pr(\funvar{Bin}(n-1, p_{i}) >  p_{i}(n-1)(e^{\epsilon/2} - (p_{i}(n-1))^{-1})) \\
	& \tab + \Pr(\funvar{Bin}(n-1, p_{j}) < p_{j}(n-1)e^{-\epsilon/2} )\\
	&\leq \exp(-\Omega(p_{i}n\epsilon^{2})) + \exp(-\Omega(p_{j}n\epsilon^{2}))\\
	&\leq \exp(-\Omega(p_{\min}n\epsilon^{2})) = \delta
	\end{align*}
	
	for $2\ln(1+\frac{1}{p_{\min}(n-1)}) < \epsilon \leq 2\ln(2)$. To get rid of the upper bound on $\epsilon$, notice when $\epsilon = 2\ln(2)$, $\delta = \exp(-\Omega(p_{\min}n(2\ln(2))^{2}))$ suffices to satisfy the inequality
	{\small\begin{align*}
	\Pr(\Hist(x_{i}, X_{-1}) \in \setS) \leq e^{\epsilon}\Pr(\Hist(x_{j}, X_{-1}) \in \setS) + \delta
	\end{align*}}
	Thus, when $\epsilon > 2\ln(2)$, the same $\delta = \exp=(\Omega(np_{\min}[\min(2\ln(2), \epsilon)]^{2})) = \exp(-\Omega(p_{\min}n(2\ln(2))^{2}))$ also suffices, as a larger $\epsilon$ only makes the right hand side of the inequality larger.\\
	
	This proves Claim 2.

\end{proof}


The definition of distributional differential privacy, like differential privacy, is immune to post-processing. This means that if $\mech$ is $(\epsilon, \delta, \Delta)$-DDP, and $\funvar{f}$ is a function on the output of $\mech$, then $\funvar{f} \circ \mech$ (their composition) is also $(\epsilon, \delta, \Delta)$-DDP. Note that post-processing immunity is not a property of exact privacy, since exact privacy describes tight bounds on $\epsilon, \delta$.

\begin{lemma}[Immunity to Post-processing]\label{lem:foM}
	Suppose  $\mech:\universe^{*}\rightarrow\rangeM$  is $(\epsilon, \delta, \Delta)$-\altDDP. Let $\funvar{f}:\rangeM\rightarrow\rangeM'$ be a deterministic function. Then $\funvar{f}\circ\mech:\universe^{*}\rightarrow\rangeM'$ is also $(\epsilon, \delta, \Delta)$-\altDDP.
\end{lemma}
\begin{proof}
	For any $\pi\in \Delta$, $x, x'\in \Supp{X_{i}}$ and $\setvar{S}\subseteq \rangeM'$, let $\setvar{W} = \{w\ :\ \funvar{f}(w) \in \setvar{S}\}$. Then
	\begin{align*}
	&\Pr_{\vec X \sim \pi}(\funvar{f}(\mech(\vec X)) \in \setvar{S}\ |\ X_{i} = x)\\
	&= \Pr(\mech(\vec X)\in \setvar{W} \ |\ X_{i} = x) \tag{By definition of $\setvar{W}$}\\
	&\leq e^{\epsilon} \Pr(\mech(\vec X)\in \setvar{W} \ |\ X_{i} = x') + \delta \tag{By $\mech$ being $(\epsilon, \delta, \Delta)$-DDP}\\
	&= e^{\epsilon} \Pr(\funvar{f}(\mech(\vec X)) \in \setvar{S}\ |\ X_{i} = x) + \delta \tag{By definition of $\setvar{W}$}
	\end{align*}
\end{proof}

By post-processing immunity, the parameters proven in Theorem~\ref{thm:epsdelta-histogram-c} also apply to functions whose outputs are based on the histogram of the database, such as most voting rules.

\section{Exact Privacy of Voting Rules: Two Candidate-Case (Cont'd)}

\subsection{Proof of Lemma~\ref{lem:trail}: Trails Technique}\label{sec:trailproof}

\paragraph{Lemma~\ref{lem:trail} (Trails)}
{\em	Let $\mt$ be a trail with direction $(j,k)$, and let $\pi$ be a distribution where votes are independently distributed. For any $i$, $x_{j}, x_{k} \in \Supp{X_{i}}$,
	\begin{equation}\nonumber
	\begin{split}
	&\Pr_{\vec X\sim \pi}(\Hist(\vec X) \in \mt\ |\ X_{i} = x_{j}) -	\Pr_{\vec X\sim \pi}(\Hist(\vec X) \in \mt\ |\ X_{i} = x_{k}) \\
	=\;&\Pr_{\vec X\sim \pi}(\Hist(\vec X) = \ext{\mt}\ |\ X_{i} = x_{j}) -  \Pr_{\vec X\sim \pi}(\Hist(\vec X) =\ent{\mt}\ |\ X_{i} = x_{k})
	\end{split}
	\end{equation}
}

\begin{proof} [Proof for Lemma~\ref{lem:trail}]
    Fix distribution $\pi$ over $n$ votes, where each vote is independently distributed. For $\vec X\sim\pi$, denote $X_{-i}$ as the random variable $\vec X$ but without the $i$th vote.
	The equality in the lemma comes from the simple observation that when votes are independently distributed, for any histogram $t\in \Natural^{\bins}$ and any $j\in [\bins]$
	$$
	\Pr_{\vec X \sim \pi}(\Hist(\vec X) = t|X_{i} = x_{j}) = \Pr_{\vec X \sim \pi}(\Hist(X_{-i}) = t - x_{j})
	$$
	(Below, $\vec X \sim \pi$ is implicit). Let $q$ be the length of the trail. For any $0\leq z<q$, let $t_{z} = \ent{\mt}-zx_{j} + zx_{k}$. Then,
\begin{align*}
	&\Pr(\Hist(\vec X) = t_{z}|X_{i} = x_{j}) \\
	&= \Pr(\Hist(X_{-i}) =  t_{z} - x_{j})\\
	&=\Pr(\Hist(\vec X) =t_{z} - x_{j} + x_{k}|X_{i} = x_{k}) = \Pr(\Hist(\vec X) =t_{z+1}|X_{i} = x_{k})
\end{align*}
In other words,
\begin{align*}
&\Pr(\Hist(\vec X) \in \mt|X_{i} = x_{j}) -	\Pr(\Hist(\vec X) \in \mt|X_{i} = x_{k})\\
&= \Pr(\Hist(\vec X) = t_{q}|X_{i} = x_{j}) - \Pr(\Hist(\vec X) = t_{0}) \\
&\tab + \sum_{0\leq z<q} \Pr(\Hist(\vec X) = t_{z}|X_{i} = x_{j})	- \Pr(\Hist(\vec X) = t_{z+1}|X_{i} = x_{k})\\
&= \Pr(\Hist(\vec X) = t_{q}|X_{i} = x_{j}) - \Pr(\Hist(\vec X) = t_{0}|X_{i} = x_{k})\tag{Every term in the summation of differences cancels out.}\\
&= \Pr(\Hist(\vec X) = \ext{\mt}|X_{i} = x_{j}) - \Pr(\Hist(\vec X) =\ent{\mt}|X_{i} = x_{k})
\end{align*}
\end{proof}

\subsection{Full proof for Theorem~\ref{thm:nonunif-winner-2}: Biased Majority}\label{sec:fullproof_thm2}

\paragraph{Theorem~\ref{thm:nonunif-winner-2} (Exact DDP for Majority Rules)}
{\em Fix two candidates $\{a, b\}$ and $\Delta\subseteq \Pi(\{a, b\})$ with $|\Delta|<\infty$. For any $\alpha\in(0,1)$, the $\alpha$-biased majority rule is $(0,\delta,\Delta)$-eDDP for all $n$, where  $$\delta = \max_{p = \pi(a)\colon \pi\in\Delta}\Theta\left(\sqrt{\frac{1}{n}}\left[\left(\frac{p}{\alpha}\right)^{\alpha}\left(\frac{1-p}{1-\alpha}\right)^{1-\alpha} \right]^n\right).$$ 
In particular, $\delta = \Theta\left(\sqrt{1/n}\right)$ if $\exists \pi\in\Delta \text{ s.t. } \pi(a)=\alpha$; otherwise $\delta = \exp(-\Omega(n))$. }

\begin{proof}(Full proof for Theorem~\ref{thm:nonunif-winner-2}).

For any $\pi\in \Delta$, let $p = \pi(a)$. Let trails $\mt_a = \left\{t: t=(k, n-k) ,\, k \geq \alpha n\right\}$ and  $\mt_b= \left\{t: t=(k, n-k) ,\, k < \alpha n\right\}$. 
It follows that any histogram in $\mt_a$ results in $a$ being the winner, and any in $\mt_b$ results in $b$ as the winner. Also,  Equation (\ref{eqn:delta-hist}) implies we should {\em not} consider $\setS=\{a,b\}$ nor $\setS=\emptyset$ as otherwise $\delta = 0$ (the lower bound on $\delta$).  Thus, we only consider $\setS = \{a\}$ (when the winner is $a$, corresponding to trail $\mt_a$) or  $\setS = \{b\}$ (trail $\mt_b$). Then Equation (\ref{eqn:delta-hist}) becomes (we disregard the value of $i$ since votes are i.i.d.):
{\small
\begin{equation}\nonumber
\begin{split}
\delta &= \max_{j\in \{a, b\},x,x'} \left[\Pr\nolimits_{\vec X\sim \pi}(\Hist(\vec X)\in \mt_j | X_i = x) - \Pr\nolimits_{\vec X\sim \pi}(\Hist(\vec X)\in \mt_j | X_i = x')\right]\ \ \  \text{(Equation (\ref{eqn:delta-hist}))}\\
&= \max_{j\in\{a, b\},x,x'} \left[\Pr(\Hist(\vec X) = \ext{\mt_j} | X_i = x) - \Pr(\Hist(\vec X) = \ent{\mt_j} | X_i = x')\right]\ \ \  (\text{Lemma~\ref{lem:trail}})\\
\end{split}
\end{equation}}
We first discuss the case that $\setS =\{a\}$ where its corresponding trail $\mt_a$ starts at $\ent{\mt_a} = (n,0)$ and exits at $\ext{\mt_a} = (\lceil \alpha n \rceil, \lfloor (1-\alpha)n \rfloor)$. Here, $x = a$ and $x' = b$ maximize $\delta$. Thus, 

\begin{equation}\nonumber\Pr(\Hist(\vec X) = \ent{\mt_a} | X_i = b) = \Pr(\Hist(\vec X) = (n,0) | X_i = b) = 0
\end{equation}

and
\begin{equation}\nonumber
\begin{split}
&\Pr(\Hist(\vec X) = \ext{\mt_a} | X_i = a)\\
=\;& \Pr(\Hist(\vec X) = (\lceil \alpha n \rceil, \lfloor (1-\alpha)n \rfloor) | X_i = a)\\
=\;& \Pr(\Hist(\vec X) = (\lceil \alpha n \rceil-1, \lfloor (1-\alpha)n \rfloor))\\
=\;& p^{\lceil \alpha n \rceil-1}(1-p)^{\lfloor (1-\alpha)n \rfloor}\frac{(n-1)!}{\lceil \alpha n-1 \rceil!\cdot\lfloor (1-\alpha)n \rfloor!}\\
=\;& \Theta\left[\frac{1}{\sqrt{n}} \cdot\left(\frac{pn}{\lceil \alpha n-1 \rceil}\right)^{\lceil \alpha n-1 \rceil} \cdot\left(\frac{(1-p)n}{\lfloor (1-\alpha)n \rfloor}\right)^{\lfloor (1-\alpha)n \rfloor} \right] \;\;\;\text{(Stirling's formula)}\\
=\;& \Theta\left(\sqrt{\frac{1}{n}}\left[\left(\frac{p}{\alpha}\right)^{\alpha}\left(\frac{1-p}{1-\alpha}\right)^{1-\alpha} \right]^n\right)
\end{split}
\end{equation}

The case for $\setS =\{b\}$ is similar.  
We note that $\left(\frac{p}{\alpha}\right)^{\alpha}\left(\frac{1-p}{1-\alpha}\right)^{1-\alpha}\le 1$, and equality holds if and only if $p = \alpha$. Finally, we take the maximum of all $\delta$'s over $\pi\in\Delta$.
\end{proof}

\section{Exact Privacy of Voting Rules: General Case (Cont'd)}
In all proofs of this section, we will use $r$ instead of $\mech$ to denote GSR voting rules.

\subsection{Full proof for Theorem~\ref{thm:gsr}}\label{fullproof_theorem4}

\paragraph{Theorem~\ref{thm:gsr} (Dichotomy of  Exact DDP for GSR)}
{\em Fix $m\ge 2$ and $\Delta \subseteq \Pi({\mathcal L(C)})$ with $|\Delta|<\infty$. For any $n$, any GSR $\mech$ that satisfies monotonicity,  local stability, and canceling-out is $(0, \delta, \Delta)$-DDP, where $\delta$ is $\Theta(\sqrt{1/n})$ if $\Delta$ contains the uniform distribution over $\mathcal L(C)$, or $\exp(-\Omega(n))$ if $\Delta$ does not contain any unstable distribution. 
}

\begin{proof}[ Theorem~\ref{thm:gsr}, (Exact) DDP for GSR.] 

To present the result, we first introduce an equivalent definition of GSR that is similar to the ones used in~[Xia and Conitzer, 2009; Mossel {\em et al}., 2013].
\begin{definition}[The $(H,g_H)$ definition of GSR]
A GSR over $m$ candidates is defined by a set of hyperplanes $H=\{\vec h_1,\ldots,\vec h_{R}\}\subseteq {\mathbb R}^{m!}$ and a function $g_H:\{+,0,-\}^{|H|}\rightarrow C$. For any anonymous profile $\vec p\in {\mathbb R}^{m!}$, we let $H(\vec p)=(Sign(\vec h_1\cdot \vec p),\ldots, Sign(\vec h_{R}\cdot \vec p))$, where $Sign (x)$ is the sign ($+,-$ or $0$) of a number $x$. We let the winner be $g_{H}(H(\vec p))$.
\end{definition}
That is, to determine the winner, we first use each hyperplane in $H$ to classify the profile $\vec p$, to decide whether $\vec p$ is on the positive side ($+$), negative side ($-$), or is contained in the hyperplane ($0$). Then $g_H$ is used to choose the winner from $H(\vec p)$. We refer to this definition the $(H,g_H)$ definition. Also see Example~\ref{example:GSR} for how $(H,g_H)$ works. In the next claim, we show the equivalence of two definitions of GSR. 

\begin{myclaim}\label{claim:GSR_equivalence} The $(H,g_H)$ definition of GSR is equivalent to the $(f,g)$ definition of GSR in Definition~\ref{def:gsr}.\end{myclaim}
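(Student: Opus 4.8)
The plan is to establish Claim~\ref{claim:GSR_equivalence} by showing both directions of the equivalence, i.e., that any GSR expressible in the $(\funvar{f}, \funvar{g})$ form of Definition~\ref{def:gsr} can be expressed in the $(H, g_H)$ form, and conversely. The core observation tying the two definitions together is that both ultimately determine the winner from the pattern of pairwise comparisons among the components of a score vector: in the $(\funvar{f}, \funvar{g})$ definition the winner depends only on $\funvar{Ord}(\funvar{f}(P))$, the weak order on the $K$ coordinates, while in the $(H, g_H)$ definition the winner depends only on the sign vector $H(\vec p)$. So the whole proof reduces to showing that ``the weak order on $K$ coordinates of a linear score map'' and ``the sign pattern against a finite family of hyperplanes'' carry exactly the same information, and that the corresponding output functions can be translated into one another.

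For the forward direction (from $(\funvar{f}, \funvar{g})$ to $(H, g_H)$), I would start from a GSR given by $K$ and $\funvar{f}:\LinOrd{\mathcal C}\rightarrow \mathbb{R}^K$, $\funvar{g}$. Since $\funvar{f}(P)=\sum_{V\in P}\funvar{f}(V)$ is linear in the profile $\vec p\in\mathbb{R}^{m!}$, each coordinate $\funvar{f}(P)_a$ is a linear functional of $\vec p$, say $\funvar{f}(P)_a = \vec w_a\cdot \vec p$ for a fixed vector $\vec w_a$. The weak order $\funvar{Ord}(\funvar{f}(P))$ is completely determined by the signs of all pairwise differences $\funvar{f}(P)_a-\funvar{f}(P)_b=(\vec w_a-\vec w_b)\cdot\vec p$. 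Hence I would take the hyperplane family $H=\{\vec w_a-\vec w_b : 1\le a<b\le K\}$, so that $H(\vec p)$ records exactly these signs; the weak order $\funvar{Ord}(\funvar{f}(P))$ is then a deterministic function of $H(\vec p)$. Composing $\funvar{g}$ with this decoding gives the required $g_H:\{+,0,-\}^{|H|}\rightarrow C$, and by construction $g_H(H(\vec p))=\funvar{g}(\funvar{Ord}(\funvar{f}(P)))$ for every profile, establishing this direction.

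For the converse direction (from $(H, g_H)$ to $(\funvar{f}, \funvar{g})$), I would set $K=|H|=R$ and define $\funvar{f}(V)=(\vec h_1\cdot e_V,\ldots,\vec h_R\cdot e_V)$, where $e_V$ is the indicator profile of the single vote $V$; then linearity gives $\funvar{f}(P)=(\vec h_1\cdot\vec p,\ldots,\vec h_R\cdot\vec p)$, so that each coordinate of the generalized score vector is precisely the quantity whose sign $H(\vec p)$ records. The subtlety is that $\funvar{Ord}(\funvar{f}(P))$ encodes pairwise comparisons among coordinates rather than the sign of each coordinate against zero. I would resolve this by augmenting the score vector with a coordinate fixed to $0$ (e.g. append an extra component equal to $\vec 0\cdot\vec p = 0$, which one can realize by adding a dummy hyperplane), so that comparing each $\vec h_r\cdot\vec p$ against this zero coordinate recovers $Sign(\vec h_r\cdot\vec p)$ from the weak order $\funvar{Ord}(\funvar{f}(P))$. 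Then $g$ is obtained by composing $g_H$ with the map that extracts the sign pattern from the weak order, completing the equivalence.

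The main obstacle is this mismatch in what each representation natively encodes: $\funvar{Ord}$ sees only the relative order among coordinates (and is invariant under adding a common constant to all coordinates or under positive rescaling), whereas $H(\vec p)$ sees absolute signs against fixed hyperplanes. The forward direction is clean because pairwise sign differences are themselves linear functionals, so they can be absorbed into the hyperplane family without loss. The delicate part is the converse, where I must make absolute signs recoverable from a purely ordinal object; the zero-coordinate (dummy hyperplane) trick is the key device, and I would need to check carefully that it does not introduce spurious distinctions and that $g_H$'s domain $\{+,0,-\}^{|H|}$ maps consistently. I expect verifying that the decoding maps are well defined on the relevant (achievable) sign patterns and weak orders to be the only place requiring care; the rest is bookkeeping enabled by linearity.
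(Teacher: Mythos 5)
Your proposal is correct and matches the paper's own proof essentially step for step: the $(f,g)\to(H,g_H)$ direction via the pairwise-difference hyperplanes $([f(V)]_{k_1}-[f(V)]_{k_2})_{V\in\mathcal{L}(\mathcal{C})}$, and the $(H,g_H)\to(f,g)$ direction via the score vector $(\vec h_1\cdot \vec e_V,\ldots,\vec h_R\cdot \vec e_V,0)$ with the appended zero coordinate serving as the reference for recovering each $\text{Sign}(\vec h_k\cdot\vec p)$. The well-definedness concern you flag at the end is harmless, since the output functions may be defined arbitrarily on unachievable sign patterns or weak orders.
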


\begin{proof} [Proof for Claim~\ref{claim:GSR_equivalence}]
We first show that any $(H,g_H)$ GSR can be represented by a $(f,g)$ GSR in the following way: for each ranking $V$, we let $f(V)=(\vec h_1\cdot \vec e_V, h_2\cdot \vec e_V,\ldots, \vec h_R\cdot \vec e_V,0)$. Then, the $g$ function mimics $g_H$ by only focusing on orderings between the $k$th component of $f(P)$ and the last component, which is always $0$, for all $k\le R$. More precisely, ordering between the $k$th component  of $f(P)$ and $0$ uniquely determines $Sign(\vec h_k\cdot \vec p)$.

We now prove that any $(f,g)$ GSR can be represented by an $(H,g_H)$ GSR. For any pair of distinct component $k_1,k_2\le K$, we introduce a hyperplane $\vec h_{k_1,k_2}=([f(V)]_{k_1}-[f(V)]_{k_2})_{V\in L(C)}$. Therefore, for any profile $\vec p$, $\vec h_{k_1,k_2}\cdot \vec p=[f(\vec p)]_{k_1}-[f(\vec p)]_{k_2}$. The sign of $\vec h_{k_1,k_2}\cdot \vec p$ corresponds to the order between $[f(\vec p)]_{k_1}$ and $[f(\vec p)]_{k_2}$.  Then, $g_H$ mimics $g$.
\end{proof}

We are now ready to present our theorem on GSRs. We will characterize eDDP under uniform distribution and give an exponential upper bound on DDP under some other distributions.
For any pair of $\vec \pi$ and $\vec h$, we let  $\Dist(\vec \pi,\vec h)= \frac{\vec \pi \cdot \vec h}{||\vec h||_2}$ to denote the distance between hyperplane $\vec h\cdot\vec p = 0$ and vector $\vec \pi$.

	We first show that w.l.o.g.~we can assume that all hyperplanes in $H$ passes $\vec 1$.
	
	\begin{lemma} \label{lem:gsrcancel}A GSR satisfies canceling-out, if and only if there exists another equivalent GSR $r = (H, g_H)$, where all hyperplanes passes $\vec 1$.
	\end{lemma}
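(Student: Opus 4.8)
The plan is to prove both directions by translating the combinatorial statement of canceling-out into the algebraic condition $\vec h\cdot\vec 1=0$ on the hyperplanes. First I would fix notation: since a profile is identified with its anonymized histogram $\vec p\in\mathbb{R}^{m!}$ with one coordinate per ranking, adding one copy of every ranking corresponds exactly to adding the all-ones vector $\vec 1$. Hence canceling-out is the statement $r(\vec p)=r(\vec p+\vec 1)$ for every profile $\vec p$, and by iterating it, $r(\vec p)=r(\vec p+k\vec 1)$ for every integer $k\ge 0$. Likewise, a hyperplane $\vec h$ passes through $\vec 1$ precisely when $\vec h\cdot\vec 1=0$.

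For the (easy) ``if'' direction, suppose $r$ has an equivalent representation $(H,g_H)$ in which $\vec h_i\cdot\vec 1=0$ for all $i$. Then for every $\vec p$ we have $\vec h_i\cdot(\vec p+\vec 1)=\vec h_i\cdot\vec p+\vec h_i\cdot\vec 1=\vec h_i\cdot\vec p$, so $H(\vec p+\vec 1)=H(\vec p)$ and therefore $r(\vec p+\vec 1)=g_H(H(\vec p+\vec 1))=g_H(H(\vec p))=r(\vec p)$. Since this representation computes the same winner as the original rule on every profile, the original rule also satisfies canceling-out.

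The main work is the ``only if'' direction: assuming canceling-out, I would construct an equivalent representation whose hyperplanes all pass through $\vec 1$. Partition $H=H_0\sqcup H_1$, where $H_0=\{\vec h_i:\vec h_i\cdot\vec 1=0\}$ and $H_1=\{\vec h_i:\vec h_i\cdot\vec 1\ne 0\}$. The key observation is a stabilization property: for $\vec h_i\in H_1$ we have $\vec h_i\cdot(\vec p+k\vec 1)=\vec h_i\cdot\vec p+k(\vec h_i\cdot\vec 1)$, so once $k$ exceeds $|\vec h_i\cdot\vec p|/|\vec h_i\cdot\vec 1|$ the sign $\text{Sign}(\vec h_i\cdot(\vec p+k\vec 1))$ equals the constant $\text{Sign}(\vec h_i\cdot\vec 1)$, independent of $\vec p$. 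Fixing a profile $\vec p$ and choosing $k$ large enough that this holds simultaneously for every $\vec h_i\in H_1$, canceling-out gives
\[
r(\vec p)=r(\vec p+k\vec 1)=g_H\big((\text{Sign}(\vec h_i\cdot\vec p))_{\vec h_i\in H_0},\,(\text{Sign}(\vec h_i\cdot\vec 1))_{\vec h_i\in H_1}\big),
\]
where I used that the $H_0$-signs are unchanged under adding $\vec 1$. Since the plugged-in $H_1$-signs are the same constants for every $\vec p$, the winner depends only on the $H_0$-signs. I would therefore set $H'=H_0$ and define $g_{H'}\big((\sigma_i)_{\vec h_i\in H_0}\big)=g_H\big((\sigma_i)_{\vec h_i\in H_0},(\text{Sign}(\vec h_i\cdot\vec 1))_{\vec h_i\in H_1}\big)$; by the displayed identity $(H',g_{H'})$ outputs the same winner as $r$ on every profile and all its hyperplanes pass through $\vec 1$.

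The step I expect to be the main obstacle is exactly this reduction to $H_0$: the naive fix of orthogonally projecting each $\vec h_i$ onto the complement of $\vec 1$ fails, because $\vec h_i'\cdot\vec p=\vec h_i\cdot\vec p-\tfrac{\vec h_i\cdot\vec 1}{m!}\,n$ introduces a shift depending on the number of votes $n=\vec 1\cdot\vec p$ and so does not preserve signs. The stabilization argument sidesteps this by using canceling-out to push each profile far along the $\vec 1$ direction, where the ``bad'' hyperplanes in $H_1$ contribute only fixed signs that can be absorbed into $g_{H'}$. Two minor technicalities remain: equivalence need only be checked on nonnegative integer profiles, which is exactly the regime where canceling-out applies; and the degenerate case $H_0=\emptyset$, where the rule is constant and one can insert a single trivial hyperplane through $\vec 1$ (e.g.\ $\vec h=\vec 0$) to satisfy the formal definition.
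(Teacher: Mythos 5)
Your proof is correct and uses essentially the same argument as the paper: canceling-out lets you translate any profile by $k\vec 1$, which for large $k$ freezes the sign of every hyperplane with $\vec h\cdot\vec 1\neq 0$ at $\mathrm{Sign}(\vec h\cdot\vec 1)$ while leaving the signs of hyperplanes through $\vec 1$ unchanged, so the rule cannot depend on the former. The paper phrases this as a proof by contradiction with two profiles $P,Q$, whereas you state it constructively and absorb the frozen signs into $g_{H'}$, but the underlying stabilization idea is identical.
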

	\begin{proof}
		The ``if'' direction is straightforward. To prove the ``only if'' part, it suffices to prove that $g_H$ does not depend on outcomes of hyperplanes in $H$ that does not pass $\vec 1$. W.l.o.g.~let $\vec h_1\in H$ denote the hyperplane that does not pass $\vec  1$, that is, $\vec h\cdot \vec 1\ne 0$. We will prove that for any $\vec u_{-1}\in \{-1,0,1\}^{L-1}$ and any $u_1,u_1'\in \{-1,0,1\}$, such that there exist profiles $P,Q$ with $H(P)=(u_1,\vec u_{-1})$ and $H(Q)=(u_1',\vec u_{-1})$, we have $g_H(u_1,\vec u_{-1})=g_H(u_1',\vec u_{-1})$.
		
		For the sake of contradiction, suppose this does not hold and let $P,Q$ be the profiles such that $H(P)$ and $H(Q)$ differ on the first coordinate, and $r(P)\ne r(Q)$. Then, for sufficiently large $n$ we have that $H(P+n\ml(\mc))=H(Q+n\ml(\mc))$. This is because for any $\vec h\in H$ that passes $\vec 1$, we have $\vec h\cdot (P+n\ml(\mc))=\vec h\cdot P=\vec h\cdot (Q+n\ml(\mc))$. For any $\vec h\in H$ that does not pass $\vec 1$, we have $\vec h\cdot (P+n\ml(\mc))=\vec h\cdot P+n\vec h\cdot 1$, and when $n$ is sufficiently large, the sign of $\vec h\cdot (P+n\ml(\mc))$ is the same as the sign of $n\vec h\cdot 1$, which is the sign of $\vec h\cdot (Q+n\ml(\mc))$. This means that $\sign(\vec h\cdot P)=\sign(\vec h\cdot (P+n\ml(\mc)))=\sign(\vec h\cdot (Q+n\ml(\mc)))=\sign(\vec h\cdot Q)$, which is a contradiction.
	\end{proof}

	Let $r$ be a GSR, $P^*$ be the locally stable profile and $a$ be the candidate, $V,W$ be the rankings as in the statement of Definition~\ref{def:gsr-subset-properties}. W.l.o.g.~suppose $V$ is the first type ranking and $W$ is the second type ranking. In other words, $V$ (respectively, $W$) is the first (respectively, second) coordinate in the $m$-profiles space. We will show that the exact DDP bound is achieved when $S$ is the set of all profiles where the winner is $a$.
	
	We recall that for any profile $P$, a pair of different votes $V,W$. and a length $q\in\mathbb N$, $\trail{P}{V}{W}{q}$ is the trail starting at $P$, going along the $V-W$ direction, and contains $q$ profiles. We let $\trail{P}{V}{W}{\infty}=\max_q\trail{P}{V}{W}{q}$ denote the longest $V-W$ trail starting at $P$. For a GSR $r$, we define $\ed(a) = \left\{\ext{\trail{P}{V}{W}{\infty}}: \forall\,V,W\in\universe,\; r(P) = a\right\}$. In other words, there are no $W$ votes in $\ed(a)$.
	
	\begin{figure}[ht]
		\centering
		\includegraphics[width=\textwidth]{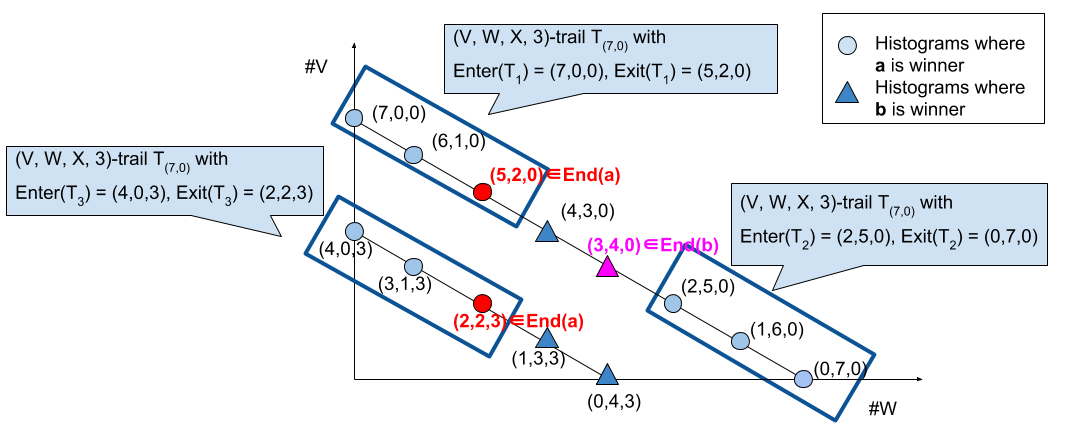}
		\caption{Example of $\ed(a)$ and $\ed(b)$, for 3-candidate case. The 3 kinds of votes other than $V,W$ and $X$ are not shown to simplify notations. Number of unshown votes are considered as constant.}
		\label{fig:end-trail}
	\end{figure}
	

	Because $r$ satisfies monotonicity, for any profile $P$ such that $r(P)=a$, we must have that $a$ is the winner under all profiles in the $V$-$W$ trail starting at $P$. Therefore, $S$ can be partitioned into multiple non-overlapping trails, each of which starts at a different profile, where $a$ is the  winner, and $a$ is no longer the winner if we go one step into the $W$-$V$ direction. Formally, we let $\ed(a)$ (shown in Figure~\ref{fig:end-trail}) denote all $n$-profiles $P$ such that (1) $r(P)=a$ and (2) $r(P+W-V)\ne a$. Then, we define a partition $\setS_a$ as follows.

	$$\setS_a = \left\{P: r(P) = a\right\} = \bigcup_{P\in \ed(a)} \trail{P}{V}{W}{\infty}$$
	It follows from Lemma~\ref{lem:trail} that
	$$\Pr(P\in \setS_a | X_1=V)-\Pr( P\in \setS_a | X_1=W)=\sum_{P\in \ed(a): P(V)>0}\Pr(P-V).$$

	We will define a subset of $n$-profile, $\mr_n$ and prove the lower bound on it. For a locally stable profile $P^*$ (with constant $\gamma$ in the statement of Definition~\ref{def:gsr-subset-properties}), let $\vec p_0=P^*-\vec 1\cdot \frac{ |P^*|}{m!}$. That is, $\vec p_0$ be obtained from $P^*$ by subtracting a constant in each component, such that $\vec p_0\cdot \vec 1=0$. For any $n$, we define $\mr_n$ to be the set of $n$-profiles that are in the $\gamma \sqrt{n}$ neighborhood of ${\frac{n}{m!}}\cdot\vec 1 +\vec p_0\cdot \sqrt{n}$ w.r.t.~$L_\infty$ norm for last $m!-2$ dimensions. That is,
	$$\mr_n=\left\{P:P[V]=0\text{ and }\forall j\ge 3, \left|P[j]-\left(\frac{n}{m!}+\vec p_0[j]\cdot \sqrt{n}\right)\right|\le \gamma \sqrt{n}\right\}$$

	Throughout the proof in Theorem~\ref{thm:gsr}, we will use $\vec \pi$ to denote the database distribution $D$, and $\pi[j]$ denote the probability of $j$-th kind of ranking. Here $P[V]$ is the number of $V$ votes in $P$ and $P[j]$ is the number of $j$-th type of vote in $P$. For any $P\in \mr_n$, we let $\piv(P)=\ed(a)\cap \trail{P}{V}{W}{\infty}$ denote the intersection of $\ed(a)$ and the $V$-$W$ trail starting at $P$. That is, $\piv(P)=P+l (V-W)$ for some $l\in \mathbb Z$, $r(\piv(P))=a$, and $r(\piv(P)-V+W)\ne a$.

	We next prove that the number of $V$ votes in $\piv(P)$ and the number of $W$ votes in $\piv(P)$ are close---the difference is $O(\sqrt{n})$.
	\begin{myclaim}\label{claim:pivdiff}
		For any $P\in \mr_n$, we have $|\piv(P)[V]-\piv(P)[W]|=O(\sqrt{n})$.
	\end{myclaim}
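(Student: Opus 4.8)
The plan is to prove that the pivot $\piv(P)$ sits within $O(\sqrt n)$ of the ``balanced'' profile, where the $V$- and $W$-coordinates are each $\tfrac{n}{m!}+O(\sqrt n)$; the claimed bound is then immediate. Two ingredients established earlier do the heavy lifting: by Lemma~\ref{lem:gsrcancel} (which uses canceling-out) we may assume every hyperplane $\vec h\in H$ passes through $\vec 1$, i.e.\ $\vec h\cdot\vec 1=0$, and by construction $\vec p_0\cdot\vec 1=0$.

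First I would rewrite a profile $P\in\mr_n$ in a convenient form. By definition $P[V]=0$ and, for every $j\ge 3$, $P[j]=\tfrac{n}{m!}+\vec p_0[j]\sqrt n\pm O(\sqrt n)$. Combining $\sum_j P[j]=n$ with $\vec p_0\cdot\vec 1=0$ pins down the last coordinate as $P[W]=\tfrac{2n}{m!}+O(\sqrt n)$. Collecting terms (treating $V,W$ as the corresponding indicator vectors) yields
\[
P=\frac{n}{m!}\,\vec 1+\sqrt n\,\vec p_0+\frac{n}{m!}\,(W-V)+\vec r,\qquad \|\vec r\|_\infty=O(\sqrt n),
\]
so that, writing $l=\piv(P)[V]$ and using $\piv(P)=P+l(V-W)$,
\[
\piv(P)=\frac{n}{m!}\,\vec 1+\sqrt n\,\vec p_0+\Big(\frac{n}{m!}-l\Big)(W-V)+\vec r.
\]

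Next I would identify the hyperplane forcing the pivot. Since $r(\piv(P))=a$ but $r(\piv(P)-V+W)\ne a$, some $\vec h\in H$ flips sign between these two neighbouring profiles; as they differ by the fixed vector $V-W$, this simultaneously guarantees that $\vec h\cdot(V-W)$ is a nonzero constant (independent of $n$) and that $|\vec h\cdot\piv(P)|\le|\vec h\cdot(V-W)|=O(1)$. Applying $\vec h$ to the displayed expression for $\piv(P)$, the $\tfrac{n}{m!}\vec h\cdot\vec 1$ term drops out (all hyperplanes pass through $\vec 1$), while $\sqrt n\,\vec h\cdot\vec p_0$ and $\vec h\cdot\vec r$ are each $O(\sqrt n)$ because $\vec h$ and $\vec p_0$ are fixed. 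Hence
\[
O(1)=\vec h\cdot\piv(P)=\Big(\frac{n}{m!}-l\Big)\,\vec h\cdot(W-V)+O(\sqrt n),
\]
and dividing by the nonzero constant $\vec h\cdot(W-V)$ gives $l=\tfrac{n}{m!}+O(\sqrt n)$. Finally, $\piv(P)[V]-\piv(P)[W]=l-(P[W]-l)=2l-\tfrac{2n}{m!}+O(\sqrt n)=O(\sqrt n)$, as claimed.

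The main obstacle is the middle step: one must argue cleanly that the winner-change at the pivot is witnessed by a hyperplane that is genuinely crossed along the $V$-$W$ direction, so that $\vec h\cdot(W-V)$ is a nonzero $O(1)$ constant while $\vec h\cdot\piv(P)=O(1)$. Everything else is bookkeeping: the role of canceling-out is precisely to cancel the $\Theta(n)$ contribution along $\vec 1$, leaving only the $O(\sqrt n)$ terms, and the structure of $\mr_n$ together with $\vec p_0\cdot\vec 1=0$ supplies the remaining estimates.
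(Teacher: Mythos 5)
Your argument is correct and follows the paper's own route: locate the hyperplane $\vec h\in H$ whose sign flips between $\piv(P)$ and $\piv(P)-V+W$, use canceling-out (Lemma~\ref{lem:gsrcancel}) so that $\vec h\cdot\vec 1=0$ kills the drift along $\vec 1$, and solve the resulting linear equation for $l$. If anything your bookkeeping is tighter than the paper's one-line computation, which asserts $\vec h\cdot(P-\vec 1\cdot\frac{n}{m!})=O(\sqrt n)$ and identifies $|l|$ with $|\piv(P)[V]-\piv(P)[W]|$ without separating out the $\Theta(n)$ contributions of the $V$ and $W$ coordinates (recall $P[V]=0$ while $P[W]=\frac{2n}{m!}+O(\sqrt n)$); your explicit $\frac{n}{m!}(W-V)$ term and the concluding step $\piv(P)[V]-\piv(P)[W]=2l-P[W]=O(\sqrt n)$ make that implicit cancellation precise.
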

	\begin{proof} Let $Q^+=\piv(P)$ and $Q^-=\piv(P)-V+W$. We note that $\piv(P)$ is at the boundary of $S$, which means that $r(Q^+)\ne r(Q^-)$. Therefore, because $r$ is a GSR, the line segment between $Q^+$ and $Q^-$ must contain the intersection of $\trail{P}{V}{W}{\infty}$ and a hyperplane $\vec h\in H$. Therefore, it suffices to show that the difference in number of $V$ votes and number of $W$ votes at the intersection of $\trail{P}{V}{W}{\infty}$ and any hyperplane $\vec h$ is $O(\sqrt{n})$.
		
		We recall that by Lemma~\ref{lem:gsrcancel}, all hyperplanes for $r$ pass $\vec 1$. For any $\vec h\in H$, we recall that we assumed that $V$ and $W$ corresponds to the first and second coordinate, respectively. Because $\vec h\cdot (P+l(V-W))=0$, we have $(h_2-h_1)l=\vec h\cdot P=\vec h\cdot (P-\vec 1\cdot {\frac{n}{m!}})=O(\sqrt{n})$. This means that $|l|=|\piv(P)[V]-\piv(P)[W]|=O(\sqrt{n})$.
	\end{proof}

	\begin{myclaim} For any $P\in \mr_n$, there is a $V$-$W$ trail passing $P$.
	\end{myclaim}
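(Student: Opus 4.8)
The plan is to reduce the claim to the single arithmetic fact that the $W$-coordinate of $P$ is strictly positive, and then exhibit the trail explicitly. Recall that a $V$-$W$ trail is a set of profiles obtained by repeatedly moving one vote between the $V$-bin and the $W$-bin while holding every other coordinate fixed. The defining condition of $\mr_n$ forces $P[V]=0$, so from $P$ the only admissible direction of movement is to convert $W$-votes into $V$-votes; consequently the existence of a nontrivial $V$-$W$ trail through $P$ is \emph{exactly} the statement that $P[W]\ge 1$. So the whole claim comes down to bounding $P[W]$ below.

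First I would compute $P[W]$ from the membership condition of $\mr_n$. Since $P$ is an $n$-profile we have $\sum_{j}P[j]=n$, and because $P[V]=P[1]=0$ this gives $P[W]=P[2]=n-\sum_{j\ge 3}P[j]$. The constraint defining $\mr_n$ bounds each of the $m!-2$ coordinates with $j\ge 3$ by $P[j]\le \tfrac{n}{m!}+\vec p_0[j]\sqrt n+\gamma\sqrt n$, so summing yields $\sum_{j\ge 3}P[j]\le (m!-2)\tfrac{n}{m!}+O(\sqrt n)$, where the hidden constant depends only on the fixed quantities $\gamma$, $m$, and $\vec p_0$. Substituting, $P[W]\ge n-(m!-2)\tfrac{n}{m!}-O(\sqrt n)=\tfrac{2n}{m!}-O(\sqrt n)$, which is $\Theta(n)$ and in particular at least $1$ once $n$ exceeds a fixed threshold.

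With $P[W]\ge 1$ established, I would produce the trail directly as $\{P+z(V-W):0\le z\le P[W]\}$: moving a vote from the $W$-bin to the $V$-bin preserves the total count $n$ and keeps every coordinate a nonnegative integer (only the $V$- and $W$-coordinates change, ranging over $[0,P[W]]$), so each element is a legitimate $n$-profile. This is a $V$-$W$ trail containing $P$, with $P$ sitting at the $V$-free end; taking it maximal, the opposite endpoint $P+P[W](V-W)$ is free of $W$-votes, which is the form used when these trails partition $\setS_a$. I do not anticipate a genuine obstacle, since the content is purely combinatorial; the only point requiring care is the bookkeeping that the $\Theta(\sqrt n)$ fluctuations allowed in the last $m!-2$ coordinates of $\mr_n$ cannot swamp the $\Theta(n)$ gap guaranteeing $P[W]>0$, and that this positivity (hence the claim) holds for all sufficiently large $n$, which suffices for the asymptotic argument that follows.
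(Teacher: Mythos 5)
There is a genuine gap: you have proved a much weaker statement than the one the claim is actually carrying in the argument. Your proof establishes only the combinatorial fact that $P[W]\ge 1$ for large $n$, so that the set $\{P+z(V-W):0\le z\le P[W]\}$ is a nonempty collection of legitimate $n$-profiles. But the role of this claim in the surrounding proof is to guarantee that $\piv(P)=\ed(a)\cap \trail{P}{V}{W}{\infty}$ is well-defined for every $P\in\mr_n$, i.e., that the trail through $P$ actually \emph{crosses the decision boundary} of the rule $r$: it must contain a profile $Q$ with $r(Q)=a$ and $r(Q-V+W)\ne a$. Without this, the key lower bound $\sum_{P\in\ed(a)}\Pr(P-V)\ge\sum_{P\in\mr_n}\Pr(\piv(P)-V)$ collapses, since the summands on the right would be undefined (or the trail through $P$ might lie entirely outside $\setS_a$, contributing nothing). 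Mere nonemptiness of the trail says nothing about the behavior of $r$ along it.

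The paper's proof is driven by exactly the two hypotheses your argument never touches. Using the canceling-out property, $P$ is replaced by an equivalent profile $P'$ whose rescaling $P'/\sqrt{n}$ lies in the $\gamma$-neighborhood of the locally stable profile $P^*$ (this is where the defining window $\left|P[j]-(\tfrac{n}{m!}+\vec p_0[j]\sqrt n)\right|\le\gamma\sqrt n$ of $\mr_n$ is actually used). Local stability of $P^*$ then gives that $r(P')\ne a$ but the winner becomes $a$ once all $W$ votes are converted to $V$ votes; combined with monotonicity this forces a unique transition point on the trail, which is the pivotal profile $\piv(P)\in\ed(a)$. Your positivity computation $P[W]=\tfrac{2n}{m!}-O(\sqrt n)$ is correct and is indeed needed elsewhere (it is essentially Claim~\ref{claim:pivsum}), but as a proof of this claim it substitutes an arithmetic triviality for the one step in the lower-bound argument where the voting-rule structure --- canceling-out and local stability --- must be invoked.
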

	\begin{proof}
		According to the canceling out property of $r$, we can construct profile $P' = P-\frac{n-\left|P^*\right|\sqrt{n}}{m!}$, which is equivalent to $P$. For any profile $P \in \mr_n$, we have $\left|P[j]-\left(\frac{n}{m!}+\vec p_0[j]\cdot \sqrt{n}\right)\right|\le \gamma \sqrt{n}$, which is equivalent with $\left|P'[j]-P^*[j]\cdot\sqrt{n}\right|\le \gamma\sqrt{n}$, which means $\frac{P'}{\sqrt{n}}$ is in the $\gamma$ neighborhood of profile $P^*$ in terms of the $3$-rd to $m!$-th dimensions. According to the $(H,g_H)$ definition of GSR, we know $r(P^*) = r(P')$ and the claim follows by local stability of $P^*$.
	\end{proof}
	
	We will show that the probability of a subset of $\ed(a)$---the pivotal profiles on trails starting at profiles in $\mr_n$---is $\constant{1/\sqrt n}$ for the condition that $\pi$ is uniform over $\universe$.
	Let $\mr_n^-\subseteq {\mathbb R}^{m!-2}$ and for any $\vec p_-\in \mr_n^-$, we define $\piv(\vec p_-)=\piv(P)$, where $P\in \mr_n$ and $P[3,\ldots, m!]=\vec p_-$.
	%
	{\footnotesize
	\begin{align*}
	&\sum_{P\in \ed(a)}\Pr(P-V)\geq\sum_{P\in\mr_n}\Pr(\piv(P)-V)\\
	=&\sum_{\vec p_-\in \mr_n^-, |P|=n-1}\Big(\Pr(P[3,...,m!]=\vec p_-)\cdot\\
	&\;\;\;\;\;\;\;\;\;\;\;\;\;\;\;\;\;\;\;\;\;\;\;\;\Pr(P[1]=\piv(\vec p_-)[1]-1, \Pr(P[2]=\piv(\vec p_-)[2]|P[3,...,m!]=\vec p_-)\Big)\\
	=&\sum_{\vec p_-\in \mr_n^-, |P|=n-1}A(\vec p_-)B(\vec p_-)
	\end{align*}}
	where $A(\vec p_-)=\Pr(P[3,\ldots,m!]=\vec p_-)$ and $$B(\vec p_-)=\Pr(P[1]=\piv(\vec p_-)[1]-1, \Pr(P[2]=\piv(\vec p_-)[2]|P[3,\ldots,m!]=\vec p_-)$$
	It follows that $B(\vec p_-)$ is equivalent to probability of flipping a coin ($\frac{\pi[W]}{\pi[V]+\pi[W]}$ probability for head) for $\piv(\vec p_-)[1]+\piv(\vec p_-)[2]-1$ times,  with $\piv(\vec p_-)[1]-1$ heads and $\piv(\vec p_-)[2]$ tails. The next lemma gives a lower bound to $\sum_{\vec p_-\in \mr_n^-, |P|=n-1}A(\vec p_-)B(\vec p_-)$ when $\pi$ is a uniform distribution.

	\begin{lemma}\label{lemma:unifGSR} $\sum_{\vec p_-\in \mr_n^-, |P|=n-1}A(\vec p_-)B(\vec p_-) = \Omega\left(\frac{1}{\sqrt{n}}\right)$ if $\pi$ is uniform over $\universe$.
	\end{lemma}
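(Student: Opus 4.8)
The plan is to lower-bound the sum by restricting it to $\vec p_-\in \mr_n^-$ and then factoring it as
$$\sum_{\vec p_-\in \mr_n^-,\,|P|=n-1}A(\vec p_-)B(\vec p_-)\ \geq\ \Big(\min_{\vec p_-\in \mr_n^-}B(\vec p_-)\Big)\cdot\Big(\sum_{\vec p_-\in \mr_n^-}A(\vec p_-)\Big),$$
which is valid since every term is nonnegative. I would then show the first factor is $\Omega\!\left(1/\sqrt n\right)$ and the second is $\Omega(1)$, which together give the claim. Recall $A(\vec p_-)$ is the marginal probability that the $(n-1)$-trial multinomial places coordinates $3,\ldots,m!$ at $\vec p_-$, and $B(\vec p_-)$ is, as noted just above the lemma, the probability that a $\funvar{Bin}\big(s,\tfrac{\pi[W]}{\pi[V]+\pi[W]}\big)$ variable with $s=\piv(\vec p_-)[1]+\piv(\vec p_-)[2]-1$ trials lands on the split $(\piv(\vec p_-)[1]-1,\piv(\vec p_-)[2])$; under the uniform distribution $\pi[V]=\pi[W]=1/m!$, so this binomial is fair.

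For the first factor, I would first note that for every $\vec p_-\in \mr_n^-$ the total mass in coordinates $3,\ldots,m!$ is $|\vec p_-|=\tfrac{(m!-2)n}{m!}+O(\sqrt n)$, so the number of $V$-or-$W$ votes is $s=(n-1)-|\vec p_-|=\tfrac{2n}{m!}+O(\sqrt n)=\Theta(n)$. By Claim~\ref{claim:pivdiff}, $|\piv(\vec p_-)[1]-\piv(\vec p_-)[2]|=O(\sqrt n)$, and since $(\piv(\vec p_-)[1]-1)+\piv(\vec p_-)[2]=s$, the target split lies within $O(\sqrt n)=O(\sqrt s)$ of the center $s/2$ of the fair binomial. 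A Stirling (local-CLT) estimate for the central binomial then gives $\Pr(\funvar{Bin}(s,1/2)=s/2+d)=\Theta\!\left(\tfrac{1}{\sqrt s}\,e^{-2d^2/s}\right)$, and since $d=O(\sqrt s)$ the exponential factor is $\Theta(1)$; hence $B(\vec p_-)=\Theta(1/\sqrt n)$ uniformly over the box, giving $\min_{\vec p_-\in \mr_n^-}B(\vec p_-)=\Omega(1/\sqrt n)$.

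For the second factor, I would invoke the multivariate central limit theorem for the multinomial: the rescaled vector $\tfrac{1}{\sqrt n}\big((\#x_3,\ldots,\#x_{m!})-\tfrac{n}{m!}\vec 1\big)$ converges in distribution to a non-degenerate Gaussian. By definition, $\mr_n^-$ is the set where each coordinate $j\ge 3$ lies within $\gamma\sqrt n$ of $\tfrac{n}{m!}+\vec p_0[j]\sqrt n$; under this rescaling it becomes the fixed region $\{|\xi_j-\vec p_0[j]|\le\gamma\}$, which has strictly positive Gaussian measure. Hence $\sum_{\vec p_-\in \mr_n^-}A(\vec p_-)=\Pr\big((\#x_3,\ldots,\#x_{m!})\in \mr_n^-\big)\to c>0$, i.e.\ it is $\Omega(1)$, and combining the two factors yields the bound.

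The main obstacle is the uniform lower bound on $B$ over the entire box: it hinges on the pivot split never drifting more than $O(\sqrt n)$ from the binomial center, since a drift of $\omega(\sqrt n)$ would make the binomial probability super-polynomially small and destroy the estimate. This is exactly what Claim~\ref{claim:pivdiff} secures, using that all hyperplanes pass through $\vec 1$ (Lemma~\ref{lem:gsrcancel}) so that the pivot, lying on a hyperplane, has its first two coordinates nearly balanced. A secondary technical point is ensuring the constants in both the Stirling estimate and the CLT are genuinely uniform over $\vec p_-\in\mr_n^-$, so that interchanging them with the $\min$ and the sum is legitimate; this follows because $s=\Theta(n)$ and $d=O(\sqrt n)$ hold with absolute constants across the whole box.
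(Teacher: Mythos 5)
Your proposal is correct and follows essentially the same route as the paper: factor the sum as $\bigl(\min_{\vec p_-\in\mr_n^-}B(\vec p_-)\bigr)\cdot\sum_{\vec p_-\in\mr_n^-}A(\vec p_-)$, establish $B(\vec p_-)=\Theta(1/\sqrt n)$ uniformly via the $\Theta(n)$ trail length (the paper's Claim on $\piv(\vec p_-)[1]+\piv(\vec p_-)[2]$), Claim~\ref{claim:pivdiff}, and a Stirling estimate for the near-central fair binomial, and establish $\sum A(\vec p_-)=\Omega(1)$ via the multivariate CLT on the rescaled box. The only cosmetic difference is that the paper makes the Gaussian lower bound explicit (bounding $\|\Sigma^{-1}\|_\infty$ and multiplying the density minimum by the volume $\gamma^{m!-2}$) where you invoke positive Gaussian measure of the limiting region directly.
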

	\begin{proof}
		We first bound the total number of $V$ and $W$ votes in $P\in\mr_n$ in the next claim.
		
		\begin{myclaim}\label{claim:pivsum} $\piv(\vec p_-)[1]+\piv(\vec p_-)[2]-1=\Theta(n)$ for all $\vec p_- \in \mr_n^-$.
		\end{myclaim}
		\begin{proof}
			$$\Big|\piv(\vec p_-)[1]+\piv(\vec p_-)[2]-\frac{2n}{m!}\Big| = \sum_{j=3}^{m!} \left|P[j]-\frac{n}{m!}\right|\leq \sum_{j=3}^{m!} \left(\gamma \sqrt{n}+|\vec{p_0}[j]| \sqrt{n}\right) \leq (\gamma+1)m!\sqrt{n}$$
		\end{proof}
		
		According to Claim~\ref{claim:pivdiff} $\&$~\ref{claim:pivsum}, we know that $B(\vec p_-)$ is equivalent to probability of flipping a fair coin for $\frac{2n}{m!}+ c_1 \sqrt{n}$ times and get $\frac{n}{m!}+ c_2 \sqrt{n}$, where $c_1$ and $c_2$ are bounded constants. In the next claim, we give a tight bound to $B(\vec p_-)$ for uniform distributed entries.
		
		\begin{myclaim}\label{claim:unifB}
			$B(\vec p_-) = \Theta\left(\sqrt{\frac{1}{n}}\right)$ for any $\vec p_- \in \mr_n^-$
		\end{myclaim}
		\begin{proof}
			Letting $n' = \frac{2n}{m!}+ c_1 \sqrt{n}$, $c' = c_2-\frac{c_1}{2}$ and assuming $n'$ is a even number, for the lower bound, we have,
			\begin{equation}
			\begin{split}
			B(\vec p_-)=& \left(\frac{1}{2}\right)^{\frac{2n}{m!}+ c_1 \sqrt{n}}\binom{\frac{2n}{m!}+ c_1 \sqrt{n}}{\frac{n}{m!}+ c_2 \sqrt{n}} = \left(\frac{1}{2}\right)^{n'}\binom{n'}{n'/2+ c' \sqrt{n}}\\
			=& \left(\frac{1}{2}\right)^{n'}\cdot {\binom{n'} {n'/2}}\cdot \frac{\frac{n'}{2}\times \cdots \times (\frac{n'}{2}-c' \sqrt{n'}+1)}{(\frac{n'}{2}+c'\sqrt{n'}-1)\times \cdots \times \frac{n'}{2}}\\
			>& \frac{1}{2^{n'}} {\binom{n'}{ n'/2}}\cdot \left(\frac{n'/2-c'\sqrt{n'}}{n'/2}\right)^{c' \sqrt{n'}}\\
			=& \Omega\left( \frac{1}{\sqrt {n}}  \right)\;\;\;\text{(applying Stirling's Formula)}
			\end{split}
			\end{equation}
			Upper bound can be obtained using similar technique as lower bound.
		\end{proof}

		The next claim gives a lower bound on $\sum_{\vec p_-\in \mr_n^-} A(\vec p_-)$. The proof 
		uses the main technique of Lindeberg-Levy Central Limit Theorem~\cite{greene2003econometric}.
	\begin{myclaim}\label{claim:unifA} $\sum_{\vec p_-\in \mr_n^-} A(\vec p_-)=\Omega  \left(1\right)$.
	\end{myclaim}
	\begin{proof}[Proof of Claim~\ref{claim:unifA}]
			We first define a set of $m!-2$ dimensions random variables that $Y_i = \left(Y_i[1],\cdots,Y_i[m!-2]\right)$, where $Y_i[j] = 1$ if ranking $j$ happens to $i$-th row and $Y_i[j] = 0$ otherwise. According to the definition of profile, we have $P[j+2] = \sum_{j=1}^n Y_i[j]$ and $\mathbb{E}(P[j]) = \frac{n}{m!}$ for uniform case. We further define a $m!-2$ dimensional random vector $\vec u$ such that $\vec{u}[j] = \left(P[j+2]-\frac{n}{m!}\right)/\sqrt{n}$, which is the scaled average of $Y_1,\cdots,Y_n$. According to Lindeberg-Levy Central Limit Theorem~\cite{greene2003econometric}, we know that the distribution of $\vec{u}$ converges in probability to multivariate normal distribution $\mathcal{N}(0,\Sigma)$, where
			$$\Sigma={
				\begin{bmatrix}
				\frac{m!-1}{(m!)^2}  &  -\frac{1}{(m!)^2}  & \cdots\ & -\frac{1}{(m!)^2}\\
				-\frac{1}{(m!)^2} &  \frac{m!-1}{(m!)^2}  & \cdots\ & -\frac{1}{(m!)^2}\\
				\vdots   & \vdots & \ddots  & \vdots  \\
				-\frac{1}{(m!)^2} & -\frac{1}{(m!)^2}  & \cdots\ & \frac{m!-1}{(m!)^2}\\
				\end{bmatrix}.
			}$$
			Since each diagonal element in $\Sigma$ is strictly larger than the sum of the absolute value of all other elements in the same row, we know that $\Sigma$ is non-singular according to Levy-Desplanques Theorem ~\cite{horn1990matrix}. According to Varah {\emph {et al.}}~\cite{varah1975lower}, we obtain a bound on $\Sigma^{-1}$'s $L_{\infty}$ norm as,
			$$||\Sigma^{-1}||_{\infty} \leq \frac{1}{\min_{i}\left(|\Sigma_{{ii}}|-\sum_{j\not=i}|\Sigma_{{ij}}|\right)}\leq \frac{(m!)^2}{2}.$$
			For any $m!-2$ dimensional random vector $\vec u$ constructed from a profile $P$ using the procedure that $\vec{u}[j] = \left(P[j+2]-\frac{n}{m!}\right)/\sqrt{n}$, we have,
			$$P\in\mr_n^-\;\;\;\;\text{if and only if}\;\;\;\;\vec{u} \in \mathbb{U}=\left\{\vec{u}:\left|\vec{u}[j]- \vec p_0[j] \right|\le \gamma,\,\forall j\in[m!-2]\right\}.$$
			Thus, for all $\vec{u}\in \mathbb{U}$ we know about its Probability Density Function (PDF) that,
			\begin{equation}\nonumber
			\begin{split}
			\pdf(\vec{u}) &= \frac{1}{\sqrt{(2\pi)^{m!-2}|\Sigma|}}\exp\left(-\frac{1}{2}\vec{u}^T\Sigma^{-1}\vec{u}\right)\\
			&= \frac{1}{\sqrt{(2\pi)^{m!-2}|\Sigma|}}  \exp\left(-\frac{1}{2}|\vec{u}^T\Sigma^{-1}\vec{u}|\right)\\
			&\geq \frac{1}{\sqrt{(2\pi)^{m!-2}|\Sigma|}}  \exp\left(-\frac{1}{2}||\vec{u}^T\Sigma^{-1}||_{\infty}\cdot||\vec{u}||_1\right)\;\text{(Holder's Inequality)}\\
			&\geq \frac{1}{\sqrt{(2\pi)^{m!-2}|\Sigma|}}  \exp\left(-\frac{1}{2}||\vec{u}^T||_{\infty}\cdot||\Sigma^{-1}||_{\infty}\cdot||\vec{u}||_1\right)\\
			&\geq \frac{1}{\sqrt{(2\pi)^{m!-2}|\Sigma|}}  \left[\exp\left(\frac{(m!)^2}{4}\right)\right]^{-||\vec{u}||_{\infty}^2}\\
			&= \Omega  \left(1\right).\\
			\end{split}
			\end{equation}
			Thus, letting $\text{Vol}(\cdot)$ be the volume function,
			$$\sum_{\vec p_-\in \mr_n^-} A(\vec p_-) \geq  \text{Vol}(\mathbb{U})\cdot\min_{\vec{u}\in\mathbb{U}}\pdf(\vec{u}) \geq \gamma^{m!-2} \cdot \Omega\left(1\right) = \Omega(1).$$
\end{proof}

		Lemma~\ref{lemma:unifGSR} follows be combining Claim~\ref{claim:unifA} and Claim~\ref{claim:unifB}.
	\end{proof}
	
	Recalling Lemma~\ref{lem:trail}, for the case that $\pi$ is uniform over all ranking, we have,
	\begin{equation}\nonumber
	\begin{split}
	\delta =& \max_{x, x', \setS} \Pr( \mech(X) \in \setS | X_1 = x ) - \Pr( \mech(X) \in \setS | X_1 = x' )\\
	\leq & \Pr( \mech(X) \in \setS_a | X_1 = W ) - \Pr( \mech(X) \in \setS_a | X_1 = V )\\
	=&  \sum_{P\in \ed(a)}\Pr(P-V)= \Omega\left(\frac{1}{\sqrt{n}}\right).
	\end{split}
	\end{equation}

	Then, we derive an upper bound of $\delta$ using the similar technique of lower bound ($\pi$ can be non-uniform for this bound). We first define $\mr_n'$, a subset of $n$-profile space, where event $P\in \mr_n'$ will be proved to happen with high probability.
	$$\mr_n'=\left\{P:P[V]=0\text{ and }\forall j\ge 3, \left|P[j]-\left(n\cdot\pi[j]\right)\right|\le  n^{3/4}\right\}.$$
	Then, we recall Lemma~\ref{lem:trail}, for the case that $\pi$ such that $\min_{i} \pi[i] > 0$, we have,
	\begin{equation}\nonumber
	\begin{split}
	\delta =& \max_{V,W, \setS} \Pr( P \in \setS | X_1 = V ) - \Pr( P \in \setS | X_1 = W )\\
	\leq & \max_{V, W} \sum_{i=1}^m \Pr( P \in \setS_i  | X_1 = V ) - \Pr( P \in \setS_i | X_1 = W ) =  \sum_{i=1}^m \sum_{P\in \ed(x_i)}\Pr(P-V).
	\end{split}
	\end{equation}
	where $\setS_i = \left\{X: r(X) = x_i\right\} = \bigcup_{P\in \ed(x_i)} \trail{P}{V}{W}{\infty}$.
	The next claim gives am upper bound on the number of pivotal profiles sharing one End.
	\begin{myclaim} For any profile $P$ in $\mr_n'$, there are at most $|H|$ pivotal profiles following $V-W$ direction.
	\end{myclaim}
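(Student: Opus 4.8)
The plan is to reduce the claim to a one-dimensional arrangement argument along the trail. Fix a profile $P\in\mr_n'$ and parametrize the $V$-$W$ trail $\trail{P}{V}{W}{\infty}$ through it as $\vec p(t)=P+t(V-W)$, where $t$ ranges over the integers for which $\vec p(t)$ is a valid profile. By definition, a profile $\vec p(t)$ on this trail is pivotal exactly when the winner changes between $\vec p(t)$ and its trail-neighbour, i.e.\ when $g_H(H(\vec p(t)))\ne g_H(H(\vec p(t+1)))$, where $H(\vec p(t))=(\sign(\vec h_1\cdot \vec p(t)),\ldots,\sign(\vec h_R\cdot \vec p(t)))$ and $R=|H|$. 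Since the winner is a function of this sign vector, a pivotal profile can occur only at a step where some coordinate of $H(\vec p(\cdot))$ flips.

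First I would observe that for each hyperplane $\vec h_r\in H$ the quantity $\vec h_r\cdot \vec p(t)=\vec h_r\cdot P+t\,(\vec h_r\cdot (V-W))$ is an affine, hence monotone, function of the single variable $t$. Consequently its sign is monotone along the trail, so the trail crosses each hyperplane at most once. The at most $|H|$ crossing locations therefore cut the line into at most $|H|+1$ maximal segments, on each of which the entire sign vector $H(\vec p(\cdot))$, and thus the winner $g_H(H(\vec p(\cdot)))$, is constant. A winner change can happen only when passing from one segment to the next, and there are at most $|H|$ such passages; hence at most $|H|$ pivotal profiles lie on the trail, which is the claim. The role of the conditions defining $\mr_n'$ (the coordinates $j\ge 3$ concentrated around $n\,\pi[j]$ and $P[V]=0$) is to guarantee that the relevant portion of the $V$-$W$ line consists of valid profiles, so that this geometric description applies to the whole trail on which $\piv(P)$ lives.

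The main obstacle is the degenerate case where an integer profile of the trail lies exactly on some hyperplane, i.e.\ $\vec h_r\cdot \vec p(t)=0$ (an exact tie). Such a lattice point forms its own one-element segment, and a priori, if the tie value $g_H$ assigns on the hyperplane differed from the winners on both neighbouring segments, a single crossing could be counted as two winner changes and break the bound. I would resolve this by appealing to the tie-breaking built into $g_H$: along the trail the sign of $\vec h_r\cdot \vec p$ still transitions only once (from $+$ through a possible single $0$ to $-$), so once the on-hyperplane value is folded into whichever adjacent region its tie-break selects, each hyperplane still contributes at most one boundary between \emph{distinct} winners. Verifying that this folding is consistent with the $(H,g_H)$ definition, so that the worst case genuinely yields $|H|$ rather than $2|H|$ pivotal profiles, is the delicate point of the argument; everything else is the routine monotonicity-plus-arrangement counting sketched above.
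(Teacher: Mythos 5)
Your proof is correct and follows essentially the same route as the paper's: the paper's one-line argument is precisely that the trail crosses each hyperplane at most once (since $\vec h\cdot \vec p(t)$ is affine, hence monotone, in the trail parameter), so the winner can change at most $|H|$ times along the trail. The degenerate on-hyperplane case you flag is not addressed by the paper at all; even if one cannot fold the $0$-sign segment into a neighbouring region, it only inflates the count to $2|H|$, a constant factor that is immaterial to how the claim is used in bounding $\delta$.
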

	\begin{proof}
		We know from the $(H,g_H)$ definition of $GSR$ that $r$'s output only changes while passing at least one hyperplane. Considering a trail $\mt_{P_0}$ enter at $(P_0[1]+P_0[2],0,P_0[3],\cdots,P_0[m!])$ and exit at $(0,P_0[1]+P_0[2],P_0[3],\cdots,P_0[m!])$ ($P_0$ is an arbitrary $n$-profile). Thus, there are at most $|H|$ pivotal profiles sharing the same end point because $\mt_{P_0}$ passes hyperplanes at most $|H|$ times.
	\end{proof}

	Using the partition of $\mr_n'$ and arbitrarily selected candidate $a$, we have,
	{\small
	\begin{equation}\nonumber
	\begin{split}
	\sum_{P\in \ed(x_i)}\Pr(P-V) \leq & |H|\left( \sum_{P\in\mr_n'}\Pr(\piv(P)-V)+\sum_{P\in\ed(x_i)\setminus\mr_n'}\Pr(\piv(P)-V)\right)\\
	\leq & |H|\left(\sum_{\vec p_-\in \mr_n'^-, |P|=n-1}A(\vec p_-)B(\vec p_-) + \sum_{\vec p_-\not\in \mr_n'^-, |P|=n-1}A(\vec p_-)B(\vec p_-)\right)\\
	\leq & |H|\left(\max\limits_{\vec p_-\in \mr_n'^-} B(\vec p_-) \cdot \sum_{\vec p_-\in \mr_n'^-}A(\vec p_-) + \max\limits_{\vec p_-\not\in \mr_n'^-}B(\vec p_-) \cdot\sum_{\vec p_-\not\in \mr_n'^-}A(\vec p_-)\right)\\
	= & O\left(\frac{1}{\sqrt{n}}\right) \cdot O(1) + O(1) \cdot O\left(\frac{1}{\sqrt{n}}\right)\;\;\;(\text{by applying Claim~\ref{claim:GSR_piv_nonunif}})\\
	= & O\left(\frac{1}{\sqrt{n}}\right)
	\end{split}
	\end{equation}}
	
	The next claim gives an upper bound to $\sum_{\vec p_- \not\in \mr_n^-} A(\vec p_-)$.
	\begin{myclaim} $\sum_{\vec p_- \not\in \mr_n'^-} A(\vec p_-) = O\left(\frac{1}{\sqrt{n}}\right)$.
	\end{myclaim}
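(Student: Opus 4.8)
The plan is to observe that $\sum_{\vec p_- \not\in {\mr_n'}^{-}} A(\vec p_-)$ is nothing but the probability that the projected profile $P[3,\ldots,m!]$ falls \emph{outside} the box defining ${\mr_n'}^{-}$; that is, the probability that at least one coordinate $j\ge 3$ deviates from its mean by more than $n^{3/4}$:
$$\sum_{\vec p_- \not\in {\mr_n'}^{-}} A(\vec p_-) = \Pr\left(\exists\, j\ge 3 :\ \left|P[j] - n\pi[j]\right| > n^{3/4}\right).$$
Since each vote is i.i.d., every coordinate $P[j]$ is distributed as $\funvar{Bin}(n,\pi[j])$ (up to the harmless replacement of $n$ by $n-1$ from the removed row, which shifts the mean by only $O(1)$), with mean $\mu_j = \Theta(n)$ because $\pi[j]$ is a fixed positive constant.

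First I would bound the deviation of a single coordinate by a multiplicative Chernoff bound. Writing the deviation $n^{3/4}$ as a fraction of the mean, set $\beta_j = n^{3/4}/\mu_j = \Theta(n^{-1/4})$, which lies in $(0,1)$ for large $n$. The standard two-sided Chernoff bounds then give
$$\Pr\left(P[j] > (1+\beta_j)\mu_j\right)\le e^{-\beta_j^2\mu_j/3}, \qquad \Pr\left(P[j] < (1-\beta_j)\mu_j\right)\le e^{-\beta_j^2\mu_j/2}.$$
Since $\beta_j^2\mu_j = \Theta(n^{-1/2})\cdot\Theta(n) = \Theta(\sqrt n)$, each of these tails is $e^{-\Omega(\sqrt n)}$.

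Next I would apply a union bound over the $m!-2$ coordinates with $j\ge 3$. As $m$ is a fixed constant this is a union over constantly many events, so
$$\Pr\left(\exists\, j\ge 3 :\ \left|P[j] - n\pi[j]\right| > n^{3/4}\right) \le (m!-2)\cdot e^{-\Omega(\sqrt n)} = e^{-\Omega(\sqrt n)}.$$
Finally, since exponential decay dominates any inverse polynomial, $e^{-\Omega(\sqrt n)} = O\!\left(1/\sqrt n\right)$, which proves the claim (indeed it gives a bound far stronger than the required $O(1/\sqrt n)$).

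The argument is a routine concentration estimate, so I do not expect a substantial obstacle; the only point meriting care is checking that the chosen deviation scale $n^{3/4}$ yields an exponent $\beta_j^2\mu_j$ that \emph{grows} with $n$ (here like $\sqrt n$), which is exactly what forces the tail below $1/\sqrt n$. This also explains the design choice of the half-width $n^{3/4}$: it sits strictly between the $\Theta(\sqrt n)$ scale of the standard deviation and the $\Theta(n)$ scale of the mean, so that the deviation is many standard deviations yet still a vanishing fraction of the mean. A half-width of order $\sqrt n$ would give only a constant-order exponent, hence a constant bound, which would be insufficient for the target.
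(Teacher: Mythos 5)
Your proof is correct, and it follows the same overall structure as the paper's: rewrite $\sum_{\vec p_- \not\in \mr_n'^-} A(\vec p_-)$ as the probability that some coordinate $j\ge 3$ of the $(n-1)$-profile deviates from $n\pi[j]$ by more than $n^{3/4}$, apply a union bound over the constantly many coordinates, and then invoke a concentration inequality for each $\funvar{Bin}(n-1,\pi[j])$ marginal. The only genuine difference is the tail bound: the paper uses Chebyshev's inequality, which gives $\Pr\left(|P[j]-\mathbb{E}(P[j])|\ge n^{3/4}\right)\le n\pi[j](1-\pi[j])/n^{3/2}=O(1/\sqrt n)$ per coordinate — exactly the target rate and nothing more — whereas your multiplicative Chernoff bound with $\beta_j=\Theta(n^{-1/4})$ yields $e^{-\Omega(\sqrt n)}$, which is far stronger and then trivially dominated by $1/\sqrt n$. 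Your version requires the (valid here) observation that $\pi[j]>0$ is a fixed constant so $\mu_j=\Theta(n)$ and $\beta_j\in(0,1)$ eventually, while Chebyshev needs only the variance; in exchange you get an exponentially small contribution from the atypical profiles, which makes it transparent that this term is negligible next to the $\Theta(1/\sqrt n)$ contribution from the typical region. You also correctly flag the harmless $O(1)$ shift between the mean $(n-1)\pi[j]$ and the centering $n\pi[j]$ used in the definition of $\mr_n'$, a point the paper glosses over.
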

	\begin{proof}		
	Let $Y_j^{(i)} =$ "the $i$-th agent gives vote of type j". One can see that $P[j] = \sum_{i=1}^n Y_j^{(i)}$, $\mathbb{E}(P[j]) = n\pi[j]$ and $Var(P[j]) = n\pi[j](1-\pi[j])$.
		Thus,
		\begin{equation}\nonumber
		\begin{split}
		&\sum_{\vec p_- \not\in \mr_n^-} A(\vec p_-) = \Pr\left[\bigcup\limits_{j=3}^{m!} \left\{\ \left|P[j]-n\cdot\pi[j]\right|\le  n^{3/4}\right\} \right]\\
		\leq & \sum\limits_{j=3}^{m!} \Pr\left[ \left\{\ \Big|P[j]-\mathbb{E}(P[j])\Big|\le n^{3/4}\right\} \right]\\
		\leq & \sum\limits_{j=3}^{m!} \frac{n\pi[j](1-\pi[j])}{ n^{3/2}} \;\;\;(\text{by Chebyshev's Inequality})\\
		= & O\left(\frac{1}{\sqrt{n}}\right)\\
		\end{split}
		\end{equation}
	\end{proof}
	
	Then, all we need is an upper bound on $B(\vec p_-)$, and we first prove that the length of $V-W$ sequence is $\Theta(n)$ for all $P\in \mr_n'$.
	\begin{myclaim}\label{claim:GSR_piv_nonunif}
		$\piv(\vec p_-)[1]+\piv(\vec p_-)[2]-1=\Theta(n)$ for all $P\in \mr_n'$.
	\end{myclaim}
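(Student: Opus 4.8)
The plan is to exploit the single structural invariant that moving along a $V$-$W$ trail only transfers votes between the first two coordinates (those of $V$ and $W$) and hence preserves $P[1]+P[2]$. By definition $\piv(\vec p_-)=\piv(P)$ lies on the $V$-$W$ trail through the profile $P\in\mr_n'$ whose last $m!-2$ coordinates equal $\vec p_-$, so $\piv(\vec p_-)=P+l(V-W)$ for some $l\in\mathbb Z$ and therefore $\piv(\vec p_-)[1]+\piv(\vec p_-)[2]=P[1]+P[2]$. Thus it suffices to estimate $P[1]+P[2]$ for an arbitrary $P\in\mr_n'$.

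First I would use the fact that the profile has a fixed number of rows: from $\sum_j P[j]=n$ (and $P[1]=P[V]=0$ by the definition of $\mr_n'$) we get $P[1]+P[2]=n-\sum_{j\ge3}P[j]$. Next I would substitute the defining constraints of $\mr_n'$: each coordinate $j\ge3$ satisfies $|P[j]-n\pi[j]|\le n^{3/4}$, and summing over the at most $m!-2$ such coordinates gives $\sum_{j\ge3}P[j]=n\sum_{j\ge3}\pi[j]+O(n^{3/4})$. Since $\sum_j\pi[j]=1$, this yields $P[1]+P[2]=n(\pi[V]+\pi[W])+O(n^{3/4})$, where I use that $V,W$ are coordinates $1,2$. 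Finally, because the non-uniform statement of Theorem~\ref{thm:gsr} assumes $\min_i\pi[i]>0$, the constant $\pi[V]+\pi[W]\ge 2\min_i\pi[i]$ is strictly positive, so the leading term $n(\pi[V]+\pi[W])$ is $\Theta(n)$ and dominates both the $O(n^{3/4})$ error and the $-1$ correction; hence $\piv(\vec p_-)[1]+\piv(\vec p_-)[2]-1=\Theta(n)$.

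This is the non-uniform analogue of Claim~\ref{claim:pivsum}, and the argument is essentially identical, the only change being that $\mr_n'$ uses $n^{3/4}$-neighborhoods centered at $n\pi[j]$ rather than the $\gamma\sqrt n$-neighborhoods centered at $\frac{n}{m!}+\vec p_0[j]\sqrt n$ used by $\mr_n$. There is no serious obstacle; the one point I would be careful about is that the conclusion $\Theta(n)$ (as opposed to merely $O(n)$) genuinely requires the strict positivity of $\pi[V]+\pi[W]$, so I would make explicit where the hypothesis $\min_i\pi[i]>0$ is invoked, as this is exactly what prevents the leading coefficient from degenerating to $0$.
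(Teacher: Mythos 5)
Your proposal is correct and follows essentially the same route as the paper's proof: both reduce $\piv(\vec p_-)[1]+\piv(\vec p_-)[2]$ to $n-\sum_{j\ge 3}P[j]$ (using that the trail preserves the sum of the first two coordinates) and then bound $\bigl|\sum_{j\ge 3}(P[j]-n\pi[j])\bigr|$ by $m!\cdot n^{3/4}$ via the defining constraints of $\mr_n'$. Your version is in fact slightly more careful than the paper's one-line computation, since you make explicit both the triangle-inequality step and the role of $\min_i\pi[i]>0$ in ensuring the leading coefficient $\pi[V]+\pi[W]$ is a strictly positive constant.
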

	\begin{proof}
		$$\left|\piv(\vec p_-)[1]+\piv(\vec p_-)[2]-n(\pi[W]+\pi[V])\right|= \sum_{j=3}^{m!} \left|P[j]-n\cdot\pi[j]\right| \leq \sum_{j=3}^{m!}  n^{3/4}\leq m!\cdot n^{3/4}$$
	\end{proof}
	
	Then, using the same technique of Claim~\ref{claim:unifB}, we know that,
	$$B(\vec p_-) = \Theta\left(\sqrt{\frac{1}{n}}\right)\;\;\; \text{for all} \;\;\;p_- \in \mr_n'^-$$
	Thus, combining all results above, we have,
	\begin{equation}\nonumber
	\begin{split}
	\delta \leq &  \sum_{i=1}^m \sum_{P\in \ed(x_i)}\Pr(P-V) = \sum_{i=1}^m \sum_{P\in \ed(x_i)}\Pr(P-V) = O\left(\frac{1}{\sqrt{n}}\right)\\
	\end{split}
	\end{equation}


	Next, we will give a exponential (tighter) upper bound on $\delta$ when $\pi$ does not belong to any hyperplanes.We first give a generalized definition of pivotal profile.
	\begin{definition}[Generalized Pivotal Profile]\label{def:gsr-nontrivalEE}
		Profile $P$ is a (generalized) pivotal profile if there exist pair of votes $V$ and $W$ such that $r(P)\neq r(P-V+W)$.
	\end{definition}
	Then, we define a distance function $\gdist(P, h)$ to be a generalized distance between profile $P$ and hyperplane $h$. We define 
	$$\gdist(P,\vec h) = \inf_{P'\in \Tilde{h}} ||P-P'||_2,$$
	where $\Tilde h = \left\{P\in h: \exists \text{ unit vector } \vec e \text{ s.t. } r(P'-\vec e)\neq r(P'+\vec e)  \right\}$. In the next lemma we will show generalized pivotal profiles only lays close to hyperplanes. We fist gives definition of distance function $\Dist(\cdot,\cdot)$:\\
	1. for hyperplane $h$ and a point ($n$-profile) $P$, $\Dist(h,P) = \frac{\vec h \cdot P}{||\vec h||_2}$, which is the Euclidean distance between $P$ and hyperplane $\vec h \cdot \vec p = 0$.\\
	2. for 2 points ($n$-profile) $P_1$ and $P_2$, $\Dist(h,P)$, returns the Euclidean distance between $P_1$ and $P_2$.
	\begin{myclaim}\label{claim:newdist}
		For any GSR $r = (H,g_H)$ and one of its generalized pivotal profile $P$, there must exist one hyperplane $\vec{h}\in H$ such that $\Dist(h, P) \leq \sqrt{2}$.
	\end{myclaim}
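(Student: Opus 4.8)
The plan is to exploit that the winner of a GSR, $r = g_H\circ H$, depends only on the sign pattern $H(\vec p)=(\text{Sign}(\vec h_1\cdot \vec p),\ldots,\text{Sign}(\vec h_R\cdot \vec p))$, so the winner can change between two profiles only if the sign of at least one $\vec h\cdot \vec p$ changes. Concretely, let $P$ be a generalized pivotal profile and let $V,W$ be the votes guaranteed by Definition~\ref{def:gsr-nontrivalEE}, so that $P'=P-V+W$ satisfies $r(P)\ne r(P')$. Since $r=g_H\circ H$ is a function of the sign pattern alone, $r(P)\ne r(P')$ forces $H(P)\ne H(P')$, and hence there is some $\vec h\in H$ with $\text{Sign}(\vec h\cdot P)\ne \text{Sign}(\vec h\cdot P')$. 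This is the hyperplane for which I will establish the distance bound.

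Next I would write $a=\vec h\cdot P$ and $b=\vec h\cdot P'$ and note $b-a=\vec h\cdot(W-V)$. Because $\text{Sign}(a)\ne \text{Sign}(b)$, an elementary case check gives $|a|\le |b-a|$: when $a,b$ have strictly opposite signs this is $|a|\le |a|+|b|=|b-a|$, and when $a=0$ or $b=0$ it is immediate. It then remains to bound $|b-a|$, which I would do by Cauchy--Schwarz: $|b-a|=|\vec h\cdot(W-V)|\le ||\vec h||_2\,||W-V||_2$. Since $V$ and $W$ are distinct votes, $W-V$ is a difference of two distinct standard basis vectors of ${\mathbb R}^{m!}$, so $||W-V||_2=\sqrt 2$. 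Combining the two inequalities yields $|\vec h\cdot P|=|a|\le \sqrt 2\,||\vec h||_2$, and therefore $\Dist(h,P)=\frac{|\vec h\cdot P|}{||\vec h||_2}\le \sqrt 2$, which is the claim (the unsigned quantity bounds the signed one).

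The argument is genuinely short, so the only place that needs care is the step $|a|\le|b-a|$: one must confirm that ``different signs'' in the $\{+,0,-\}$ classification really yields this inequality in every case, in particular when a profile lies exactly on the hyperplane ($a=0$ or $b=0$), where the bound is trivial. Beyond this, everything reduces to Cauchy--Schwarz together with the observation that each vote contributes a single unit coordinate, which is precisely what fixes the constant $\sqrt 2$ in the bound.
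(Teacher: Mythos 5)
Your proof is correct and follows essentially the same route as the paper's: both arguments locate a hyperplane $\vec h\in H$ whose sign flips between $P$ and $P-V+W$ (since the winner is a function of the sign pattern alone) and then bound the distance to that hyperplane by the length $\sqrt 2$ of the single-vote swap. The only difference is that you make the paper's one-line geometric assertion $\Dist(h,P)\le \Dist(P,P+V-W)$ explicit via the case check $|a|\le|b-a|$ and Cauchy--Schwarz, which is a welcome (and fully correct) elaboration rather than a different approach.
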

	\begin{proof}
		Recalling the definition of generalized pivotal profiles, we know the GSR winner will change at the $1$ neighborhood of $P$. Thus, there must exist a hyperplane $\vec{h}\in H$ and pair of votes $V,W$ such that $\sign\left[\vec{h}\cdot P\right] \not=  \sign\left[\vec{h}\cdot (P+V-W)\right]$ and $\Dist(h,\, P) \leq \Dist(P, P+V-W) = \sqrt{2}$.
	\end{proof}

	\begin{lemma}\label{lem:exp_bound}
		Let $D$ be the distribution on profiles (databases of votes), where each entry is iid according to distribution $\pi$ over linear orders on $m$ candidates. GSR $r(H,h_H)$ is $(0,\delta, \Delta = \{(D, \emptyset)\})$-DDP when only the winner is announced, where
		$$\delta = O\left[\exp\left(-\frac{\left[\min_{h\in H}\gdist(\vec {\pi},h)\right]^2}{3(m!)^2\left(\max_{i\in [m!]}\pi[i]\right)}\cdot n\right)\right] = O\left[e^{-\Omega(n)}\right].$$
	\end{lemma}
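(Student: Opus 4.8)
The plan is to bound $\delta$ directly from the trail decomposition established just above, namely $\delta \leq \sum_{i=1}^m \sum_{P\in \ed(x_i)}\Pr(P-V)$, and to show that this right-hand side is $e^{-\Theta(n)}$ whenever $\vec\pi$ lies off every hyperplane in $H$. The starting observation is that, since the sets $\ed(x_i)$ are disjoint (each profile has a unique winner) and the shift $P\mapsto P-V$ is injective, the whole double sum equals exactly the probability that the histogram $\Hist(X_{-1})$ of the remaining $n-1$ rows coincides with $P-V$ for some generalized pivotal profile $P$. Hence it suffices to upper bound the probability that a random $(n-1)$-profile lands within a bounded distance of the pivotal region.

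Next I would invoke Claim~\ref{claim:newdist}: every generalized pivotal profile $P$ satisfies $\Dist(\vec h, P)\leq \sqrt 2$ for some $\vec h\in H$, so $P-V$ lies within a constant ($\sqrt 2 + \|V\|_2$) distance of that same hyperplane. Consequently $\delta$ is at most $\Pr\big(\exists\,\vec h\in H:\ \Dist(\vec h,\Hist(X_{-1}))\leq c\big)$ for a constant $c$, which by a union bound is at most $\sum_{\vec h\in H}\Pr(\Dist(\vec h,\Hist(X_{-1}))\leq c)$. Since $\mathbb E[\Hist(X_{-1})]=(n-1)\vec\pi$ sits at distance $(n-1)\Dist(\vec\pi,\vec h)$ from the hyperplane $\vec h$, the event inside each term forces the histogram to deviate from its mean by $(n-1)\Dist(\vec\pi,\vec h)-c=\Theta(n)$ along $\vec h$. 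This is exactly the large-deviation event that concentration makes exponentially unlikely.

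To turn this into the stated exponent I would descend to a single coordinate: if $\|\Hist(X_{-1})-(n-1)\vec\pi\|_2\geq t:=(n-1)\min_{\vec h}\Dist(\vec\pi,\vec h)-c$, then $\|\Hist(X_{-1})-(n-1)\vec\pi\|_\infty\geq t/\sqrt{m!}$, so some coordinate $j$ has $|P[j]-(n-1)\pi[j]|\geq t/\sqrt{m!}$. Each coordinate is $\funvar{Bin}(n-1,\pi[j])$ with mean $(n-1)\pi[j]$, so the multiplicative Chernoff bound $\Pr(|\funvar{Bin}-\mu|\geq\beta\mu)\leq 2e^{-\mu\beta^2/3}$ with $\beta\mu=t/\sqrt{m!}$ yields an exponent of order $-\tfrac{t^2}{(m!)\,(n-1)\pi[j]}\geq -\tfrac{[\min_{\vec h}\Dist(\vec\pi,\vec h)]^2}{(m!)\,(\max_i\pi[i])}\cdot\Theta(n)$; after the union bounds over the $m!$ coordinates and the $|H|$ hyperplanes one recovers $\delta = O\big(e^{-\Theta(n)}\big)$ of the claimed shape.

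The main obstacle I expect is pinning down the precise constant in the exponent, in particular matching the $3(m!)^2(\max_i\pi[i])$ denominator: this requires fixing the normalization of $\vec h$, tracking how the $L_2$-to-coordinate conversion interacts with the Chernoff parameter $\beta$, and verifying $\beta<1$ so the multiplicative bound is valid. The structural reduction (pivotal $\Rightarrow$ near a hyperplane $\Rightarrow$ large deviation of an off-hyperplane mean) is clean; the places needing genuine care are the constant bookkeeping and the disjointness argument that lets me read the sum over $\ed(x_i)$ as a single tail probability.
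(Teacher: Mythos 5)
Your proposal is correct and follows essentially the same route as the paper's proof: both reduce the bound to the probability that the observed profile is a generalized pivotal profile, invoke Claim~\ref{claim:newdist} to place every such profile within distance $\sqrt{2}$ of some hyperplane in $H$, use the triangle inequality against the mean $n\vec\pi$ (which sits at distance $n\,\Dist(\vec\pi,\vec h)$ from each $\vec h$) to force a $\Theta(n)$ deviation in some coordinate, and finish with a per-coordinate Chernoff bound plus union bounds over coordinates and hyperplanes. Your reduction of the double sum over $\ed(x_i)$ to a single tail probability is in fact stated more carefully than the paper's corresponding step, and your $L_2$-to-$L_\infty$ conversion ($\sqrt{m!}$ rather than the paper's $m!$) is slightly sharper, but neither difference changes the argument or the asymptotic conclusion.
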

	\begin{proof}
		We first define the set of all generalized pivotal profiles $\mathbb{P}_{\text{Piv}}$. For any $P\in\mathbb{P}_{\text{Piv}}$, we know that there exist hyperplane $h\in H$ such that $\gdist(h,\, P) \leq \sqrt{2}$. According to triangular inequality, we have $\gdist(n\vec {\pi},\, P) \geq \gdist(n\vec {\pi},\, h)-\Dist(h,\, P) \geq n\gdist(\vec {\pi},\, h) - \sqrt{2}$. The second $\geq$ sign comes from  the fact that all hyperplanes passes $\vec 0$. Thus, there must exist one dimension $j$ that $|P[j]-n\pi[j]| \geq \frac{n\gdist(\vec {\pi},\, h)-\sqrt{2}}{m!}$. Then, we bound $\delta$ as,
		\begin{equation}\nonumber
		\begin{split}
		\delta =& \max_{V,W, \setS} \left[\Pr( P \in \setS_i | X_1 = V ) - \Pr( P \in \setS_i | X_1 = W )\right]\\
		\leq& \sum_{P \in \mathbb{P}_{\text{Piv}}}\left[ \max_{V} \Pr( P \in \mathbb{P}_{\text{Piv}} | X_1 = V )\right]\\
		\leq& \max_{V,h,j} \Pr\left( |P[j]-n\pi[j]| \geq  \frac{n\gdist(\vec {\pi},\, h)-\sqrt{2}}{m!} \bigg| X_1 = V \right)\\
		\leq& \max_{h,j} \Pr\left( |P[j]-n\pi[j]| \geq  \frac{n\gdist(\vec {\pi},\, h)-\sqrt{2}}{m!}-1 \right)\\
		=& O\left[\exp\left(-\frac{\left[\min_{h\in H}\gdist(\vec {\pi},h)\right]^2}{3(m!)^2\left(\max_{i\in [m!]}\pi[i]\right)}\cdot n\right)\right] \text{by applying Chernoff bound.}\\
		\end{split}
		\end{equation}
	\end{proof}

Theorem~\ref{thm:gsr} follows by combining all three bounds derived above.
	
\end{proof}

\subsection{Proof for Corollary~\ref{cor:gsr-subset}}

\begin{proposition}\label{prop:gsr-subset-positional} All positional scoring rules and all Condorcet consistent and monotonic rules satisfy all properties required by Theorem~\ref{thm:gsr}.
\end{proposition}

\begin{proof} [Proof of Proposition~\ref{prop:gsr-subset-positional} 
]
Suppose $s_1=\cdots=s_l>s_{l+1}$. We let $V=[a\succ c_1\succ c_{l-1}\succ b\succ \text{others}]$ and $W=[\succ c_1\succ c_{l-1}\succ b\succ a\succ \text{others}]$. Let $M$ be the permutation $c_1\rightarrow c_2\rightarrow\ldots c_{m-2}\rightarrow c_1$. Let $V_1=[a\succ b\succ \text{others}]$  and $V_2=[b\succ a\succ \text{others}]$. Let $P=\bigcup_{i=1}^{m-2}M^i(V_1)\cup M^i(V_2)$. Let  $P^*=2P'\cup\{V,W\}$. It follows that $a$ and $b$ are the only two candidates tied in the first place in $P^*$. Therefore, there exists $\epsilon$ to satisfy the condition.

The same profile can be used to prove the local stability of all Condorcet consistent and monotonic rules.
\end{proof}
Corollary~\ref{cor:gsr-subset} follows by the definition of voting rules and the definition of positional scoring rules.

\subsection{Exact DDP for Histogram}
As a complementary result to the DDP result for histograms, we present the histogram's eDDP with $\epsilon = 0$.

\begin{theorem}[Exact DDP of Histogram]\label{thm:nonunif-histogram-c}
Fix $\bins\ge 2$, $\universe = \{x_1, \cdots, x_\bins\}$, and $\Delta \subseteq \Pi(\universe)$. Let $p_{\min} = \min_{\pi\in\Delta} (\pi(x_i) +\pi(x_j))$. For all $n \in \Natural$, $\Hist$ of $n$ voters is $(0, \delta(n) = \Theta\left(\sqrt{\frac{1}{n p_{\min}}}\right), \Delta)$-\eDDP.
\end{theorem}
\begin{proof}[Sketch]
	First we present the case for $\bins = 2$.
	\begin{lemma}[Exact DDP for Histogram, when $\bins = 2$]\label{nonunif-hist-2}
		Fix $\universe = \{x_{1}, x_{2}\}$ and $\Delta \subseteq \Pi(\universe)$. The histogram for $n$ voters is $(0, \constant{1/\sqrt{n}}, \Delta)$-\eDDP.
	\end{lemma}
	
	\begin{proof}[Lemma~\ref{nonunif-hist-2}]
		Consider some $\pi\in \Delta$, and let $p = \pi(a)$. Without loss of generality let $x = x_{1}$ and $x' = x_{2}$ (otherwise, rename them). Then, the maximizing set $\setS$ in Equation (\ref{equ:delta}) is exactly the set of histograms such that $$\Pr_{\vec X \in \pi}(\Hist(\vec X)\in \setS | X_{i} = x_{1}) > \Pr(\Hist(\vec X)\in \setS | X_{i} = x_{2})$$ Since votes are i.i.d., these follow the binomial distribution (with $n$ trials). Below we find that $\setS$ is the set of histograms $(k, n-k)$ where $k > pn$.
		\begin{align*}
		\Pr(\Hist(\vec X) = (k, n-k)|X_{i} = x_{1}) &> \Pr(\Hist(\vec X) = (k, n-k)|X_{i} = x_{2}) \\
		\implies p^{k-1}(1-p)^{n-k}\frac{(n-1)!}{(n-k)!(k-1)!} &> p^{k}(1-p)^{n-k-1} \frac{(n-1)!}{(n-k-1)!k!}\\
		\implies k &> pn
		\end{align*}
		
		Thus, $\setS = \left\{t = (k, n-k) \colon k > pn \right\}$. This set forms a trail $\mt$ which starts from $\ent(\mt) = (n, 0)$ and exits at $\ext{\mt} = (pn+1, n-(pn+1))$. Thus,
		\begin{align*}
		\delta &=  \Pr(\Hist(\vec X)\in \setS | X_{i} = x_1) - \Pr(\Hist(\vec X)\in \setS | X_{i} = x_2) \tag{Equation (\ref{equ:delta})}\\
		&=\Pr(\Hist(\vec X) = \ext{\mt} |X_{i} = x_{1}) -  \Pr(\Hist(\vec X) = \ent{\mt} | X_{i} = x_{2})\tag{Lemma~\ref{lem:trail}}\\
		&= \Pr(\Hist(\vec X) = (pn+1, n-(pn+1)) |X_{i} = x_{1}) - \Pr(\Hist(\vec X) = (n, 0) | X_{i} = x_{2})\\
		&= p^{pn}(1-p)^{n-pn-1}\frac{(n-1)!}{(pn)!(n-pn-1)!}\tag{When one row is fixed to $x_{2}$, the probability of histogram being $(n, 0)$ is zero.}\\
		&= \constant{1/\sqrt{n}} \tag{By applying Stirling's formula}
		\end{align*}
	\end{proof}
	
	We can generalize the result to $\bins >2$, by using the trail technique. Again we assume WLOG that $x = x_{1}$ and $x' = x_{2}$. Let $t = (t_1, \cdots, t_\bins)$ be the histogram, where $t_i$ counts the number of occurrences of $x_i$. We observe that, when votes are  i.i.d,  $t_{3}, \cdots, t_{\bins}$ are independent of $t_{1}, t_{2}$ when conditioned on the sum $s = t_{1} + t_{2}$. This means that we can compute $\delta$ for general $\bins$, as a sum  $$\delta = \sum_{0<s\leq n} \delta_{s} \Pr(\funvar{Bin}(n, \pi(x_1) + \pi(x_2)) = s)$$ Where $\delta_{s}$ is the $\delta$-value for $\bins = 2$, when there are $s$ votes. Using Chernoff bound we see that $\funvar{Bin}(n,\pi(x_1) + \pi(x_2))$ is concentrated at its mean $n(\pi(x_1) + \pi(x_2))$. Plugging in the result for $\bins = 2$, we get $\delta = \Theta\left(\frac{1}{\sqrt{n (\pi(x_1) + \pi(x_2))}}\right) $.
\end{proof}

\subsubsection{Full proof} Below we present the full proof of Theorem~\ref{thm:nonunif-histogram-c}, using Lemma~\ref{nonunif-hist-2} which showed the case for $\bins = 2$.\\

\begin{proof}[Proof of Theorem~\ref{thm:nonunif-histogram-c}, Exact DDP of Histogram]	
	
	Consider any $\pi\in \Delta$, and let $p_i = \pi(x_i)$.
	Like in the $\bins = 2$ case, without loss of generality, we can let $x = x_{1}$ and $x' = x_{2}$ (otherwise, rename them). Then, the maximizing set $\setS$ (similar to when $\bins = 2$) is exactly the set of histograms such that $$\Pr_{\vec X\sim \pi}(\Hist(\vec X)\in \setS | X_{i} = x_{1}) > \Pr_{\vec X\sim \pi}(\Hist(\vec X)\in \setS | X_{i} = x_{2})$$
	(We will implicitly assume ${\vec X\sim \pi}$ from now on) Since we have i.i.d. votes, the histogram follows the multinomial distribution (with $n$ trials). For any $0< s \leq n$, $(t_{3}, \cdots, t_{\bins})$ where $t_3 +\cdots + t_{\bins} = n - s$, and $k\leq s$:
	\begin{align*}
	\Pr(\Hist(\vec X) = (k, s-k, t_{3}, \cdots, t_{\bins})|X_{i} = x_{1}) &> \Pr(\Hist(\vec X) = (k, s-k, t_{3}, \cdots, t_{\bins})|X_{i} = x_{2}) \\
	p_{1}^{k-1}p_{2}^{n-k}p_{3}^{t_{3}}\cdots p_{\bins}^{t_{\bins}}\frac{(n-1)!}{(k - 1)!(s-k)!t_{3}!\cdots t_{\bins}!} &> p_{1}^{k}p_{2}^{n-k-1}p_{3}^{t_{3}}\cdots p_{\bins}^{t_{\bins}} \frac{(n-1)!}{(s-k-1)!k!t_{3}!\cdots t_{\bins}!}\\
	\frac{p_{2}}{s-k} &> \frac{p_{1}}{k} \\
	k &> \left(\frac{p_{1}}{p_{1} + p_{2}}\right)s
	\end{align*}
	
	Thus, the set $\setS = \left\{t= (k, s-k, t_{3}, \cdots, t_{\bins})\colon k> \left(\frac{p_{1}}{p_{1} + p_{2}}\right)s\right\}$.
	
	Let $p = \frac{p_{1}}{p_{1} + p_{2}}$. For each $0< s \leq n$ and $(t_{3}, \cdots, t_{\bins})$ which sum to $n - s$ (i.e. $t_3+\cdots+t_{\bins} = n-s$), let $\mt_{s, (t_{3}, \cdots, t_{\bins})}$ be the trail starting from $\ent{\mt_{s, (t_{3}, \cdots, t_{\bins})}} = (s, 0, t_{3}, \cdots, t_{\bins})$ and exiting at $\ext{\mt_{s, (t_{3}, \cdots, t_{\bins})}} = (ps + 1, s - (ps + 1), t_{3}, \cdots, t_{\bins})$. The set $\setS$ then can be partitioned into such trails. Thus,
	\begin{align*}
	\delta &= \Pr(\Hist(\vec X)\in \setS | X_{i} = x_{1}) - \Pr(\Hist(\vec X)\in \setS | X_{i} = x_{2})\\
	&=\sum_{\mt_{s, (t_{3}, \cdots, t_{\bins})}} \Pr(\Hist(\vec X)\in \mt_{s, (t_{3}, \cdots, t_{\bins})} | X_{i} = x_{1}) - \Pr(\Hist(\vec X)\in \mt_{s, (t_{3}, \cdots, t_{\bins})} | X_{i} = x_{2})\\
	&= \sum_{\mt_{s, (t_{3}, \cdots, t_{\bins})}} \Pr(\Hist(\vec X) = \ext{\mt_{s, (t_{3}, \cdots, t_{\bins})}} |X_{i} = x_{1}) \\
	&\tab\tab -  \Pr(\Hist(\vec X) = \ent{\mt_{s, (t_{3}, \cdots, t_{\bins})}} | X_{i} = x_{2})\tag{By Lemma~\ref{lem:trail}}\\
	&= \sum_{0 < s \leq n} \sum_{\substack{(t_{3}, \cdots, t_{\bins})\\t_{3}+\cdots+t_{\bins} = n-s}} \Pr(\Hist(\vec X) =(ps + 1, s - (ps + 1), t_{3}, \cdots, t_{\bins}) |X_{i} = x_{1})\\
	&\tab \tab \tab \tab -\Pr(\Hist(\vec X) = (s, 0, t_{3}, \cdots, t_{\bins}) |X_{i} = x_{2}) 
	\end{align*}
	
	Now let us consider  these two probabilities. Consider the distribution $X_{-i}$, which is $\vec X$ but without the $i$th row. Let the random variables of the individual components of $\Hist(X_{-1})$ be $(a_{1}, \cdots, a_{\bins})$. Since votes are i.i.d., for any $(t_{1}, \cdots, t_{\bins})$,
	\begin{align*}
	&\Pr(\Hist(\vec X) = (t_{1}, \cdots, t_{\bins}) |X_{i} = x_{1}) \\
	&= \Pr(\Hist(X_{-i}) = (t_{1} - 1, t_{2}, t_{3}, \cdots, t_{\bins}))\\
	&=   \Pr((a_{1}, \cdots, a_{\bins}) = (t_{1} -1, t_{2}, t_{3}, \cdots, t_{\bins})) \tag{Recall these $a$'s are components of $\Hist(X_{-i})$}\\
	&= \Pr((a_{1}, \cdots, a_{\bins}) = (t_{1} -1, t_{2}, t_{3}, \cdots, t_{\bins}) | a_{1} + a_{2} = s) \times \Pr( a_{1}+ a_{2} = s) \\
	&= \Pr (\left(a_{1}, a_{2}\right) = \left(t_{1}-1, t_{2}\right) | a_{1} + a_{2} = s)\\
	&\tab \times \Pr( \left(a_{3}, \cdots, a_{\bins}\right)=\left(t_{3}, \cdots, t_{\bins}\right) | a_{1} + a_{2} = s)\times \Pr (a_{1} + a_{2} = s)\tag{By Lemma~\ref{lem:cond-indep}, $(a_{1}, a_{2})$ and $(a_{3}, \cdots, a_{\bins})$ are independent conditioned on $a_{1} + a_{2} = s$}
	\end{align*}
	
	Similar to the $\bins = 2$ case, $\Pr(\Hist(\vec X) = (s, 0, t_{3}, \cdots, t_{\bins}) |X_{i} = x_{2}) = 0$. This is because when one vote is fixed to $x_{2}$, it is impossible to have zero in the second component in the histogram (which is the number of occurences of $x_{2}$). Thus,
	\begin{align*}
	\delta &= \sum_{0 < s \leq n} \sum_{\substack{(t_{3}, \cdots, t_{\bins})\\t_{3}+\cdots+t_{\bins} = n-s}} \Pr((a_{1}, a_{2}) = (ps, s - (ps + 1))| a_{1} + a_{2} = s)\\
	&\tab \tab \tab \tab \times \Pr( (a_{3}, \cdots, a_{\bins})=(t_{3}, \cdots, t_{\bins}) | a_{1} + a_{2} = s)\times \Pr (a_{1} + a_{2} = s)\\
	&= \sum_{0 < s \leq n} \Pr((a_{1}, a_{2}) = (ps, s - (ps + 1))| a_{1} + a_{2} = s) \times \Pr (a_{1} + a_{2} = s)\\
	&\tab \tab \tab \tab \times \sum_{\substack{(t_{3}, \cdots, t_{\bins})\\t_{3}+\cdots+t_{\bins} = n-s}}  \Pr( (a_{3}, \cdots, a_{\bins})=(t_{3}, \cdots, t_{\bins}) | a_{1} + a_{2} = s) \tag{Factor out the common terms $ \Pr((a_{1}, a_{2}) = (ps, s - (ps + 1))| a_{1} + a_{2} = s)$ and $\Pr (a_{1} + a_{2} = s)$} \\
	&=  \sum_{0 < s \leq n} \Pr((a_{1}, a_{2}) = (ps, s - (ps + 1))| a_{1} + a_{2} = s) \times \Pr (a_{1} + a_{2} = s)\tag{For any $s$, the second sum equals one.}
	\end{align*}
	
	Where $\Pr((a_{1}, a_{2}) = (ps, s - (ps + 1))| a_{1} + a_{2} = s)$ is the $\delta$ value for histogram when $\bins = 2$, the vote distribution is $\pi'\in \Pi(\{x_1, x_2\})$, where $\pi'(x_1) = \frac{p_{1}}{p_{1} + p_{2}}$, and number of voters is $s$ (we refer to Lemma~\ref{nonunif-hist-2} of the $\bins = 2$ case for this claim). We denote this $\delta$ by $\delta_{s}$. Moreover,
	
	\begin{align*}
	\Pr (a_{1} + a_{2} = s) &= \Pr(\funvar{Bin}(n, p_1 + p_2) = s)
	\end{align*}
	We denote $p' = p_1 + p_2$. Then,  $\funvar{Bin}(n, p')$ is the binomial distribution with $n$ trials and probability $p' = p_1 + p_2$ (recall that $p_i = \pi(x_i)$). Then
	
	\begin{align*}
	\delta &= \sum_{0 < s \leq n}\delta_{s} \Pr\left(\funvar{Bin}\left(n, p'\right) = s\right)\\
	&= \sum_{\substack{s \geq \left(1 - \sqrt{\frac{3}{4}}\right)np' \\ s\leq \left(1 + \sqrt{\frac{3}{4}}\right)np'}} \Pr\left(\funvar{Bin}\left(n, p'\right) = s\right) \times  \delta_{s} + \sum_{\substack{s < \left(1 - \sqrt{\frac{3}{4}}\right) np' \\ s> \left(1 + \sqrt{\frac{3}{4}}\right) np'}} \Pr\left(\funvar{Bin}\left(n, p'\right) = s\right) \times  \delta_{s}
	\end{align*}

	Lower bound of $\delta$:
	{\small
	\begin{align*}
	\delta &\geq \sum_{\substack{s \geq \left(1 - \sqrt{\frac{3}{4}}\right) np' \\ s\leq \left(1 + \sqrt{\frac{3}{4}}\right) np'}} \Pr\left(\funvar{Bin}\left(n, p'\right) = s\right) \times  \delta_{s}\\
	&\text{Since $\delta_{s}$ decreases with larger $s$ (more votes implies more privacy), $\delta_{\left(1+ \sqrt{\frac{3}{4}}\right)np'}$ is the minimum.}\\
	&\geq \delta_{\left(1+ \sqrt{\frac{3}{4}}\right)np'} \ \times \sum_{\substack{s \geq \left(1 - \sqrt{\frac{3}{4}}\right)np' \\ s\leq \left(1 + \sqrt{\frac{3}{4}}\right) np'}} \Pr\left(\funvar{Bin}\left(n, p'\right) = s\right)  \\
	&= \delta_{\left(1+ \sqrt{\frac{3}{4}}\right)np'}\ \times \left[1 - 
	\Pr\left(\funvar{Bin}(n, p') > \left(1+ \sqrt{\frac{3}{4}} np'\right)\right)- \Pr\left(\funvar{Bin}(n, p') < \left(1 - \sqrt{\frac{3}{4}} np'\right)\right)\right] 
	\end{align*}}
	By Chernoff bound for binomial distribution, for any $0 < \beta < 1$, we have:
	\begin{align*}
	\Pr \left( \funvar{Bin} \left( n, p' \right) > \left(1 + \beta\right) \mu\right)  &\leq e^{-\frac{\beta^{2} \mu}{3}}\\
	\Pr\left(\funvar{Bin}\left(n, p'\right) < \left(1 - \beta\right) \mu\right) &\leq e^{-\frac{\beta^{2} \mu}{2}} 
	\end{align*} 
	Where $\mu = np'$ is the mean of $\funvar{Bin}\left(n, np'\right)$. Now let $\beta = \sqrt{\frac{3}{4}}$, which is between 0 and 1. Then, 
	\begin{align*}
	1 &\geq 	\left[1 - 
	\Pr\left(\funvar{Bin}\left(n, p'\right) > \left(1+ \sqrt{\frac{3}{4}}\right) np'\right)- \Pr\left(\funvar{Bin}\left(n, p'\right) < \left(1 - \sqrt{\frac{3}{4}} \right) np'\right)\right] \\
	&\geq 1 - e^{-\frac{3}{4}\frac{\mu}{3}} - e^{-\frac{3}{4}\frac{\mu}{2}} \\
	&= 1 - e^{-\frac{np'}{2}} - e^{-\frac{3np'}{2}} \\
	&\tag{For large enough $n$,  $np' \geq 1$, so $e^{-\frac{np'}{2}}\leq e^{-1/2}$ and $e^{-\frac{3np'}{2}}\leq e^{-3/2}$ }\\
	&\geq 1 - e^{-1/2} - e^{-3/2} \geq \frac{1}{10}
	\end{align*}
	Which means $\Bigg[1 - 
	\Pr\Big(\funvar{Bin}(n, p') > \left(1+ \sqrt{\frac{3}{4}}\right) np'\Big)- \Pr\Big(\funvar{Bin}(n, p') < \left(1 - \sqrt{\frac{3}{4}}\right) np'\Big)\Bigg] = \Theta(1)$.\\
	By Stirling formula, we have 
	\begin{align*}
	\delta_{\left(1+ \sqrt{\frac{3}{4}}\right)np'} &= \Theta\left(\frac{1}{\sqrt{\left(1+ \sqrt{\frac{3}{4}}\right)np'}}\right)\\
	&= \Theta\left({\sqrt{\frac{1}{np'}}}\right) \tag{Recall we assumed the maximizing $x, x'$ are $x_{1}, x_{2}$, up to renaming the $x_{i}$'s, and that $p' = p_{1} + p_{2}$}\\
	& = \Theta\left({\sqrt{\frac{1}{np_{\min}}}}\right) \tag{In general, $p_{\min} = \min_{i \not= j\in[\bins]}(p_{i} + p_{j})$.}
	\end{align*}
	Which gives us the lower bound $\delta \geq \Theta\left({\sqrt{\frac{1}{np_{\min}}}}\right)$.
	
	Upper bound of $\delta$:
	{\small
	\begin{align*}
	\delta &= \sum_{\substack{s \geq \left(1 - \sqrt{\frac{3}{4}}\right) np' \\ s\leq \left(1 + \sqrt{\frac{3}{4}}\right) np'}} \Pr\left(\funvar{Bin}\left(n, p'\right) = s\right) \times  \delta_{s}\\
	&\tab + \sum_{\substack{s < \big(1 - \sqrt{\frac{3}{4}}\big) np' \\ s> \big(1 + \sqrt{\frac{3}{4}}\big) np'}} \Pr\left(\funvar{Bin}\left(n, p'\right) = s\right) \times  \delta_{s}\\
	&\text{Since $\delta_{s} \leq 1$ for all $s$ and $\sum_{\substack{s \geq \big(1 - \sqrt{\frac{3}{4}}\big) np' \\ s\leq \big(1 + \sqrt{\frac{3}{4}}\big) np'}}\Pr\left(\funvar{Bin}\left(n, p'\right) = s\right)  \leq 1$}\\
	&\leq \max_{\big(1 - \sqrt{\frac{3}{4}}\big) np'\ \leq s\ \leq \big(1+ \sqrt{\frac{3}{4}}\big)np'}(\delta_{s})\ \ + \sum_{\substack{s < \big(1 - \sqrt{\frac{3}{4}}\big) np' \\ s> \big(1 + \sqrt{\frac{3}{4}}\big) np'}}\Pr\left(\funvar{Bin}\left(n, p'\right) = s\right)  \\
	&= \delta_{\big(1 - \sqrt{\frac{3}{4}}\big) np'}\ \ + \  \Pr\left(\funvar{Bin}\left(n, p'\right) < \left(1 - \sqrt{\frac{3}{4}}\right) np'\right) +  \Pr\left(\funvar{Bin}\left(n, p'\right) > \left(1+ \sqrt{\frac{3}{4}}\right) np'\right)\\
	&\leq  \delta_{\big(1 - \sqrt{-\frac{3}{4}}\big) np'}\ + \ e^{-\frac{np'}{2}} + e^{\frac{3np'}{2}} \tag{By Chernoff bound for binomial}\\
	&\leq \delta_{\big(1 - \sqrt{-\frac{3}{4}}\big) np'}\ + \ 2\sqrt{\frac{1}{np'}} \tag{Since $np' \geq 0$, both $e^{-\frac{np'}{2}}, e^{\frac{3np'}{2}} \leq \sqrt{\frac{1}{np'}}$}\\
	&\text{By Stirling's formula, }	\delta_{\big(1 - \sqrt{-\frac{3}{4}}\big) np'}\ = \const\left(\frac{1}{\sqrt{\big(1 - \sqrt{-\frac{3}{4}}\big) np'}}\right)\\
	&= \Theta\left(\sqrt{\frac{1}{np'}}\right)
	\end{align*}}
	As is with the lower bound, in general (without assuming $(x, x') = (x_{1}, x_{2})$), we have $p' = p_{\min} = \min_{i \not=j\in [\bins]} (p_{i} + p_{j})$.
	Since both lower and upper bounds of $\delta$ are $\Theta\left(\sqrt{\frac{1}{np_{\min}}}\right)$, $\delta = \Theta\left(\sqrt{\frac{1}{np_{\min}}}\right)$.
\end{proof}
\begin{lemma}[Conditional independence]\label{lem:cond-indep}
	Let $\universe = \{x_{1}, \cdots, x_{\bins}\}$ and $\pi \in \Delta(\universe)$. 
	Let $\#x_{i}$ denote the r.v. of the number of occurrences of the vote $x_{i}$ in $\pi$. Then, for all $0\leq s\leq n$,
	the random variables $(\#x_{1}, \#x_{2})$ and $(\#x_{3},\cdots, \#x_{\bins})$ are independent conditioned on $\#x_{1} + \#x_{2} = s$. In other words, for any $(t_{1}, \cdots, t_{\bins})$ such that $\sum_{i} t_{i} = n$, we have
	{\small
	\begin{align*}
	&\Pr((\#x_{1},\cdots, \#x_{\bins}) = (t_{1}, \cdots, t_{\bins}) \ |\ \#x_{1} + \#x_{2} = s)\\
	=& \Pr((\#x_{1}, \#x_{2}) = (t_{1}, t_{2})\ |\ \#x_{1} + \#x_{2} = s) \times \Pr((\#x_{3},\cdots, \#x_{\bins}) = (t_{3}, \cdots, t_{\bins})\ |\ \#x_{1} + \#x_{2} = s)
	\end{align*}}
\end{lemma}

\begin{proof}[Proof for Lemma~\ref{lem:cond-indep}]
	We equivalently show that
	\begin{equation}\label{eqn:cond-indep}
	\begin{split}
	&\Pr((\#x_{3},\cdots, \#x_{\bins}) = (t_{3}, \cdots, t_{\bins})\ |\ \#x_{1} + \#x_{2} = s) \\
	&= \Pr((\#x_{3},\cdots, \#x_{\bins}) = (t_{3}, \cdots, t_{\bins}) \ |\ \#x_{1} + \#x_{2} = s\land (\#x_{1}, \#x_{2} )= (t_{1}, t_{2}))
	\end{split}
	\end{equation}
	Now, conditioned on there being exactly $s$ people who voted $x_1$ or $x_2$, let $D_{1}> D_{2}>  \cdots >D_{s}$ denote the random variables of the indices of the votes in the profile which voted for $x_{1}$ or $x_{2}$, in ascending order. By total probability, the left hand side of Equation~\ref{eqn:cond-indep} is:
	{\small
	\begin{align*}
	\Pr&((\#x_{3},\cdots, \#x_{\bins}) = (t_{3}, \cdots, t_{\bins})\ |\ \#x_{1} + \#x_{2} = s)\\
	=& \sum_{d_{1}> d_{2} > \cdots > d_{s}} \Pr((\#x_{3},\cdots, \#x_{\bins}) = (t_{3}, \cdots, t_{\bins})\ |\ \#x_{1} + \#x_{2} = s \land (D_{1}, \cdots, D_{s})  = (d_{1}, \cdots, d_{s}))\\
	&\tab \tab \tab \times \Pr((D_{1}, \cdots, D_{s})  = (d_{s}, \cdots, d_{s})\ |\  \#x_{1} + \#x_{2} = s)
	\end{align*}}
	We already assume there are exactly $s$ votes for $x_{1}$ or $x_{2}$, so
	\begin{align*}
	&\Pr((\#x_{3},\cdots, \#x_{\bins}) = (t_{3}, \cdots, t_{\bins})\ |\ \#x_{1} + \#x_{2} = s)\\
	&= \sum_{d_{1}> d_{2} > \cdots > d_{s}} \Pr((\#x_{3},\cdots, \#x_{\bins}) = (t_{3}, \cdots, t_{\bins})\ |\  (D_{1}, \cdots, D_{s})  = (d_{1}, \cdots, d_{s}))\\
	&\tab \tab \tab \times \Pr((D_{1}, \cdots, D_{s})  = (d_{1}, \cdots, d_{s}))\\
	\end{align*}
	The right hand side of Equation~\ref{eqn:cond-indep} is:
	{\footnotesize\begin{align*}
	&\Pr((\#x_{3},\cdots, \#x_{\bins}) = (t_{3}, \cdots, t_{\bins}) \ |\ \#x_{1} + \#x_{2} = s\land (\#x_{1}, \#x_{2} )= (t_{1}, t_{2}))\\
	&= \Pr((\#x_{3},\cdots, \#x_{\bins}) = (t_{3}, \cdots, t_{\bins}) \ |\ \#x_{1} + \#x_{2} = s\land (\#x_{1}, \#x_{2} )= (t_{1}, t_{2}))\\
	&= \Pr((\#x_{3},\cdots, \#x_{\bins}) = (t_{3}, \cdots, t_{\bins}) \ |\ (\#x_{1}, \#x_{2} )= (t_{1}, t_{2})) \tag{Since we assume $t_{1}+ t_{2} = s$}\\
	&= \sum_{d_{1}> d_{2} > \cdots > d_{s}} \Pr((\#x_{3},\cdots, \#x_{\bins}) = (t_{3}, \cdots, t_{\bins})\ |\ (\#x_{1}, \#x_{2} )= (t_{1}, t_{2}) \land (D_{1}, \cdots, D_{s})  = (d_{1}, \cdots, d_{s})) \\
	&\tab \tab \tab \times \Pr((D_{1}, \cdots, D_{s})  = (d_{1}, \cdots, d_{s})\ |\  (\#x_{1}, \#x_{2} )= (t_{1}, t_{2}))\tag{By total probability,}\\
	\end{align*}}
 Since each vote is independent,  $(\#x_{3},\cdots, \#x_{\bins})$ is independent of $ (\#x_{1}, \#x_{2})$. Moreover, the vote indices $(D_{1}, \cdots, D_{s})$ are independent of $(\#x_{1}, \#x_{2})$. As votes are i.i.d.,  $(\#x_{1}, \#x_{2})$ does not depend on the value of $(d_{1}, \cdots, d_{s})$. Thus,
	\begin{align*}
	&\Pr((\#x_{3},\cdots, \#x_{\bins}) = (t_{3}, \cdots, t_{\bins}) \ |\ \#x_{1} + \#x_{2} = s\land (\#x_{1}, \#x_{2})= (t_{1}, t_{2}))\\
	&= \sum_{d_{1}> d_{2} > \cdots > d_{s}} \Pr((\#x_{3},\cdots, \#x_{\bins}) = (t_{3}, \cdots, t_{\bins})\ | (D_{1}, \cdots, D_{s})  = (d_{1}, \cdots, d_{s}))\\
	&\tab \tab \tab \times \Pr((D_{1}, \cdots, D_{s})  = (d_{1}, \cdots, d_{s}))
	\end{align*}
	This concludes that the left hand side and right hand side probabilities of Equation~\ref{eqn:cond-indep} are equal. The random variables $(\#x_{1}, \#x_{2})$ are independent conditioned on $(\#x_{1}, \#x_{2})$.
\end{proof}

\section{Concrete Estimate of the Privacy Parameters}\label{section:empirical-results}
In this section we present an example of computing concrete estimates of $(0, \delta, \Delta)$-exact DDP values for several GSRs. For this example, we let $\Delta = \{\pi\}$ such that $\pi\in \Pi(\{x_1, x_2, x_3\})$ and $\pi(x_i) = \pi(x_j) = 1/3$ (i.e., votes are i.i.d. and uniform).  

We generated these concrete estimates by doing an exhaustive search of all possible profiles for 3 candidates and $n \leq 50$ votes, and computing the $\delta$ values exactly for each $n$. Since we know that $\delta = \Theta(1/\sqrt{n})$, we fit these values to $\delta(n) = \frac{1}{\sqrt{an + b}}$ via linear regression. 
We rank voting rules from most to least private, by the value $a$ for outputting the winner. The larger the $a$, the smaller the $\delta$ value and more private.
The resulting ranking from most to least private is: 

\hfil 2-approval $\rhd$ Plurality $\rhd$  Maximin  $\rhd$  STV  $\rhd$  Borda \hfill

We show in Table~\ref{GSR-concrete-delta-big} the fitted $\delta$ curves with the mean square error in the fit.

\begin{table}
	\centering
	\begin{tabular}{ |c|c|c|} 
		\hline
		Rule & Winner & Mean Square Error $(n \in [50])$  \\
		\hline
				Borda & $\delta(n) = \dfrac{1}{\sqrt{1.347n + 0.5263}}$  & 0.0566844201243\\ 
		\hline
				STV &  $\delta(n) = \dfrac{1}{\sqrt{1.495n + 0.02669}}$ & 0.0542992943035 \\
		\hline
				Maximin & $\delta(n) = \dfrac{1}{\sqrt{1.553n + 4.433}}$ & 0.0377631805983 \\
		\hline
			Plurality &  $\delta(n) = \dfrac{1}{\sqrt{1.717n - 0.09225}}$ & 0.0477175838906 \\ 
		\hline
			2-approval & $\delta(n) = \dfrac{1}{\sqrt{1.786n + 0.3536}}$   & 0.0454223047191 \\ 
		\hline
	\end{tabular}
	\vspace{2mm}
	\caption{$\delta$ values in $(0, \delta, \Delta)$-eDDP for some commonly-used voting rules under the i.i.d.~uniform distribution. $m=3$ and $n=10$ to $50$.}
	\label{GSR-concrete-delta-big}
\end{table}

\end{document}